\def\AdvCite{True} 
\renewcommand*{\multicitedelim}{\addcomma\space}
	\newlength{\temp@x}%
	\newlength{\temp@y}%
	\newlength{\temp@w}%
	\newlength{\temp@h}%
	\def\my@coords#1#2#3#4{%
		\setlength{\temp@x}{#1}%
		\setlength{\temp@y}{#2}%
		\setlength{\temp@w}{#3}%
		\setlength{\temp@h}{#4}%
		\adjustlengths{}%
		\my@pdfliteral{\strip@pt\temp@x\space\strip@pt\temp@y\space\strip@pt\temp@w\space\strip@pt\temp@h\space re}}%
	\def\my@pdfliteral#1{\pdfliteral page{#1}}
	\def\adjustlengths{}%
	\def\my@pdfliteral #1{}
	\def\adjustlengths{\setlength{\temp@h}{-\temp@h}\addtolength{\temp@y}{1in}\addtolength{\temp@x}{-1in}}%
	\def\Hy@colorlink#1{%
		\begingroup
		\ifHy@ocgcolorlinks
		\def\Hy@ocgcolor{#1}%
		\my@pdfliteral{q}%
		\my@pdfliteral{7 Tr}
		\else
		\HyColor@UseColor#1%
		\fi
	}%
	\def\Hy@endcolorlink{%
		\ifHy@ocgcolorlinks%
		\my@pdfliteral{/OC/OCPrint BDC}%
		\my@coords{0pt}{0pt}{\pdfpagewidth}{\pdfpageheight}%
		\my@pdfliteral{F}
		%
		\my@pdfliteral{EMC/OC/OCView BDC}%
		\begingroup%
		\expandafter\HyColor@UseColor\Hy@ocgcolor%
		\my@coords{0pt}{0pt}{\pdfpagewidth}{\pdfpageheight}%
		\my@pdfliteral{F}
		\endgroup%
		\my@pdfliteral{EMC}%
		\my@pdfliteral{0 Tr}
		\my@pdfliteral{Q}%
		\fi
		\endgroup
	}%
\g@addto@macro\bfseries{\boldmath}
\g@addto@macro\mdseries{\unboldmath}
\g@addto@macro\normalfont{\unboldmath}
\g@addto@macro\rmfamily{\unboldmath}
\g@addto@macro\upshape{\unboldmath}
\colorlet{DarkRed}{red!50!black} 
\colorlet{DarkGreen}{green!50!black}
\colorlet{DarkBlue}{blue!50!black}
\declaretheorem[numberwithin=section]{theorem}
\declaretheorem[numberlike=theorem]{lemma}
\declaretheorem[numberlike=theorem]{proposition}
\declaretheorem[numberlike=theorem]{corollary}
\declaretheorem[numberlike=theorem]{claim}
\crefname{algorithm}{Algorithm}{Algorithms}
\Crefname{algorithm}{Algorithm}{Algorithms}
\theoremstyle{definition}
\newcommand{\ot}{\tilde{O}}
\newcommand{\ignore}[1]{}
\newcommand{\cC}{\mathcal{C}\xspace}
\newcommand{\eps}{\epsilon}
\newcommand{\poly}{\operatorname{poly}} 
\newcommand{\polylog}{\operatorname{polylog}}
\newcommand{\vol}{\operatorname{vol}}
\newcommand{\textlocal}{\operatorname{local}} 
\newcommand{\textpair}{\operatorname{pair}} 
\newcommand{\textout}{\operatorname{out}} 
\newcommand{\logk}{\lfloor \log_2k \rfloor}   
\newcommand{\logeta}{\lfloor \log_2\eta_i\rfloor}
\newcommand{\gap}{\gamma}%
\newcommand{\textbig}{\operatorname{big}} 
\newcommand{\textsmall}{\operatorname{small}} 
\newcommand{\out}{\operatorname{out}}  
\newcommand{\inn}{\operatorname{in}}  
\newcommand{\textin}{\operatorname{in}}
\newcommand{\Exp}{\operatorname{E}}  
\newcommand{\davg}{\overline{d}}          
\global\long\def\localEC{\text{LocalEC}}%
\global\long\def\DFS{\mathtt{DFS}}%
\def\sebastian#1{\marginpar{$\leftarrow$\fbox{S}}\footnote{$\Rightarrow$~{\sf\textcolor{blue}{#1 --Sebastian}}}}
\def\thatchaphol#1{\marginpar{$\leftarrow$\fbox{T}}\footnote{$\Rightarrow$~{\sf\textcolor{purple}{#1 --Thatchaphol}}}}
\def\danupon#1{\marginpar{$\leftarrow$\fbox{D}}\footnote{$\Rightarrow$~{\sf\textcolor{orange}{#1 --Danupon}}}}
\def\sorrachai#1{\marginpar{$\leftarrow$\fbox{S}}\footnote{$\Rightarrow$~{\sf\textcolor{red}{#1 --Sorrachai}}}}
\def\note#1{#1}
\def\sebastian#1{}
\def\thatchaphol#1{}
\def\danupon#1{}
\def\shen#1{}
\def\sorrachai#1{}
\def\note#1{} 
\title{Computing and Testing Small Connectivity in Near-Linear Time and Queries via Fast Local Cut Algorithms\thanks{This paper resulted
    from a merge of two papers submitted to arXiv \cite{ForsterY-arXiv19,NSY-arXiv19} and will be presented at the 31st Annual ACM-SIAM Symposium on Discrete Algorithms (SODA 2020).}}
\author[1]{Sebastian Forster\thanks{This author previously published under the name Sebastian Krinninger. Work partially done while at Max Planck Institute for Informatics and while at University of Vienna.}}
\author[2]{Danupon Nanongkai}
\author[3]{Thatchaphol Saranurak}
\author[4]{Liu Yang}%
\author[5]{Sorrachai Yingchareonthawornchai\thanks{Work partially
    done while at Michigan State University, USA.}}
\affil[1]{University of Salzburg, Austria}
\affil[2]{KTH Royal Institute of Technology, Sweden}
\affil[3]{Toyota Technological Institute at Chicago, USA}
\affil[4]{Independent Researcher}
\affil[5]{Aalto University, Finland}
\date{} 
\begin{document}

		\maketitle
		
\begin{abstract}
	
Consider the following ``local'' cut-detection problem in a directed graph:
We are given a seed vertex $ x $ and need to remove at most $k$ edges so that at most $\nu$ edges can be reached from $x$ (a ``local'' cut)  or output $\bot$ to indicate that no such cut exists. If we are given query access to the input graph, then this problem can in principle be solved without reading the whole graph and with query complexity depending on $ k $ and $ \nu $.  In this paper we consider a slack variant of this problem where, when such a cut exists, we can output a cut with up to $O(k\nu)$ edges reachable from $x$.

We present a simple randomized algorithm spending $ O (k^2 \nu) $ time
and $ O (k \nu) $ queries for the above variant, improving in
particular a previous time bound of $ O (k^{O(k)} \nu) $ by
Chechik et al.~\citem[SODA '17]{ChechikHILP17}. We also extend our algorithm to handle an
approximate variant. 
We demonstrate that these local algorithms are versatile primitives for
designing substantially improved algorithms for classic graph
problems by providing the following three applications. (Throughout, $\tilde O(T)$ hides $\polylog(T)$.) 
\begin{enumerate} 
\item A randomized algorithm for the classic $k$-vertex connectivity
  problem that takes {\em near-linear} time when $k=O(\polylog(n))$,
  namely  $\tilde O(m+nk^3)$ time in undirected graphs. 
Prior to our work, the state of the art for this range of $k$ were
linear-time algorithms for $k \leq 3$ [\citelink[Tarjan FOCS '71]{Tarjan72}; \citelink[Hopcroft,
Tarjan SICOMP '73]{HopcroftT73}] and a recent algorithm with $\tilde
O(m+n^{4/3}k^{7/3})$ time \citem[Nanongkai~et~al.\ STOC '19]{NanongkaiSY19}. The story is
the same for directed graphs where our $\tilde O(mk^2)$-time
algorithm is near-linear when $k = O(\polylog(n))$. Our techniques
also yield an improved approximation scheme. 

\item Property testing algorithms for $k$-edge and -vertex
  connectivity with query complexities that are near-linear in $k$, \textit{exponentially} improving the state-of-the-art. 
This resolves two open problems, one by Goldreich and Ron~\citem[STOC '97]{GoldreichR02} and one by Orenstein and Ron~\citem[Theor.\ Comput.\ Sci.\ '11]{OrensteinR11}. 
\item A faster algorithm for computing the maximal $k$-edge connected
  subgraphs, improving prior work of Chechik et al.\ \citem[SODA '17]{ChechikHILP17}.
\end{enumerate}
\end{abstract}

		\tableofcontents    
		 
	\newpage 
	
	\section{Introduction}\label{sec:intro}

We study a natural ``local'' version of the problem of finding an edge cut of bounded size with applications to higher connectivity: given a seed vertex~$ x $, detect a ``small'' component containing~$ x $ with less than $ k $ outgoing edges (if it exists).
More precisely, given a vertex $x$ and two integers $\nu$ and $k$, our object of interest is a set $L\subseteq V$ such that 
\begin{align}\label{eq:intro:LocalEC}
x\in L,\ |E(L, V - L)|< k,\ \mbox{and}\ \vol^{\out}(L)\leq \nu\,,
\end{align} 
where $E(L, V - L)$ is the set of edges from vertices in $L$ to vertices outside of $L$, and $\vol^{\out}(L)$ is the outgoing volume of $ L $, the total number of outgoing edges for vertices in~$L$.
In particular, we will allow some slack in finding such a set.
For the problem we study, the
  input consists of  $x$, $k$, $\nu$, and a pointer to an adjacency-list representation of $G$, and an algorithm either
\begin{itemize}[noitemsep]
	\item returns a vertex set $S$ such that $|E(S, V - S)|< k$, or
	\item outputs that no $L\subseteq V$ satisfying \Cref{eq:intro:LocalEC} exists.
\end{itemize}

This problem, which we call LocalEC, has implicitly been studied before in the context of property testing~\cite{GoldreichR02,OrensteinR11} and for developing a centralized algorithm that computes the maximal $k$-edge connected subgraphs of a directed graph~\cite{ChechikHILP17}.
Recently, a variant for vertex cuts has been studied for obtaining faster vertex connectivity algorithms~\cite{NanongkaiSY19}.
A somewhat similar problem of locally detecting a small component with low-conductance has recently been studied extensively~\cite{KawarabayashiT19,HenzingerRW17,SaranurakW19}, in particular to obtain centralized algorithms for deterministically computing the edge connectivity of an undirected graph. 
In this paper, we present a simple randomized (Monte Carlo) algorithm for the problem above that takes $O(\nu k^2)$ time and makes $O(\nu k)$ edge queries. 
\begin{theorem}[Main Result]
	\label{thm:intro:LocalEC_approx}There is a randomized (Monte Carlo) algorithm
	that takes as input a vertex $x\in V$ of an $n$-vertex $m$-edge
	graph $G=(V,E)$ represented
	as adjacency lists, and integers $k$ and~$\nu$ such that $k <\nu<m/(130k)$, which accesses (i.e., makes queries for) $O(\nu k)$ edges and runs in $O(\nu k^2)$ time to output either (i) the symbol ``$\bot$'' to indicate that no such $L$ exists, or (ii) a set $S\subseteq V$ such that $|E(S, V - S)|< k$ in the following manner:
	\begin{itemize}[noitemsep,nolistsep]
		\item If $L\subseteq V$ satisfying \eqref{eq:intro:LocalEC} exists, the algorithm outputs a set $S$ as specified above with probability at least $3/4$. With probability at most $1/4$, it (incorrectly) outputs $\bot$.
		\item Otherwise, it outputs either a set $S$ as specified above or $\bot$. 
	\end{itemize}
\end{theorem}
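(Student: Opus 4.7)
The plan is to reformulate \textsc{LocalEC} as a local maximum-flow computation and attack it with a budget-truncated Ford--Fulkerson iteration. If a set $L\ni x$ satisfying \Cref{eq:intro:LocalEC} exists, then $L$ witnesses an $x$-side edge cut of capacity less than $k$, so at most $k-1$ edge-disjoint paths leaving $x$ can cross its boundary. Consequently, the $k$-th attempted augmenting path must fail while staying within a region of out-volume $O(\nu)$, and the frontier of that failed attempt is the cut we return. The relaxation granted in the theorem statement---we only need to return \emph{some} set $S$ with $|E(S,V-S)|<k$, not necessarily $L$ itself---is exactly what makes this max-flow view usable, since the residual-reachable set at a stuck DFS always has few out-edges regardless of whether it coincides with $L$.

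Concretely, the algorithm maintains a unit-capacity flow $f$, initialized to zero, and performs $k$ iterations. In iteration $i$, it runs a randomized DFS from $x$ in the residual graph of $f$, inspecting each visited vertex's out-edges in a uniformly random order. The DFS is aborted either when (a) $\Theta(\nu)$ edge-traversals have been spent, in which case a new augmenting path has been discovered and is flipped in the residual graph, or (b) the search exhausts the currently reachable residual component, in which case the visited set $S$ is returned. If all $k$ iterations succeed in augmenting, the algorithm outputs $\bot$.

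Correctness when $\bot$ is returned follows from max-flow/min-cut duality: the $k$ edge-disjoint augmenting paths certify that every $x$-side cut has capacity $\ge k$, ruling out any $L$ as in \Cref{eq:intro:LocalEC}. When $S$ is returned in iteration $i$, the classical Ford--Fulkerson argument gives $|E(S,V-S)|\le i-1<k$: each of the $i-1$ previously augmented paths contributes one saturated edge out of $S$, and no unsaturated out-edge of $S$ can exist or the DFS would have extended through it. Each iteration makes $O(\nu)$ queries and $O(\nu k)$ work to manage flow values and to augment along a path of length $O(\nu)$ in a residual structure of size $O(\nu)$, giving $O(\nu k)$ queries and $O(\nu k^2)$ time overall. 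The hypothesis $\nu<m/(130k)$ ensures these bounds are meaningful compared to a full read of $G$.

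The main obstacle is the probabilistic analysis of the truncated randomized DFS. When $L$ exists, successive augmentations introduce reversed residual edges that could steer the DFS outside $L$; one must argue that a uniformly random edge ordering nevertheless confines each DFS, with constant probability, to a region of out-volume $O(\nu)$, so that the search genuinely gets stuck rather than merely timing out. The argument has to combine the bound $|E(L,V-L)|<k$, which makes ``boundary'' edges a small fraction of the $\vol^{\out}(L)\le\nu$ edges encountered, with an inductive control of the residual graph after previous augmentations so that only a bounded number of reversed boundary edges can exist at any time. Once a single run succeeds with constant probability, $O(1)$ independent repetitions amplify the success probability to the required $3/4$.
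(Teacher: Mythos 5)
The high-level architecture of your proposal---iterate $k$ times, each iteration a budget-truncated DFS from $x$ in the residual graph, reverse the discovered path or return the stuck component---matches the paper's Algorithm~\ref{alg:local_ec_new}, and your soundness argument (a stuck DFS yields a cut of size less than $k$ via Ford--Fulkerson accounting) is correct. But your proposed randomization and, more importantly, your plan for the completeness analysis diverge from the paper, and the divergence is exactly where the gap lies.

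The paper does \emph{not} randomize the DFS edge order; its DFS is deterministic. Instead it randomizes the \emph{stopping point}: each freshly marked edge $(a,b)$ independently becomes the terminus $y:=a$ of the current path with probability $p=(\gamma+1)/(8\nu)$, i.e.\ $1/(8\nu)$ for the exact version. This design is what makes the analysis close. Let $Z$ count the iterations whose endpoint $y$ lands inside $L$, where by \Cref{lem:reverse_cutsize_new} the reversal fails to shrink $|E(L,V-L)|$. Since $y\in L$ forces the stopping edge to lie in $E(L,V)$, a set of size $\vol^{\out}(L)\le\nu$, and each such edge triggers a stop with probability at most $p$ once marked, linearity of expectation gives $\Exp[Z]\le\nu p$, and Markov's inequality finishes. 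Paired with a negative-binomial/Markov bound showing the global edge budget of $\Theta(\nu k)$ is rarely exhausted before all $k$ stops occur, one concludes that with constant probability every reversal has $y\notin L$, $|E(L,V-L)|$ is driven to $0$, and the final DFS gets stuck inside a small-volume set.

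Your randomization---uniform random edge order with a \emph{deterministic} per-iteration budget of $\Theta(\nu)$ traversals, and $y$ the last vertex on the DFS stack at cutoff---does not obviously yield this bound: $y$ then depends on the entire DFS trajectory, including backtracking, rather than being a uniform sample over visited edges. More fundamentally, the target you set yourself (``confine each DFS, with constant probability, to a region of out-volume $O(\nu)$'') is neither required nor, in general, achievable: as long as $L$ has a positive boundary in the residual graph the DFS \emph{will} escape $L$, and that is harmless. Note also that ``the path crosses the boundary of $L$'' is not the same as ``the reversal shrinks the cut''---a path $P_{xy}$ with $y\in L$ may cross $E(L,V-L)$ back and forth yet leave $|E(L,V-L)|$ unchanged; only the endpoint's location matters. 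What the algorithm controls is where the endpoint falls, not where the DFS roams. You flag the completeness analysis as ``the main obstacle'' without resolving it, and the inductive residual-control you gesture at is not developed, so as written the proposal has a genuine gap here. (Interestingly, Appendix~\ref{sec:localEC_alternate} of the paper does present a variant with a fixed per-iteration budget, but it samples the endpoint \emph{uniformly} from the visited edges; that uniform sample, not a random DFS order, is again what makes the expectation bound go through.)
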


Note that the error probability $1/4$ above can be made arbitrarily
small by repeating the algorithm. Our result in particular improves
the previous deterministic time bound of $ O (k^{O(k)} \nu) $ 
by Chechik et al.~\cite{ChechikHILP17}. 
It furthermore needs to be compared to a local variant~\cite{GoldreichR02} of Karger's randomized cut algorithm~\cite{Karger00} for \emph{undirected} graphs which finds a set $ L $ with $|E(S, V - S)|< k$ of size $ |L| \leq \sigma $ (where $\sigma$ is a given parameter) and volume at most $ \nu $ in time $ O(\sigma^{2 - 2/k} \nu) $ (if such a set exists).
Our algorithm is in fact very simple: it repeatedly finds a path starting at $x$ and ending at some random vertex. Our analysis is also very simple. 

Besides local computation algorithms being an object of study of
independent interest~\cite{RubinfeldTVX11,LeviM17}, the significance
of our contribution is illustrated  by
the fact that it almost readily implies faster algorithms for several problems in higher connectivity, as described below. 

\paragraph{(Global) Vertex Connectivity.} 
The main application of our result is efficient algorithms for the vertex connectivity problem. 
There has been a long line
of research on  this problem since at least five decades ago (e.g.~\cite{Kleitman1969methods,Podderyugin1973algorithm,EvenT75,Even75,Galil80,EsfahanianH84,Matula87,BeckerDDHKKMNRW82,LinialLW88,CheriyanT91,NagamochiI92,CheriyanR94,Henzinger97,HenzingerRG00,Gabow06,Censor-HillelGK14}).  (See Nanongkai~et~al.~\cite{NanongkaiSY19} for a more comprehensive literature survey.)
For the undirected case,  Aho,
Hopcroft and Ullman \cite[Problem 5.30]{AhoHU74} asked in their 1974 book for an
$O(m)$-time algorithm for
computing $\kappa_{G}$,  the vertex connectivity of graph $G$. Prior to our result, $O(m)$-time algorithms were known only when $\kappa_G \leq 3$, due to the classic results of Tarjan~\cite{Tarjan72} and Hopcroft and Tarjan~\cite{HopcroftT73}. In this paper, we present an algorithm that takes near-linear time whenever  $\kappa_G=O(\polylog(n))$. 

\begin{theorem}
	\label{thm:intro:VC_undir}There is a randomized (Monte Carlo) algorithm
	that takes as input an undirected graph and, with high probability, in time $\tilde{O}(m+n\kappa^{3})$
	outputs a vertex cut $S$ of size $\kappa$.\footnote{As usual, with high probability (w.h.p.) means with probability at least $1-1/n^c$ for an arbitrary constant $c\geq 1$.}
\end{theorem}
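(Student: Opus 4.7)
The plan is to combine a sparsification preprocessing with the local cut primitive of \Cref{thm:intro:LocalEC_approx} in the form of a vertex-cut analogue. First, I would invoke a sparse $k$-vertex-connectivity certificate (for instance the classical Nagamochi--Ibaraki scan-first-search routine) in $\tilde O(m)$ time to replace $G$ by a spanning subgraph $H$ with $m' = O(n\kappa)$ edges that preserves every vertex cut of size strictly below $\kappa$. From this point the remaining work has to fit in $\tilde O(n\kappa^3)=\tilde O(m'\kappa^{2})$, so the $\tilde O(m)$ summand in the theorem is exactly the price of sparsification.

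To exploit \Cref{thm:intro:LocalEC_approx} for vertex cuts I would lift LocalEC to a LocalVC primitive via the standard vertex-splitting reduction: replace each $v$ by an internal arc $v_{\mathrm{in}}\!\to\! v_{\mathrm{out}}$, and reroute every undirected edge as two arcs between the appropriate endpoints. Running LocalEC with parameters $(\kappa,\nu)$ from a seed $x$ on the split graph either certifies that no set $L\ni x$ has vertex boundary $<\kappa$ and outgoing volume $\le\nu$, or returns a set whose boundary decodes to a vertex cut of size $<\kappa$. This inherits the $O(\nu\kappa^{2})$-time bound of \Cref{thm:intro:LocalEC_approx}, and I would amplify the one-sided error of each call to $1/\poly(n)$ by $O(\log n)$ independent repetitions.

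The main loop then doubles over $\kappa = 1,2,4,\dots$ and, for each $\kappa$, over $\nu\in\{1,2,4,\dots,m'\}$; for every pair $(\kappa,\nu)$ it samples $\Theta((m'/\nu)\log n)$ edges uniformly at random from $H$ and launches a LocalVC call from a random endpoint of each sample. To see that this finds the minimum cut, fix an actual minimum vertex cut $(A,S,B)$ with $|A|\le|B|$ and $|S|=\kappa$, and let $\nu^{*}=\vol^{\out}(A)$ in the split graph. For the doubling iterate with $\nu\in[\nu^{*},2\nu^{*}]$, a uniformly random edge endpoint of $H$ lies in $A$ with probability $\Omega(\nu^{*}/m')$, so at least one of the $\Theta((m'/\nu)\log n)$ seeds lands in $A$ w.h.p.\ and triggers a successful LocalVC call. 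The total work is
\[
\sum_{\nu}\tilde O(m'/\nu)\cdot \tilde O(\nu\kappa^{2})=\tilde O(m'\kappa^{2})=\tilde O(n\kappa^{3}),
\]
matching the theorem.

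The main obstacle I anticipate is calibrating the seed distribution so that both the ``small-but-dense $A$'' and ``large-but-sparse $A$'' regimes are covered by a single sampling rule: this is precisely why one must sample by edge endpoints (degree-biased) rather than by vertices, and why parameterizing LocalEC by the volume $\nu$ rather than the size $|A|$ is essential. Degenerate cases, such as $A$ being a single vertex of tiny degree, can be ruled out separately by a preliminary $\tilde O(m)$ scan that checks all low-degree vertices directly.
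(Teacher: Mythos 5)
Your high-level architecture (sparsify via Nagamochi--Ibaraki, lift LocalEC to LocalVC via vertex-splitting, sample seeds proportional to volume, run the local primitive) matches the paper's approach (Section 5, using Lemma~\ref{lem:VC_framework} with~\eqref{eq:edge-sampling}). However, there is a genuine gap in the way you handle large-volume sides of the cut.

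You propose to loop $\nu$ over $\{1,2,4,\dots,m'\}$ and for each $\nu$ sample $\Theta((m'/\nu)\log n)$ seeds and run LocalVC with volume budget $\nu$. But the LocalVC primitive derived from \Cref{thm:localEC_approx_new} via vertex-splitting (\Cref{thm:localVC_gap}) carries the hard precondition $\nu < m(\gamma+1)/(12480\,k)$ — for the exact version ($\gamma=0$) this means $\nu < m/(12480\,k)$. This bound comes from the soundness of the split-graph reduction (\Cref{lem:soundness}): to decode an edge-side $L'$ in $G'$ back into a proper separation triple $(L,S,R)$ with $R\neq\emptyset$, one needs $\vol^{\out}_{G'}(L')\le m'/32$, and the set returned by LocalEC has volume inflated by an $O(k/(\gamma+1))$ factor. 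So when the minimum cut side $A$ has $\vol^{\out}(A)=\Omega(m'/k)$ you simply cannot invoke LocalVC at the right $\nu$, and your ``single sampling rule'' has nothing to call. The paper closes this case with a separate step: for $\vol^{\out}(A)\ge\bar\nu$, it samples $\tilde O(m/\bar\nu)$ \emph{pairs} of endpoints and runs an $st$-max-flow (Ford--Fulkerson in $O(mk)$ time) between them — this is the $T_{\text{pair}}$ term of \Cref{lem:VC_framework}. Balancing $\bar\nu\approx\epsilon m$ (with $\epsilon<1/k$ for exactness) then gives $\tilde O(m/\bar\nu)\cdot\bigl(O(mk)+O(\bar\nu k^2)\bigr)=\tilde O(mk^2)$, and after sparsification $\tilde O(m+n\kappa^3)$. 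Your proposal needs this second branch; the crossover between a local call and a global flow is not just a running-time optimization but a correctness requirement imposed by the reduction's precondition.

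A secondary issue: ``doubling over $\kappa=1,2,4,\dots$'' only yields a factor-$2$ approximation of the vertex connectivity. The paper instead fixes $k$, decides $k$-connectivity exactly (by taking $\epsilon<1/k$), and recovers $\kappa$ by an outer search over $k$; you would need to do the same, or binary-search within the bracketing interval, to output a cut of size exactly $\kappa$ as the theorem requires.
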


The above result is near-linear time whenever $\kappa=O(\polylog(n))$. 
By combining with previous results 
(e.g., \cite{HenzingerRG00,LinialLW88}),
the best running time for solving vertex connectivity in undirected graphs now is
$\tilde{O}(m+\min\{n\kappa^{3}, n^2\kappa, n^{\omega} + n\kappa^{\omega} \})$. Prior
to our work, the best running time for $\kappa>3$ was  $\tilde{O}(m+
\min\{n^{4/3}\kappa^{7/3}, n^2\kappa, n^{\omega} + n\kappa^{\omega} \} )$
\cite{NanongkaiSY19, HenzingerRG00, LinialLW88} where $\omega \leq
2.3728$ is the current matrix-multiplication exponent.  In particular, we have an improved running time when $\kappa\leq O(n^{0.457})$. 
Table~\ref{tab:results vertex connectivity} gives an overview of  all  these results.

\begin{table}[htbp]

	\centering

	\begin{tabular}{c c c c}
		\textbf{Graph class} & \textbf{Running time} & \textbf{Deterministic} & \textbf{Reference} \\
		directed & $ O ( \min \{ n^{3/4}, \kappa^{3/2} \} \cdot \kappa m + m n) $ & yes & \cite{Gabow06} \\
		directed & $ \tilde O (m n) $ & no & \cite{HenzingerRG00} \\
		directed & $ \tilde O (n^\omega + n \kappa ^\omega) $ & no & \cite{CheriyanR94} \\
		directed & $ \tilde O (\kappa \cdot \min \{ m^{2/3} n, m^{4/3} \}) $ & no &  \cite{NanongkaiSY19} \\
		directed & $ \tilde O  ( \kappa \cdot \min\{ n
                           \kappa^2 + m^{1/2}n \kappa^{1/2},m\kappa\}) $ & no & \textbf{here} \\
		undirected & $ O ( \min \{ n^{3/4}, \kappa^{3/2} \} \cdot \kappa^2 n + \kappa n^2 ) $ & yes & \cite{Gabow06} \\
		undirected & $ \tilde O (\kappa n^2) $ & no & \cite{HenzingerRG00} \\
		undirected & $ \tilde O (n^\omega + n \kappa ^\omega) $ & no & \cite{LinialLW88} \\
		undirected & $ \tilde O (m+ \kappa^{7/3} n^{4/3}) $ & no & \cite{NanongkaiSY19} \\
		undirected & $ \tilde O (m+\kappa^3 n ) $ & no &  \textbf{here}
	\end{tabular}
\caption{Comparison of algorithms for computing the vertex
  connectivity $ \kappa $.} 
	\label{tab:results vertex connectivity}
\end{table}

This result is obtained essentially by using our LocalEC algorithm to solve the {\em local vertex connectivity} problem (LocalVC) that was recently studied by Nanongkai, Saranurak, and Yingchareonthawornchai~\cite{NanongkaiSY19}, and then plugging this algorithm into the recent framework of~\cite{NanongkaiSY19}. The overall algorithm is fairly simple: Let $L$ be one side of the optimal vertex cut~$ M $, i.e., $ M $ consists of the neighbors of $ L $. We guess the values $\nu=\vol(L)$ and $k=\kappa$, and run our LocalVC algorithm with parameters $\nu$ and $k$ on $n/\nu$ randomly-selected seed vertices $x$. 

\paragraph{Approximation Algorithms and Directed Graphs.} The results in
\Cref{thm:intro:LocalEC_approx,thm:intro:VC_undir} can be generalized
to $(1+\epsilon)$-approximation algorithms and to algorithms on
directed graphs.  
The approximation guarantee means that the output vertex cut $S$ is of
size less than $\left\lfloor
  (1+\epsilon)k\right\rfloor$. 
In particular, the time complexity for locally computing a vertex cut (LocalVC) is $O(\nu k/\epsilon)$.\footnote{In \Cref{thm:localEC_approx_new}, we will state our result as an algorithm for the {\em gap} version of LocalVC, where the output $S$ in \Cref{thm:intro:LocalEC_approx} satisfies $|E(S, V - S)|<k+\gamma$ for some parameter $\gamma$. In this case, the time and query complexities decrease by a factor of $\gamma$.}
This improves the $\tilde O(\nu^{1.5}/(\sqrt{k}\epsilon^{1.5}))$-time algorithm of Nanongkai et~al.~\cite{NanongkaiSY19} when $ \nu \geq k^3 / \epsilon $. %
For computing $ \kappa $ exactly, our time complexity is $ \tilde O  ( \kappa \cdot \min\{ n \kappa^2 + m^{1/2}n \kappa^{1/2},m\kappa\}) $, and for approximating $ \kappa $, it is $\tilde{O}(\min\{m\kappa/\epsilon , n^{2+o(1)}\sqrt{\kappa}/\poly(\epsilon)\})$.

\paragraph{Testing Vertex- and Edge- Connectivity.} The study of testing graph properties, initiated by Goldreich et~al.~\cite{GoldreichGR98}, concerns the number of {\em queries} made to answer questions about graph properties.
In the {\em (unbounded-degree) incident-lists model} \cite{GoldreichR02,OrensteinR11}, it is assumed that there is a list $L_v$ of edges incident to each vertex $v$ (or lists of outgoing and incoming edges for directed graphs), and  an algorithm can make a query $q(v, i)$ for the $i^{th}$ edge in the list $L_v$ (if $i$ is bigger than the list size, the algorithm receives a special symbol in return). For any $\epsilon>0$, we say that an $m$-edge graph $G$ is {\em $\epsilon$-far} from having a property $P$ if the number of edge insertions and deletions to make $G$ satisfy $P$ is at least $\epsilon m$. {\em Testing $k$-vertex connectivity} is the problem of distinguishing between $G$ being $k$-vertex connected and $ G $ being $\epsilon$-far from having this property. Testing $k$-edge connectivity is defined analogously. It is assumed that the algorithm receives $n$, $\epsilon$, and $k$ in the beginning. We show the following. 
\begin{theorem}\label{thm:intro:vertex-testing}
In the unbounded-degree incident-list model, $k$-vertex and -edge
connectivity (where $k = \ot (\epsilon n)$) for directed graphs can be tested with $\ot(k/\epsilon^2)$ queries with probability at least $2/3$. %
 Further, $k$-edge connectivity for {\em simple} directed graphs can be tested with $\ot(\min\{k/\epsilon^2, 1/\epsilon^3 \}) $ queries. 
\end{theorem}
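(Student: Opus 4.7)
The plan is to wrap the fast local cut subroutine of \Cref{thm:intro:LocalEC_approx} (and its vertex-cut analogue used in the proof of \Cref{thm:intro:VC_undir}) inside the sampling-and-local-search framework of Goldreich--Ron~\cite{GoldreichR02} and Orenstein--Ron~\cite{OrensteinR11}; the claimed query bound then follows directly from the per-call cost $O(k\nu)$. The tester draws $s$ seed vertices (either uniformly or proportionally to out-degree, matching the structural lemma invoked), runs LocalEC$(x,k,\nu)$ on each seed $x$, and rejects iff some call returns a set $S$ with $|E(S,V\setminus S)|<k$; otherwise it accepts. Completeness is immediate: if $G$ is $k$-edge-connected then no nontrivial $S$ satisfies $|E(S,V\setminus S)|<k$, so every LocalEC call outputs $\bot$.

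The soundness argument rests on a structural lemma: if $G$ is $\epsilon$-far from $k$-edge-connected, then there exist disjoint sets $L_1,\ldots,L_t$ with $|E(L_i,V\setminus L_i)|<k$ and $\vol^{\out}(L_i)\le \nu = \tilde O(1/\epsilon)$, whose total out-volume is $\Omega(\epsilon m)$. This is standard in the Goldreich--Ron line of work: no edge set of size less than $\epsilon m$ can repair every small cut, so iteratively peeling minimum $(<k)$-cuts off of $G$ produces many disjoint deficient sides. The per-side volume is driven down to $\tilde O(1/\epsilon)$ by bucketing the peeled sides dyadically by out-volume and running the tester separately at each scale $\nu\in\{1,2,4,\ldots,m\}$; only the correct scale contributes, incurring just a $\log m$ overhead.

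Granted this lemma, a seed drawn proportionally to out-degree lies in $\bigcup_i L_i$ with probability $\Omega(\epsilon)$, so $s=\tilde O(1/\epsilon)$ seeds hit some $L_i$ with probability $1-o(1)$; conditioned on such a hit, \Cref{thm:intro:LocalEC_approx} (amplified to success probability arbitrarily close to $1$ by $O(1)$ repetitions) returns a valid cut, and the tester rejects with probability at least $2/3$. The total query budget is $s \cdot O(k\nu) = \tilde O(k/\epsilon^2)$. For $k$-vertex connectivity the argument is identical with the LocalVC subroutine (used in the proof of \Cref{thm:intro:VC_undir}) in place of LocalEC, and an analogous structural lemma in which each $L_i$ is the small side of a vertex cut whose neighbourhood has size $<k$. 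For \emph{simple} directed graphs with $k>1/\epsilon$ we instead invoke the gap variant of \Cref{thm:intro:LocalEC_approx} (see its footnote) with gap $\gamma=\Theta(k)$: each call then costs only $\tilde O(\nu)$ queries and yields $\tilde O(1/\epsilon^3)$ in total, and we take the minimum of this and the generic $\tilde O(k/\epsilon^2)$ bound.

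The main technical obstacle I expect is the structural lemma with the aggressive volume parameter $\nu=\tilde O(1/\epsilon)$. A naive peeling yields only $\nu=O(k/\epsilon)$, which would produce the weaker bound $\tilde O(k^2/\epsilon^2)$; squeezing $\nu$ down to $\tilde O(1/\epsilon)$ requires a careful dyadic decomposition that amortizes the $\epsilon m$ edit deficiency against pieces of matched out-volume scale while preserving hitting probability up to polylogarithmic losses. In the simple-graph regime one must additionally verify that the gap variant's output $S$, which may have $|E(S,V\setminus S)|$ as large as $2k$, still witnesses $\epsilon$-farness: this uses the simple-graph fact that a set of bounded out-volume is small in vertex count, so a returned $S$ already certifies enough local sparsity to drive the farness bound through.
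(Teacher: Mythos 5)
Your high-level plan (sample seeds, run a local cut subroutine, reject if a small cut is found) matches the paper in outline, but the core structural lemma you rely on is false and the decisive idea of the paper's proof is absent from your proposal.

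You posit that $\epsilon$-farness yields disjoint deficient sets $L_1,\dots,L_t$ each of volume $\tilde O(1/\epsilon)$, and you acknowledge this as the ``main technical obstacle.'' It is in fact an insurmountable one: consider $\Theta(\epsilon m/k)$ disjoint sets each with out-cut size $k/2$ (hence deficiency $k/2$) and out-volume $\Theta(k/\epsilon)$. These sets witness $\epsilon$-farness via \Cref{thm:eps-far-edge}, but every one of them has volume $\Theta(k/\epsilon)$, not $\tilde O(1/\epsilon)$. No dyadic decomposition or peeling scheme can push the volume below this, so running LocalEC with $\nu=\tilde O(1/\epsilon)$ would simply miss all of them, and running it with the honest $\nu=\tilde O(k/\epsilon)$ gives only the $\tilde O(k^2/\epsilon^2)$ bound you already flag as insufficient.

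The idea you are missing is that the paper does \emph{not} shrink the volume; it buckets the deficient sets by their \emph{deficiency} $k-d^{\out}(X)\in[2^i,2^{i+1})$, not by volume, and then couples the volume bound $\nu_i\approx 2^i\log k/\epsilon$ for bucket $i$ to a \emph{slack parameter} $\gamma_i\approx 2^i$ in the LocalEC call. The gap version (\Cref{lem:gaplocalec}, built on \Cref{thm:localEC_approx_new}) has query cost $\tilde O(\nu_i k/(\gamma_i+1))=\tilde O(k/\epsilon)$, which is independent of $i$; multiplying by $\tilde O(1/\epsilon)$ seeds and $\log k$ buckets gives $\tilde O(k/\epsilon^2)$. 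Two further points: the gap variant used by the paper is one that tightens the \emph{input} threshold to $k-\gamma$ while keeping the returned cut strictly below $k$, which is essential for the tester to remain one-sided; your proposal invokes a gap on the \emph{output} (a returned cut of size up to $2k$), which would cause false rejections on $k$-edge-connected graphs, and your claim that simplicity rescues this is incorrect. Also, in the unbounded-degree model you cannot sample a seed proportionally to out-degree without $\Omega(n/\sqrt m)$ extra queries; the paper samples vertices uniformly and instead bounds the \emph{number} of small-volume deficient sets, which is enough because they are disjoint and hence each contains at least one vertex. Finally, the paper's saving for simple graphs when $\epsilon>4/k$ comes from a purely combinatorial lemma (\Cref{lem:lots-singleton}) showing $\Omega(\epsilon n)$ vertices of degree below $k$, not from running the local routine with a large gap.
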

In particular, our  $\ot(k/\epsilon^2)$  bound is {\em linear} in $k$, and it can be independent of $k$ for testing $k$-edge connectivity on simple graphs. 
In the  {\em bounded-degree} incident-list model, the maximum degree $d$ is assumed to be given to the algorithm and a graph is said to be $\epsilon$-far from a property~$P$ if it needs at least $\epsilon nd$ edge modifications to have this property. We show the following. 
\begin{theorem} \label{thm:intro:vertex-testing-bounded}
In the bounded-degree incident-list model, $k$-vertex  and -edge
connectivity (where $k = \ot(\epsilon n)$) for directed graphs can be tested with  $\ot(k/\epsilon)$ queries with probability at least $2/3$. %
 Further, $k$-edge connectivity for {\em simple} directed graphs can be tested with $\ot(\min\{k/\epsilon,  1/\epsilon^2 \}) $ queries. 
\end{theorem}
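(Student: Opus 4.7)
The tester will be a bucket-and-sample reduction to the LocalEC primitive of \Cref{thm:intro:LocalEC_approx} (and, for the vertex-connectivity version, to its LocalVC analogue developed earlier in the paper). For each geometric guess $\nu\in\{1,2,\ldots,nd\}$ of an outgoing-volume threshold, the tester draws $s_\nu$ uniformly random seed vertices $x$, invokes LocalEC (resp.\ LocalVC) with parameters $(k,\nu)$ from each seed, and rejects as soon as any invocation returns a cut of fewer than $k$ edges (resp.\ fewer than $k$ separating vertices); otherwise it accepts.

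\textbf{Structural step and sampling.} I would first establish a Goldreich--Ron-type witness lemma for the bounded-degree directed model: if $G$ is $\epsilon$-far from $k$-edge-connected, then there exist pairwise vertex-disjoint sets $L_1,\ldots,L_t$ with $|E(L_i,V\setminus L_i)|<k$ whose total deficiency $\sum_i\bigl(k-|E(L_i,V\setminus L_i)|\bigr)$ is at least $\epsilon nd$. The proof is a standard charging argument: fix an optimal $\epsilon$-edit that $k$-connects $G$, associate each added edge with the minimum cut it crosses in the Gomory--Hu tree of $G$, and keep only cuts that accrue a charge; the vertex analogue uses vertex-separators in place of edge-cuts. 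Partitioning the $L_i$ into the $O(\log(nd))$ outgoing-volume classes $[\nu,2\nu]$, pigeonhole produces a class $\nu^\star$ carrying an $\Omega(1/\log(nd))$ share of the total deficiency and therefore $\Omega(\epsilon nd/(k\log n))$ witnesses of outgoing volume $O(\nu^\star)$. Because the graph is degree-$d$-bounded, the union $W$ of these witnesses covers $\Omega(\epsilon nd/(k\log n))$ vertices, so a uniform seed lies in $W$ with probability $\Omega(\epsilon d/(k\log n))$; thus $s_{\nu^\star}=\tilde O(k/(\epsilon d))$ seeds suffice for constant success, and when such a seed is drawn \Cref{thm:intro:LocalEC_approx} returns a valid cut with probability at least $3/4$, which is amplified to $1-o(1)$ by $O(\log n)$ independent repetitions.

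\textbf{Complexity, simple graphs, and main obstacle.} Multiplying the per-call query cost $O(\nu k)$ by the sample count and summing over the $O(\log n)$ guesses of $\nu$ gives the target $\tilde O(k/\epsilon)$ bound; the $k$-vertex-connectivity statement follows identically using LocalVC in place of LocalEC. For the $\tilde O(1/\epsilon^2)$ bound on simple directed graphs I would use that in a simple graph the internal edges of any $L$ are bounded by $|L|(|L|-1)$, so a witness of outgoing volume $\Omega(1/\epsilon)$ already contains $\Omega(1/\epsilon)$ vertices; this caps the useful volume range at $\nu=O(1/\epsilon)$ independently of $k$ and yields the $k$-independent bound. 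I expect the technically delicate step to be the telescoping of the per-class query cost across the bucket guesses so the aggregate is genuinely $\tilde O(k/\epsilon)$ rather than $\tilde O(k^2/\epsilon^2)$; obtaining this appears to require invoking the gap variant of LocalEC alluded to in the footnote of \Cref{thm:intro:LocalEC_approx} (with gap parameter $\gamma=\Theta(k)$), which trades the per-call cost from $O(\nu k)$ down to $O(\nu)$ and thereby matches the stated bound.
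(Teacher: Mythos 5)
Your plan shares the high-level structure of the paper's proof (bucket-and-sample reduction to the gap/slack variant of LocalEC/LocalVC), but there are several genuine gaps that would prevent the argument from going through as written.

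\textbf{The structural lemma cannot be proved via Gomory--Hu.} You propose to establish the witness lemma by charging against the Gomory--Hu tree of $G$. But the theorem is for \emph{directed} graphs, and Gomory--Hu trees simply do not exist in the directed setting (there is no tree capturing all pairwise minimum cuts, and the min-cut function is not even symmetric). The vertex-connectivity case is worse: there is no Gomory--Hu analogue for vertex separators even in undirected graphs. The paper does not try to prove these structural facts from scratch; it imports the characterizations of Orenstein and Ron, namely that $G$ is $\epsilon$-far from $k$-edge connectivity iff there is a family of pairwise disjoint sets $X_i$ with $\sum_i(k-d^{\operatorname{out}}(X_i))>\epsilon m$ (or the incoming analogue), and the corresponding statement with a family of pairwise \emph{independent} separation triples for vertex connectivity.

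\textbf{A single slack $\gamma=\Theta(k)$ loses the low-deficiency witnesses.} Your bucketing is one-dimensional (geometric guesses over the volume $\nu$), and you propose a fixed slack $\gamma=\Theta(k)$ to bring the per-call cost down to $O(\nu)$. The gap variant of LocalEC only \emph{guarantees} success when the witness $S'$ satisfies $|E(S',V-S')|<k-\gamma$, i.e.\ deficiency at least $\gamma$. If the $\epsilon m$ total deficiency is spread across witnesses of deficiency $1$ (say $\epsilon m$ witnesses each with $|E(X,V-X)|=k-1$), then no witness satisfies the $\gamma=\Theta(k)$ requirement and the tester has no soundness guarantee. The paper resolves this with a two-dimensional bucketing: index $i$ over the deficiency class ($k - d^{\operatorname{out}}(X)\in[2^i,2^{i+1})$) and index $j$ over the volume class ($\operatorname{vol}^{\operatorname{out}}(X)\in[2^j,2^{j+1})$), and sets $\gamma\approx 2^i$ \emph{adaptively} per bucket. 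Then the per-call cost $\tilde O(\nu k/\gamma)=\tilde O(2^{j-i}k)$ telescopes precisely against the sample count $\tilde\Theta(1/(\epsilon 2^{j-i}))$, and this cancellation is what you were hoping for; it requires bucketing on deficiency, not a uniform slack.

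\textbf{The simple-graph argument is quantitatively off.} You argue that a witness of outgoing volume $\Omega(1/\epsilon)$ in a simple graph must contain $\Omega(1/\epsilon)$ vertices. This is false: with $|E(L,V-L)|<k$ and at most $|L|(|L|-1)$ internal edges, volume $\Omega(1/\epsilon)$ only forces $|L|=\Omega(\sqrt{1/\epsilon})$, not $\Omega(1/\epsilon)$. The paper instead shows that when $\epsilon>4/k$ and the graph is simple and $\epsilon$-far, there are $\Omega(\epsilon n)$ vertices with degree less than $k$, so the preliminary degree-check step already rejects; the $\tilde O(1/\epsilon^2)$ bound then follows by combining this with the $\tilde O(k/\epsilon)$ bound for $\epsilon\le 4/k$.

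A minor but worth-noting point: the paper samples \emph{edges} uniformly (after padding to $d$-regularity via self-loops) rather than vertices, which is what makes the volume bucketing line up cleanly with the hit probability. Your vertex-sampling variant can be made to work in the bounded-degree model since $|X|\ge\operatorname{vol}^{\operatorname{out}}(X)/d$, but the analysis you give only accounts for one vertex per witness, which is strictly weaker and leaves $\nu^\star$ uncontrolled in the final cost.
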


It has been open for many years whether the bounds from
\cite{OrensteinR11,YoshidaI10,YoshidaI12} which are
exponential in $k$ can be made polynomial (this was asked in
e.g.~\cite{OrensteinR11}). We answer this question affirmatively with
bounds near-linear in $k$ using our local approximation algorithms. We
additionally solve the open problem of Goldreich and
Ron~\cite{GoldreichR02} of improving the local version of Karger's cut
algorithm for undirected graphs in the parameter range most relevant
to property testing.

\begin{table}[htbp]

	\centering

	\begin{tabular}{l c c}
		& \textbf{State of the art} & \textbf{Here} \\
		undirected $k$-edge connectivity & $ \tilde O \left( \frac{k^4}{(\epsilon \davg)^4} \right) $~\cite{ParnasR02} & $\ot\left(\frac{k^2}{\epsilon^2 \davg}\right) = \ot\left(\frac{k}{\epsilon^2}\right) $\\
		directed $k$-edge connectivity & $ \tilde O \left( \left( \frac{c k}{\epsilon \davg} \right)^{k+1} \right) $~\cite{YoshidaI10,OrensteinR11} & $\ot\left(\frac{k^2}{\epsilon^2 \davg}\right) = \ot\left(\frac{k}{\epsilon^2}\right) $\\
		directed $k$-edge connectivity & & $\ot\left(\min\{ \frac{k^2}{\epsilon^2 \davg}, \frac{k}{\epsilon^3 \davg} \}\right)$\\
		\hspace{1em} in simple graphs & & $= \ot\left(\min\{\frac{k}{\epsilon^2}, \frac{1}{\epsilon^3} \}\right)$\\
		(un)directed $k$-vertex connectivity & $ \tilde O \left( \left( \frac{c k}{\epsilon \davg} \right)^{k+1} \right) $~\cite{OrensteinR11} & $\ot\left(\frac{k^2}{\epsilon^2 \davg}\right) = \ot\left(\frac{k}{\epsilon^2}\right) $ 
	\end{tabular}
	\caption{Comparison of property testing algorithms for higher connectivity in unbounded-degree graphs.}
	\label{tab:results unbounded degree}
\end{table}

\begin{table}[htbp]

	\centering
	\begin{tabular}{l c c}
		& \textbf{State of the art} & \textbf{Here} \\
		undirected $3$-edge connectivity & $ O \left( \frac{\log (\frac{1}{\epsilon d})}{\epsilon^2 d} \right) $~\cite{GoldreichR02} & $ O \left( \frac{\log^2 (\frac{1}{\epsilon d})}{\epsilon} \right) $ \\
		undirected $k$-edge connectivity & $ \tilde O \left( \frac{k^3}{\epsilon^{3 - \frac{2}{k}} d^{2 - \frac{2}{k}}} \right) $~\cite{GoldreichR02} & $ \tilde O \left( \frac{k}{\epsilon} \right) $ \\
		directed $k$-edge connectivity & $ \tilde O \left( \left( \frac{c k}{\epsilon d} \right)^k d \right) $~\cite{YoshidaI10} & $ \tilde O \left( \frac{k}{\epsilon} \right) $ \\
		directed $k$-edge connectivity in simple graphs & & $\ot\left(\min\{\frac{k}{\epsilon},  \frac{1}{\epsilon^2} \}\right)$  \\
		undirected $k$-vertex connectivity & $ \tilde O \left( \left( \frac{c k}{\epsilon d} \right)^k d \right) $~\cite{YoshidaI12} & $ \tilde O \left( \frac{k}{\epsilon} \right) $ \\
		directed $k$-vertex connectivity & $ \tilde O \left( \left( \frac{c k}{\epsilon d} \right)^k d \right) $~\cite{OrensteinR11} & $ \tilde O \left( \frac{k}{\epsilon} \right) $
	\end{tabular}
	\caption{Comparison of property testing algorithms for higher connectivity in bounded-degree graphs.}
	\label{tab:results bounded degree}
\end{table}

\medskip \noindent {\em Detailed comparisons:} 
To precisely compare our bounds with the previous ones, note that there are two sub-models: (i) In the  {\em unbounded-degree} incident-list model, previous work assumes that $\davg=m/n$ is known to the algorithm in the beginning. (ii) in the  {\em bounded-degree} incident-list model, the maximum degree $d$ is assumed to be given to the algorithm and a graph is said to be $\epsilon$-far from a property $P$ if it needs at least $\epsilon nd$ edge modifications to have this property.
Our $\ot(k/\epsilon^2)$  bound can be generalized to a $\ot(k^2/(\epsilon^2 \davg))$  bound in the unbounded-degree model. Similarly, our  $\ot(\min\{k/\epsilon^2, 1/\epsilon^3 \})$ bound can be generalized to a $\ot(\min\{ k^2/(\davg\epsilon^2), k/(\davg\epsilon^3) \})$ bound in the unbounded-degree model.\footnote{\Cref{thm:intro:vertex-testing} can be obtained simply from the fact that $k\leq \davg $ and $ d \leq k/\epsilon$ can be assumed without loss of generality.}
The bounds that are exponential in $k$ by \cite{OrensteinR11,YoshidaI10,YoshidaI12} are $\tilde O((\frac{ck}{\epsilon \davg})^{k+1})$ and $\tilde O((\frac{ck}{\epsilon d})^{k}d)$ in the unbounded- and bounded-degree models, respectively, for testing both directed $k$-vertex and -edge connectivity. \Cref{tab:results unbounded degree,tab:results bounded degree} provide detailed comparisons between our and previous results.

\paragraph{Computing the Maximal $k$-Edge Connected Subgraphs.} For a set $ C \subseteq V $ of vertices, its induced subgraph~$ G [C] $ is a maximal $k$-edge connected subgraph of~$ G $ if $ G [C] $ is $k$-edge connected and no superset of $ C $ has this property.\footnote{This is related to the notion of edge strength by Bencz{\'{u}}r and Karger~\cite{BenczurK15} as follows: The strength of an edge $ e $ is defined as the maximum value of $ k $ such that some maximal vertex-induced $k$-edge connected subgraph contains both endpoints of $ e $.}
The problem of computing all maximal $k$-edge connected subgraphs of $ G $ is a natural generalization of computing the strongly connected components to higher edge connectivity.

For a long time, the state of the art for this problem was a running time of $ \ot (k m n) $ for $ k > 2 $ and $ O (m n) $ for $ k = 2 $ (implied by \cite{GabowT85} and~\cite{Gabow95}, respectively). %
The first improvement upon this was given by Henzinger, Krinninger, and Loitzenbauer with a running time of $ \ot (k^{O(1)} n^2) $ for $ k > 2 $ and $ O (n^2) $ for $ k = 2 $. %
The second improvement was given by Chechik et al.~\cite{ChechikHILP17} with a running time of $ \ot ((2k)^{k + 2} m^{3/2}) $ for $ k > 2 $ and $ O (m^{3/2}) $ for $ k = 2 $. %
In undirected graphs, a version of the algorithm by Chechik et al.\ runs in time $ \ot ((2k)^{k + 2} m \sqrt{n}) $ for $ k \geq 4 $ and in time $ O (m \sqrt{n}) $ for $ k \leq 3 $. %
In this paper, we improve upon this by designing an algorithm that has expected running time $ O (k^{3/2} m^{3/2}) $ in directed graphs, reducing the dependence on $ k $ from exponential to polynomial. %
We furthermore improve the running time for undirected graphs to $ O (k^3 n^{3/2} + k m) $, thus improving both the dependence on $ k $ \emph{and} on $ m $.
Table~\ref{tab:results connected subgraphs} compares our results to previous results. %

\begin{table}[htbp]

	\centering

	\begin{tabular}{c c c c c}
		\textbf{Parameter} & \textbf{Graph class} & \textbf{Running time} & \textbf{Deterministic} & \textbf{Reference} \\
		$ k = 2 $ & directed & $ O (m n) $ & yes & implied by~\cite{GabowT85} \\
		$ k \geq 3 $ & directed & $ \ot (k m n) $ & yes & implied by~\cite{Gabow95} \\ %
		$ k = 2 $ & directed & $ O (n^2) $ & yes & \cite{HenzingerKL15} \\
		$ k \geq 3 $ & directed & $ \ot (k^{O(1)} n^2) $ & yes & \cite{HenzingerKL15} \\ %
		$ k = 2 $ & directed & $ O (m^{3/2}) $ & yes & \cite{ChechikHILP17} \\ %
		$ k \geq 3 $ & directed & $ \ot ((2k)^{k + 2} m^{3/2}) $ & yes & \cite{ChechikHILP17} \\ %
		$ k \leq 3 $ & undirected & $ O (m \sqrt{n}) $ & yes & \cite{ChechikHILP17} \\ %
		$ k \geq 4 $ & undirected & $ \ot ((2k)^{k + 2} m \sqrt{n}) $ & yes & \cite{ChechikHILP17} \\ %
		$ k \geq 2 $ & directed & $ \ot (k^{3/2} m^{3/2}) $ & no & \textbf{here} \\ %
		$ k \geq 2 $ & undirected & $ \ot (k^3 n^{3/2} + k m) $ & no & \textbf{here} %
	\end{tabular}
	\caption{Comparison of algorithms for computing the maximal
          $k$-edge connected subgraphs.}
	\label{tab:results connected subgraphs}
\end{table}

Note that another natural way of generalizing the concept of strongly connected components to higher edge connectivity is the following:
A $k$-edge connected component is a maximal subset of vertices such that any pair of distinct vertices is $k$-edge connected in~$ G $.\footnote{Note that the $ k $ edge disjoint paths between a pair of vertices in a $k$-edge connected component might use edges that are not contained in the $k$-edge connected component. This is not allowed for maximal connected subgraphs.}
For a summary on the state of the art for computing the maximal $k$-edge connected subgraphs and components in both directed and undirected graphs, as well as the respective counterparts for vertex connectivity, we refer to~\cite{ChechikHILP17} and the references therein.

\paragraph{Organization.}  We provide basic graph notations in Section
\ref{sec:prelim}. We first prove the results for  local edge
connectivity in Section \ref{sec:localEC}. We then present three
applications. First, by developing an algorithm for local vertex connectivity in
Section \ref{sec:localVC}, we give near-linear time algorithms for small
vertex connectivity in Section \ref{sec:vertex_con_global}. Second, we
use approximate local vertex/edge connectivity to develop property
testing algorithms  for vertex/edge connectivity with near-linear in $k$ 
many edge queries in Section
\ref{sec:property-testing}. Finally, in Section \ref{sec:connected
  subgraphs}, we give improved algorithms for computing the
maximal $k$-edge connected subgraphs.

\section{Preliminaries}\label{sec:prelim}

Let $G=(V,E)$ be a \emph{directed} graph. For any $S,T\subseteq V$,
let $E(S,T)=\{(u,v)\mid u\in S,v\in T\}$. For each vertex $u$, we
let $\deg^{\out}(u) = | \{ v \in V \mid (u, v) \in E \} | $ denote the out-degree of $u$.
For a set $S\subseteq V$, the \emph{out-volume} of $S$ is $\vol^{\out}(S)=\sum_{u\in S}\deg^{\out} (u)$.
The set of \emph{out-neighbors} of~$S$ is $N^{\out}(S)=\{v\mid(u,v)\in E(S,V-S)\}$.
We define in-degree $\deg^{\inn}(u)$, in-volume $\vol^{\inn}(S)$,
and the set of in-neighbors $N^{\inn}(S)$ analogously. We add a subscript
$G$ to the notation when it is not clear which graph we are referring
to. 

A graph is \emph{strongly connected} if for every pair of vertices $ u $ and $ v $ there is a path from $ u $ to $ v $ and a path from $ v $ to $ u $. (The former implies the latter in undirected graphs and we usually just call the graph \emph{connected} in that case.)
A graph is \emph{$k$-edge connected} if it is strongly connected whenever fewer than $ k $ edges are removed.
A graph is \emph{$k$-vertex connected} if it has more than $ k $ vertices and is strongly connected whenever fewer than $ k $ vertices (and their incident edges) are removed.
The edge connectivity~$ \lambda_G $ of a graph $ G $ is the minimum value of $ k $ such that the graph is $k$-edge connected and the vertex connectivity~$ \kappa_G $ is the minimum value of $ k $ such that the graph is $k$-vertex connected.

We say that $(L,S,R)$ is a \emph{separation triple} of $G$ if $L,S,R$
partition $V$ such that $L,R\neq\emptyset$, and $E(L,R)=\emptyset$.
We also say that $S$ is a \emph{vertex cut} of $G$ of size $|S|$.
$S$ is an \emph{$st$-vertex cut} if $s\in L$ and $t\in R$. We
say that $s$ and $t$ are $k$-vertex connected
if there is no $st$-vertex cut of size less than $k$. Observe that $G$ is $k$-vertex connected
if and only if $s$ and $t$ are $k$-vertex connected for every pair $s,t\in V$.

	\section{Local Edge Connectivity}
\label{sec:localEC} 

In this section, we give a local algorithm for detecting an edge cut
of size $k$ and volume~$\nu$ containing some seed vertex in $O(\nu
k^{2})$ time. The algorithm accesses (i.e., make queries
for) $O(\nu k)$ edges (this bound is needed for our property
testing results). Roughly speaking, it outputs either a ``small'' cut or the
symbol ``$\bot$'' indicating that there is no small cut containing
$x$. The algorithm makes one-sided errors -- it might be wrong when it
outputs $\bot$ -- %
with probability at most $1/4$. By standard arguments, we can make
the error probability arbitrarily small by executing the algorithm repeatedly.
Both the algorithm description and the analysis are very simple.

\begin{theorem}
\label{thm:localEC_approx_new}
There exists the following randomized algorithm. It takes as inputs, 
\begin{itemize}[noitemsep,nolistsep]
\item a pointer to a seed vertex $x \in V$ in an adjacency list
  representing  an $n$-vertex $m$-edge  directed graph $G=(V,E)$, 
\item a volume parameter (positive integer) $\nu$, 
\item a cut-size parameter (positive integer) $k$, and 
\item a slack parameter (non-negative integer) $\gap$, where
\end{itemize}
\begin{align} \label{eq:precon_ec}
 \gamma \leq k < \nu < m(\gap+1)/(130k).
\end{align}
It accesses (i.e., makes queries for) $O(\nu k /(\gap+1))$ edges and runs in $O(\nu k^2/ (\gap+1))$ time.
in the following manner:
\begin{itemize}[noitemsep,nolistsep]
\item If there exists a vertex-set $S'$ such that $S' \ni x, \vol^{\out}(S')
  \leq \nu, $ and $ |E(S',V-S')| < k$, then with probability at least $3/4$, the algorithm
  outputs a non-empty vertex-set $S\subsetneq V$
  such that $|E(S,V-S)| < k+\gap$ and $\vol^{\out}(S) \leq 130\nu k/(\gap+1)$  (otherwise it outputs $\bot$).  
\item Otherwise (i.e., no such $S'$ exists), the algorithm outputs either
  a set $S$ as specified above or $\bot$. 
\end{itemize}
\end{theorem}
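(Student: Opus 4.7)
My plan is to establish \Cref{thm:localEC_approx_new} with a very simple randomized augmenting-path algorithm. Initialize an integral edge-disjoint flow $F=\emptyset$ and an empty set $S$. Repeat $k+\gamma$ times the following step: sample a uniformly random edge of $G$ (equivalently, sample a vertex~$u$ with probability $\deg^{\out}(u)/m$), take its tail as the target, and run a depth-first search from $x$ in the residual graph $G\setminus F$ with an edge-query budget $\tau=\Theta(\nu/(\gamma+1))$ aimed at reaching that target. If the DFS succeeds, augment $F$ along the found $x$-to-target path; otherwise, incorporate the reached vertices into $S$. At termination, return $S$ if any DFS has failed, else return $\bot$.

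\textbf{Correctness outline.}
Suppose the witness $L\ni x$ from \eqref{eq:intro:LocalEC} exists. Because targets are sampled proportionally to out-degree, each target lies inside $L$ with probability at most $\vol^{\out}(L)/m\le\nu/m<(\gamma+1)/(130k)$, using the precondition \eqref{eq:precon_ec}. Since $k+\gamma\le 2k$, a Chernoff- or union-bound-style argument then shows that with probability at least $3/4$, at least $k$ of the $k+\gamma$ targets fall in $V-L$. Since $F$ is edge-disjoint and $|E(L,V-L)|<k$, at most $k-1$ such outside-targeting augmentations can ever complete, so in this good event at least one outside-targeted DFS must fail and contribute its reached vertices to $S$. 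To bound the output, I plan to argue $|E(S,V-S)|<k+\gamma$ by charging each boundary edge either to a unit of the final flow (of which there are fewer than $k+\gamma$) or to an unsuccessful augmentation whose failure absorbs the corresponding endpoint into $S$. The volume bound $\vol^{\out}(S)\le 130\,\nu k/(\gamma+1)$ will follow by summing the per-DFS budgets $\tau$ over the $k+\gamma$ iterations and choosing the implicit constant in $\tau$ appropriately.

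\textbf{Complexity and main obstacle.}
Each DFS makes $O(\tau)$ edge queries, giving the total query bound $O((k+\gamma)\tau)=O(\nu k/(\gamma+1))$. The running time incurs an extra $O(k)$ factor per query to maintain the residual structure and to trace augmenting paths of length up to $\tau$, yielding $O(\nu k^2/(\gamma+1))$ time. The main obstacle I expect is the cut-validity argument when DFSes fail due to \emph{budget exhaustion} rather than reaching a true residual dead-end: then the failing DFS's explored set can have unexplored residual out-edges which do contribute to $|E(S,V-S)|$. Handling this cleanly will require a careful aggregation of such escape edges together with the saturated flow-edges, likely by observing that inside $L$ the total out-volume is $\vol^{\out}(L)\le \nu=(\gamma+1)\tau$, so only $O(\gamma+1)$ DFSes can be budget-truncated while exploring inside $L$, while outside $L$ each successful augmentation consumes a distinct cut edge from the at most $k-1$ available. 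Balancing these two accounting streams, together with pinning down the constant factors, is the heart of what remains to be verified.
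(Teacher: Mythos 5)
Your approach diverges from the paper's in a way that creates a genuine gap. The paper never pre-selects a target vertex: it runs one DFS from $x$ per iteration, and at each \emph{newly marked} edge it flips a coin with probability $(\gamma+1)/(8\nu)$ to decide whether to stop; the tail of the stopping edge becomes $y$, and the path $P_{xy}$ in the DFS tree is reversed. This guarantees that the reversal endpoint is always reachable, that a dead-ended DFS always returns the \emph{entire} currently-reachable set (so $|E(S,V-S)|=0$ in the residual graph), and that the budget is a single cumulative ``total edges ever marked'' counter of $128\nu k/(\gamma+1)$ across all $k+\gamma$ iterations, with each edge coin-flipped at most once. That cumulative accounting is what lets a single DFS explore all of $L$ (volume up to $\nu$) in one iteration before it can possibly dead-end.

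Your per-iteration budget $\tau=\Theta(\nu/(\gamma+1))$ is too small for exactly that case: once the cut $E(L,V-L)$ is saturated, a DFS aimed at an unreachable target must traverse all of $L$ (up to $\nu\gg\tau$ edges) before it dead-ends, so it will be budget-truncated, and the explored set then has $\Theta(\tau)$ unexplored residual out-edges wholly inside $L$, which destroys the bound $|E(S,V-S)|<k+\gamma$. You flag this obstacle, but the sketch for resolving it does not close the gap: the claim that ``only $O(\gamma+1)$ DFSes can be budget-truncated inside $L$'' implicitly requires successive DFSes not to re-explore previously visited edges, which a fresh per-iteration DFS does not provide, and even a single budget-truncated DFS being included in $S$ breaks the cut guarantee. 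Finally, returning $S$ as the \emph{union} of reached sets over multiple failed DFSes is not sound either --- the reversal lemma (the paper's Lemma~3.5) controls the cut of one fixed set after all reversals, not the cut of an accumulated union. To repair the argument you would essentially need to adopt the paper's devices: a single cumulative marking budget, coin-flip-per-new-edge rather than a pre-drawn target, returning $S$ only on a genuine dead-end of a single DFS, and returning $\bot$ (rather than a set) whenever the cumulative budget is exhausted.
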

\paragraph{Remark.} If we are not concerned about the number of edge
queries, then there is an another simple algorithm with the same
running time in Appendix \ref{sec:localEC_alternate}. 

In particular, we obtain exact and $(1+\epsilon)$-approximate local
algorithms for \Cref{thm:localEC_approx_new} as follows. We set $\gap
= 0$ in \Cref{thm:localEC_approx_new} for the exact local algorithm.
For $(1+\epsilon)$-approximation, we set $\gap = \lfloor \epsilon k
\rfloor$.

\begin{corollary}
\label{cor:localEC_exact_new}
There exists the following randomized algorithm. It takes as the same inputs as
in \Cref{thm:localEC_approx_new} but with $\gap = 0$.  
It accesses (i.e., makes queries for) $O(\nu k)$ edges and
runs in $O(\nu k^2)$ time.  It then outputs 
in the following manner.  
\begin{itemize}[noitemsep,nolistsep]
\item If there exists a vertex-set $S'$ such that $S' \ni x, \vol^{\out}(S')
  \leq \nu, $ and $ |E(S',V-S')| < k$, then with probability at least $3/4$, the algorithm
  outputs a non-empty vertex-set $S\subsetneq V$
  such that $|E(S,V-S)| < k $  (otherwise it outputs $\bot$).  
\item Otherwise (i.e., no such $S'$ exists), the algorithm outputs either
  a set $S$ as specified above or $\bot$. 
\end{itemize}
\end{corollary}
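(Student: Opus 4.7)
The plan is to derive the corollary as a direct instantiation of \Cref{thm:localEC_approx_new} with the slack parameter set to $\gamma = 0$. Since that theorem is already stated as a ``gap''-parameterized generalization of the exact problem, the corollary should fall out by plugging in $\gamma = 0$ and simplifying; the work is entirely in verifying that each parameterized quantity specializes cleanly to what is claimed.

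First I would check that the hypotheses carry over. The theorem requires $\gamma \leq k < \nu < m(\gamma+1)/(130k)$; substituting $\gamma = 0$ yields $k < \nu < m/(130k)$, which is exactly the precondition inherited by the corollary (and $\gamma = 0 \leq k$ holds trivially for the positive integer $k$). The input interface is otherwise unchanged, so the same algorithm can be invoked with $\gamma$ hard-wired to $0$.

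Second I would specialize the resource bounds: the query count $O(\nu k/(\gamma+1))$ becomes $O(\nu k)$ and the running time $O(\nu k^2/(\gamma+1))$ becomes $O(\nu k^2)$, matching the stated bounds. For the output guarantee, the promised cut-size bound $|E(S,V-S)| < k + \gamma$ specializes to $|E(S,V-S)| < k$, and the remaining structural assertions --- nonemptiness, $S \subsetneq V$, success probability at least $3/4$, and the one-sided error behavior on the ``otherwise'' branch --- are independent of $\gamma$ and therefore transfer verbatim. The extra out-volume bound $\vol^{\out}(S) \leq 130\nu k/(\gamma+1)$ supplied by the theorem is simply discarded, as the corollary does not assert it.

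There is essentially no obstacle here; the corollary is a notational convenience that isolates the exact case $\gamma = 0$. The only point worth flagging is that the underlying algorithm still delivers an $O(\nu k)$ out-volume guarantee on $S$, which may be useful in downstream applications even though it is not explicitly stated in the corollary.
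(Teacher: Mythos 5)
Your proposal is correct and is exactly the paper's (implicit) route: the paper states \Cref{cor:localEC_exact_new} as an immediate specialization of \Cref{thm:localEC_approx_new} with $\gamma = 0$ and offers no separate proof, precisely because the substitution you carry out --- checking the precondition $\gamma \le k < \nu < m(\gamma+1)/(130k)$ reduces to $k < \nu < m/(130k)$, and that the query, time, and cut-size bounds all specialize cleanly --- is all there is to it. Your closing remark that the theorem's unused $\vol^{\out}(S) \le 130\nu k/(\gamma+1)$ bound is still implicitly available (here $O(\nu k)$) is also accurate and indeed relied upon later in the paper (in \Cref{sec:connected subgraphs}).
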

 
\begin{corollary}
\label{cor:localEC_approx_new}
There exists the following randomized algorithm. It takes as the same inputs as
in \Cref{thm:localEC_approx_new} but with an additional parameter $\epsilon
\in (0,1]$ and $\gap = \lfloor \epsilon k \rfloor$.  
It accesses (i.e., makes queries for) $O(\nu /\epsilon)$ edges and
runs in $O(\nu k/ \epsilon)$ time.  It then outputs 
in the following manner.  
\begin{itemize}[noitemsep,nolistsep]
\item If there exists a vertex-set $S'$ such that $S' \ni x, \vol^{\out}(S')
  \leq \nu, $ and $ |E(S',V-S')| < k$, then with probability at least $3/4$, the algorithm
  outputs a non-empty vertex-set $S\subsetneq V$
  such that $|E(S,V-S)| < \lfloor (1+\epsilon) k \rfloor $  (otherwise it outputs $\bot$).  
\item Otherwise (i.e., no such $S'$ exists), the algorithm outputs either
  a set $S$ as specified above or $\bot$. 
\end{itemize}
\end{corollary}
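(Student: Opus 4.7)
The plan is to derive Corollary~\ref{cor:localEC_approx_new} directly from Theorem~\ref{thm:localEC_approx_new} by the substitution $\gamma := \lfloor \epsilon k \rfloor$. I would simply run the algorithm guaranteed by the theorem, fed with this particular choice of the slack parameter, and then re-interpret its output and complexity bounds in the language of the corollary.

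First, I would verify that the precondition \eqref{eq:precon_ec} still holds under this substitution. Since $\epsilon \in (0,1]$, we have $\gamma = \lfloor \epsilon k \rfloor \leq \epsilon k \leq k$, establishing $\gamma \leq k$. The remaining chain $k < \nu < m(\gamma+1)/(130k)$ is inherited from the input assumptions of the corollary (noting that for the upper bound on $\nu$, increasing $\gamma$ from $0$ only weakens the required inequality, so any $\nu$ admissible in the exact setting is admissible here).

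Second, I would translate the cut-size guarantee. The theorem outputs a non-empty set $S \subsetneq V$ with $|E(S, V-S)| < k + \gamma$. Since $k$ is a positive integer, the identity
\[
k + \lfloor \epsilon k \rfloor \;=\; \lfloor k + \epsilon k \rfloor \;=\; \lfloor (1+\epsilon) k \rfloor
\]
yields $|E(S,V-S)| < \lfloor (1+\epsilon)k \rfloor$, which is exactly the bound required by the corollary. The success probability $3/4$ and the alternative behavior when no $S'$ exists transfer verbatim.

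Third, I would read off the complexity. Since $\gamma + 1 = \lfloor \epsilon k \rfloor + 1 > \epsilon k$, dividing the theorem's bounds gives
\[
\frac{\nu k}{\gamma+1} \;<\; \frac{\nu k}{\epsilon k} \;=\; \frac{\nu}{\epsilon}, \qquad \frac{\nu k^2}{\gamma+1} \;<\; \frac{\nu k}{\epsilon},
\]
matching the advertised $O(\nu/\epsilon)$ queries and $O(\nu k/\epsilon)$ time. There is no real obstacle here—the entire argument is a parameter substitution. The only subtlety worth flagging is the integer identity $k + \lfloor \epsilon k \rfloor = \lfloor (1+\epsilon)k \rfloor$, which is why we can absorb the additive slack $\gamma$ into the multiplicative approximation factor cleanly.
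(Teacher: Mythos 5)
Your proposal is correct and takes essentially the same approach as the paper: substitute $\gamma = \lfloor \epsilon k \rfloor$ into Theorem~\ref{thm:localEC_approx_new} and invoke the integer identity $k + \lfloor \epsilon k \rfloor = \lfloor (1+\epsilon)k \rfloor$ (the paper even records this as \eqref{eq:gap-to-approx}, though it writes it as $\leq$ rather than the equality you correctly observe holds). You additionally spell out the precondition check and the arithmetic for the $O(\nu/\epsilon)$ and $O(\nu k/\epsilon)$ bounds, which the paper leaves implicit.
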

\begin{proof}
The results follow from \Cref{thm:localEC_approx_new} where we set
$\gap = \lfloor \epsilon k \rfloor$, and the following fact: 
\begin{align} \label{eq:gap-to-approx}
k+\gap = k + \lfloor \epsilon k \rfloor  =\lfloor k \rfloor + \lfloor
  \epsilon k \rfloor  \leq \lfloor k+\epsilon k\rfloor = \lfloor (1+\epsilon) k\rfloor.
\end{align}\qedhere
\end{proof}

\begin{algorithm}
\SetKwFor{RepTimes}{repeat}{times}{end}
\KwIn{$\text{Seed vertex } x \in V, \text{target volume } \nu > k,
  \text{target cut-size } k \geq 1,  \text{slack }\gap \leq k$.} 
\KwOut{A vertex-set $S$ or the symbol $\perp$ as specified in \Cref{thm:localEC_approx_new}.}     
\BlankLine
\RepTimes{$k + \gap$}
{
$y \gets \mathtt{NIL}$. \;
Grow a DFS tree $T$ of vertices reachable from the seed vertex $x$ as follows: \;
\While{\normalfont{the DFS algorithm still has an edge $e = (a,b)$ to explore}} 
{
  \If{$e$ \normalfont{is} not marked} {
       Mark $e$.\;  
       \lIf{\normalfont the algorithm has marked $\geq 128\nu k/(\gap+1)$ edges}{
       	\Return{$\bot$.} \label{line:earlybot}
      \; With probability $(\gap+1)/(8\nu)$, set $y \gets a$ and \textbf{break the while-loop}. \label{line:break decision}
    }
  }
 }
\lIf{$y = \mathtt{NIL}$} { 
  \Return $S = V(T)$. \label{line:dfsstuck}
} \lElse  {Reverse the direction of edges on the path $P_{xy}$ in $T$ from $x$ to $y$.}
}

\Return{$\bot$.} \label{line:latebot} 

\caption{$\localEC(x \in V,\nu,k,\gap)$\label{alg:local_ec_new}}
\end{algorithm}

The algorithm for \Cref{thm:localEC_approx_new} is described in
\Cref{alg:local_ec_new}.  
Roughly speaking, in each iteration of the repeat-loop the algorithm runs a standard depth-first search (DFS)\footnote{The choice of depth-first search is arbitrary. Breadth-first search would also work.}
algorithm, but randomly stops before finishing the DFS; this stopping corresponds to breaking the while-loop in \Cref{alg:local_ec_new}. (If the DFS
finishes before being stopped, our algorithm outputs all vertices found by the DFS.) Let $(a,b)$ denote the last edge explored by the
DFS before the random stop happens. We reverse the direction of all
edges on the unique path from $x$ to $a$ in the DFS tree. We repeat
this whole process for $k+\gap$ iterations.
Additionally, we keep track of the number of edges visited in total over all iterations, which affects the query complexity.
If this number exceeds a certain bound, we stop the algorithm.

It is easy to see to bound the running time and the query complexity of \Cref{alg:local_ec_new}.

\begin{lemma} \label{lem:running-time-main}
\Cref{alg:local_ec_new} runs in time $O(\nu k^2/(\gap+1))$,
and accesses $O(\nu k/(\gap+1))$ edges. 
\end{lemma}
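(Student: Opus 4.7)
The plan is to bound both the running time and the query complexity by first capping the number of distinct edges the algorithm ever marks, and then multiplying by the per-iteration cost of the outer repeat loop.

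For the query complexity, observe that Line~\ref{line:earlybot} forces the algorithm to return $\bot$ as soon as the total mark count reaches $128\nu k/(\gap+1)$. Thus at most $O(\nu k/(\gap+1))$ distinct edges are ever marked throughout the execution. An edge only needs to be fetched from the adjacency list the first time it is examined, at which point it is marked; in all later iterations the DFS can use the cached copy. Hence the total number of edge queries is also $O(\nu k/(\gap+1))$.

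For the running time, the outer repeat loop runs $k + \gap$ times. In each iteration the algorithm (i) grows a DFS tree rooted at $x$ and (ii) reverses the tree path from $x$ to some vertex $y$. The DFS only traverses edges reachable from $x$ in the current (possibly reversed) graph, and every edge it examines is either already marked or becomes marked immediately (triggering a possible early exit via the random break on Line~\ref{line:break decision}). Consequently, the number of edges touched in a single iteration is bounded by the current mark count, which is always $O(\nu k/(\gap+1))$. The reversed path lies inside the DFS tree, so its length is at most the number of vertices visited by that DFS, and hence again $O(\nu k/(\gap+1))$. Thus each iteration runs in $O(\nu k/(\gap+1))$ time.

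Combining the two, the total running time is
\[
(k + \gap) \cdot O\!\left(\frac{\nu k}{\gap+1}\right) = O\!\left(\frac{\nu k^2}{\gap+1}\right),
\]
where we used $\gap \leq k$ from precondition~\eqref{eq:precon_ec} to conclude $k + \gap = O(k)$. The only mild subtlety worth highlighting is that a fresh DFS in iteration $i$ could in principle re-walk every edge touched in earlier iterations; the point is that the reachable, previously-examined edges plus at most one newly marked edge per step is globally capped by Line~\ref{line:earlybot}, so the per-iteration cost never exceeds the global mark budget. This is the main (and only) place where one has to be slightly careful.
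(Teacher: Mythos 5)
Your proof follows the paper's high-level structure: cap the total number of marked edges at $O(\nu k/(\gap+1))$ via line~\ref{line:earlybot}, argue that each of the $k+\gap = O(k)$ iterations costs time linear in that cap, and multiply. The query bound is essentially identical to the paper's.

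There is, however, a real gap in the running-time argument that you gloss over and that the paper devotes most of its proof to: you claim ``each iteration runs in $O(\nu k/(\gap+1))$ time'' because the DFS touches at most that many edges and the reversed path has at most that length, but this silently assumes that reversing a single edge can be done in $O(1)$ time. That is not automatic. Reversing $(v,w)$ requires locating and deleting $(v,w)$ from $v$'s out-adjacency structure and appending $(w,v)$ to $w$'s; a naive lookup could cost $\Theta(\deg^{\out}(v))$ per reversal, blowing up the per-iteration bound. The paper handles this by storing, for each vertex $v$, the already-accessed out-edges in a doubly-linked list together with a pointer to the position of the last out-edge visited from $v$ in the current DFS, and then observing that every edge reversed along the DFS-tree path is precisely such a ``last-visited'' edge, so it can be located and spliced out in $O(1)$. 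Your closing remark that the DFS re-walking previously marked edges is ``the main (and only) place where one has to be slightly careful'' is therefore too strong: the constant-time deletion during path reversal is the other — and in the paper, the central — subtlety.
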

\begin{proof}
By line~\ref{line:earlybot}, the algorithm accesses at most $\lceil 128
\nu k /(\gap +1) \rceil = O(\nu k/(\gap+1)) $ edges in total.
For every vertex $ v $, we store all outgoing edges accessed so far in a doubly linked list.
As long as not all outgoing edges of $ v $ have yet been accessed, we additionally store the index of the next edge to query from the adjacency list of $ v $.
In addition, we store a pointer to the position in the doubly linked list of the last outgoing vertex of $ v $ visited in the current DFS.
Whenever we reverse an edge $ (v, w) $, we remove $ (w, v) $ from the list of $ v $ and append $ (w, v) $ to the list of $ w $.
Note that we only reverse edges $ (v, w) $ for which $ w $ was the last outgoing vertex of $ v $ visited in the current DFS.
from each vertex $ v $ to the last outgoing vertex of $ v $ visited in
the current DFS.
Therefore the position of each edge to be reversed in its doubly linked list can be found in constant time by using the corresponding pointers.
Thus, both the DFS and edge-reversal step take time at most linear in the total number of
edges accessed so far.
As we repeat this for at most $k+\gap \leq 2k$ iterations, the running time follows.
\end{proof}

 We start the correctness proof with the following important observation from~\cite{ChechikHILP17}. 

\begin{lemma}
\label{lem:reverse_cutsize_new} Let $S\subsetneq V$ be any set with $x\in S$.
Let $P_{xy}$ be a path from $x$ to some vertex $y$. Suppose we reverse the
direction of all edges on $P_{xy}$. Then, we have that $|E(S,V-S)|$ and $\vol^{\out}(S)$
are both decreased exactly by one if $y\notin S$. Otherwise, $|E(S,V-S)|$
and $\vol^{\out}(S)$ stay the same. 
\end{lemma}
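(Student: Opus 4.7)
The plan is to track how each edge on the path $P_{xy}$ contributes to $|E(S, V-S)|$ and to $\vol^{\out}(S)$ before and after reversal. Since the reversal only affects edges on $P_{xy}$, all other edges contribute identically before and after, so I only need to sum contributions along $P_{xy}$.

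First, I would classify each directed edge $(u,v)$ on the path into four types according to the membership of $u$ and $v$ in $S$: (SS) $u,v\in S$; (OO) $u,v\in V-S$; (SO) $u\in S,\, v\in V-S$; (OS) $u\in V-S,\, v\in S$. Then I compute the contribution of each type to the two quantities, before and after reversing. For $|E(S,V-S)|$ the (SS) and (OO) edges contribute $0$ both before and after; a type (SO) edge contributes $+1$ before and $0$ after (its reverse is type (OS) in the opposite direction and leaves $S$ inwards), while a type (OS) edge contributes $0$ before and $+1$ after. For $\vol^{\out}(S)$, the analysis is analogous: (SS) contributes $+1$ in both directions, (OO) contributes $0$ in both, (SO) contributes $+1$ then $0$ after reversal, (OS) contributes $0$ then $+1$. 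Hence for both quantities, the net change equals $-a + b$, where $a$ and $b$ are the numbers of type (SO) and type (OS) edges on the path, respectively.

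Next I would establish the combinatorial identity $a - b = \mathbf{1}[y\notin S]$ by a telescoping argument along $P_{xy}$: since $x\in S$, each type (SO) edge transitions the ``current vertex'' from $S$ to $V-S$, and each type (OS) edge does the reverse, while (SS) and (OO) edges keep the side unchanged. Thus the final side of $y$ is $S$ iff $a=b$ and is $V-S$ iff $a=b+1$. Plugging this into the change formulas immediately gives: if $y\notin S$, both $|E(S,V-S)|$ and $\vol^{\out}(S)$ decrease by exactly one; if $y\in S$, both are unchanged.

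There is no real obstacle here beyond being careful with the case analysis on the four edge types; the argument is purely bookkeeping, and the only nontrivial ingredient is the path-side telescoping, which is elementary.
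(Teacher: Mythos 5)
Your proof is correct and follows essentially the same approach as the paper: both classify path edges by how they cross the $S$ / $V\setminus S$ boundary, note that reversal flips the sign of each crossing's contribution, and observe that the crossing counts differ by exactly $\mathbf{1}[y\notin S]$ because the path starts in $S$. The paper's version is more terse (it states the $a=b+1$ vs.\ $a=b$ dichotomy directly without the explicit four-type bookkeeping or telescoping formalism), but the underlying argument is identical.
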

\begin{proof}
We fix the set $S$ and the path $P_{xy}$ where $x \in S$. If $y \notin
S$, then $P_{xy}$ crosses the edges between $S$ and $V-S$ back and forth
so that the number of crossing outgoing
edges (i.e., from $S$ to $V-S$) is one plus the number of crossing incoming
edges (i.e., from $V-S$ to~$S$). Therefore, reversing the direction of the
edges in $P_{xy}$ decreases both $|E(S,V-S)|$ and $\vol^{\out}(S)$
exactly by one.  If $y \in S$, then the number crossing outgoing edges is
equal to the number of crossing incoming edges. Thus, reversing the
direction of edges in $P_{xy}$ does not change $|E(S,V-S)|$ and
$\vol^{\out}(S)$. 
\end{proof}

The following two lemmas prove the correctness of \Cref{alg:local_ec_new}.
\begin{lemma}
If a vertex set $S$ is returned, then  $\emptyset \neq S \subsetneq V,
|E(S,V-S)|< k + \gap$ and $\vol^{\out}(S) \leq 130\nu k/(\gap+1)$. 
\end{lemma}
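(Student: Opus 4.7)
The plan is to fix $S := V(T)$ at the moment the algorithm returns, and then track how the original-graph quantities $|E(S, V-S)|$ and $\vol^{\out}(S)$ relate to their values in the ``modified'' graph (the original graph with all prior path-reversals applied) via Lemma~\ref{lem:reverse_cutsize_new}.

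First I would collect the basic structural observations. Nonemptiness is immediate since $x \in V(T) = S$. The only return line producing $S$ is line~\ref{line:dfsstuck}, which fires precisely when an iteration's DFS terminates without the random break; at that moment $V(T)$ is, by construction, the set of vertices reachable from $x$ in the modified graph, so there is no outgoing edge from $V(T)$ to $V - V(T)$ in the modified graph, i.e.\ $|E(S,V-S)|_{\text{mod}} = 0$. The number $r$ of reversals performed before this iteration is at most $k + \gap - 1$: each prior iteration of the repeat-loop broke out of the while-loop with $y \ne \mathtt{NIL}$ (otherwise it would itself have returned $S$) and thus did exactly one reversal, and at most $k + \gap$ iterations ever execute.

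Next I would apply Lemma~\ref{lem:reverse_cutsize_new} telescopically to the fixed set $S$ (which contains $x$). Each prior reversal either leaves both $|E(S,V-S)|$ and $\vol^{\out}(S)$ unchanged or decreases each by one. Writing $r_{\notin S} \le r$ for the number of reversals whose path-endpoint $y$ fell outside $S$, this gives
\[
|E(S,V-S)|_{\text{orig}} \;=\; 0 + r_{\notin S} \;\le\; r \;<\; k + \gap,
\]
which is the cut-size bound. For the volume, every edge currently outgoing from $V(T)$ in the modified graph has its tail reachable from $x$ in this iteration, so it was examined by the DFS and was marked on its first examination (possibly during an earlier iteration, since marks are cumulative). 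Hence $\vol^{\out}(S)_{\text{mod}}$ is at most the total number of marked edges, which is strictly less than $128\nu k/(\gap+1)$ because the early-stop at line~\ref{line:earlybot} was never triggered. Combining with $r_{\notin S} \le k + \gap - 1$ and the routine bound $k + \gap - 1 \le 2\nu k/(\gap+1)$ (which uses $\gap \le k < \nu$) yields $\vol^{\out}(S)_{\text{orig}} \le 130\nu k/(\gap+1)$.

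Finally, $S \subsetneq V$ follows from this volume bound together with the hypothesis $\nu < m(\gap+1)/(130k)$, which gives $\vol^{\out}(S)_{\text{orig}} \le 130\nu k/(\gap+1) < m = \vol^{\out}(V)$. The main conceptual delicacy is the perspective-shift in the Lemma~\ref{lem:reverse_cutsize_new} step: the set $S$ is only determined when the algorithm halts, so one has to retroactively fix it and reinterpret every previously executed reversal as modifying the cut and volume of this one particular set, rather than tracking a moving cut over the course of the algorithm.
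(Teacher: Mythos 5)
Your proof is correct and follows essentially the same route as the paper's: fix $S = V(T)$ at termination, observe that $|E(S,V-S)| = 0$ and $\vol^{\out}(S)$ is bounded by the marked-edge count in the current (post-reversal) graph, and then apply Lemma~\ref{lem:reverse_cutsize_new} to each of the $\leq k+\gap-1$ prior reversals to translate back to the original graph. The only cosmetic difference is that you derive $S \subsetneq V$ from the original-graph volume bound $130\nu k/(\gap+1) < m$, whereas the paper uses the modified-graph bound $128\nu k/(\gap+1) < m$ directly; both are valid and rely on the same precondition.
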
 
\begin{proof}
If a vertex set $S$ is returned, then the algorithm terminates in
line~\ref{line:dfsstuck} and $ S = V(T) $ (for the DFS tree~$ T $) is the set of vertices reachable from $ x $.
Thus, we have  $|E(S,V-S)|=0$ and $
\vol^{\out}(S) \leq 128\nu k/(\gap+1) \stackrel{(\ref{eq:precon_ec})}  < m $.
Note that by design $ x \in S $ (and thus $ S \neq \emptyset $), and since $ \vol^{\out}(S) < m $, also $ S \subsetneq V $.
 The algorithm has reversed strictly less than $k+\gap $ many paths $P_{xy}$ because
the algorithm did not reverse a path in the final iteration that returns $S$. Therefore,
\Cref{lem:reverse_cutsize_new} implies that, initially, $|E(S,V-S)|< k
+ \gap,$ and $\vol^{\out}(S)  < 128\nu k/(\gap+1)  + k+\gap \leq 130\nu k/(\gap+1) $.
\end{proof} %

\begin{lemma}
If there is a set $S'$ such that $x \in S', |E(S',V-S')|<k$ and
$\vol^{\out}(S')\le\nu$, then $\bot$ is returned with probability at
most $1/4$. 
\end{lemma}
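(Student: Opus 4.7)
The plan is to bound $P(\bot) \leq 1/4$ by separating the two disjoint causes of outputting $\bot$: terminating at line~\ref{line:earlybot} via the threshold check (call this event $A$), or completing all $k+\gap$ iterations as reversals and terminating at line~\ref{line:latebot} (event $B$). Writing $p := (\gap+1)/(8\nu)$, I will show $P(A) \leq 1/8$ and $P(B) \leq 1/8$ using Markov's inequality, which combine to the claim.

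For $P(A)$, I would couple the algorithm with an idealized version that never invokes the threshold check and instead runs all $k + \gap$ iterations. In each iteration, the number of markings is stochastically dominated by a geometric random variable with parameter $p$, so its expectation is at most $1/p = 8\nu/(\gap+1)$. Summing over the $k + \gap \leq 2k$ iterations (using $\gap \leq k$) gives expected total markings at most $16 \nu k/(\gap+1)$, which is one eighth of the threshold $128\nu k/(\gap+1)$. Markov then yields $P(A) \leq 1/8$.

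For $P(B)$, I would classify each reversal as \emph{good} (if $y \notin S'$) or \emph{bad} (if $y \in S'$). By \Cref{lem:reverse_cutsize_new}, a good reversal strictly decreases $|E(S', V - S')|$ while a bad one leaves it unchanged; since this quantity starts below $k$ and is non-negative, at most $k-1$ good reversals can occur in total, so on event $B$ at least $\gap + 1$ of the reversals must be bad. To bound the expected number of bad reversals, I would introduce the potential
\[
\Psi(t) \;:=\; \bigl| \{ e : e \text{ is unmarked and has tail in } S' \text{ at time } t \} \bigr| ,
\]
noting that $\Psi(0) \leq \vol^{\out}(S') \leq \nu$ and $\Psi(t) \geq 0$ throughout. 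Every edge reversed by the algorithm lies on the DFS tree of the current iteration and was therefore marked by the time of reversal, so reversals touch only marked edges and leave $\Psi$ unchanged; meanwhile, a marking of an edge with tail in $S'$ decreases $\Psi$ by exactly one, and other markings leave $\Psi$ fixed. Hence the total number $M_{S'}$ of markings whose tail lies in $S'$ at mark time is at most $\nu$ deterministically. Since each marking carries an independent Bernoulli($p$) stop flip, $E[\text{number of bad reversals}] \leq p \cdot E[M_{S'}] \leq p \nu = (\gap+1)/8$, and Markov gives $P(B) \leq P(\text{number of bad} \geq \gap+1) \leq 1/8$. The main subtlety that needs care in the write-up is the invariant underlying the potential argument --- that every DFS-tree edge is already marked at the moment the reversal occurs --- which follows because the DFS marks every unmarked edge it traverses before any decision about tree extension or stopping is made.
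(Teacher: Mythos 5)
Your proof is correct and takes essentially the same two-case decomposition and Markov-inequality approach as the paper: bound the early-$\bot$ probability by the expected number of markings needed for $k+\gap$ coin-flip successes, and bound the late-$\bot$ probability by arguing that at least $\gap+1$ reversals must be ``bad'' while the expected number of bad reversals is at most $\nu p = (\gap+1)/8$. The paper phrases the first bound via a negative binomial rather than a sum of geometrics, and conditions on $\bar{A}$ before applying Markov in the second case, but these are equivalent; your potential function $\Psi$ is a clean formalization of the paper's observation that every reversed edge is already marked, so each edge originally in $E(S',V)$ can contribute at most one Bernoulli stop flip.
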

\begin{proof}

Let $\tau = 128\nu k/(\gap+1)$, let $R_{\bot}^{(\ref{line:earlybot})}$ be the event that $\bot$ is returned in line~\ref{line:earlybot} and let $R_{\bot}^{(\ref{line:latebot})}$ be the event that $\bot$ is returned in line~\ref{line:latebot}.
We will show that each of these events will occur with probability at most $ 1 / 8 $.
As there are no other ways of returning $ \bot $ in the algorithm and the two events are mutually exclusive, it will follow that the probability of returning $ \bot $ is at most $ 1/8 + 1/8 = 1/4 $.

Note that event $R_{\bot}^{(\ref{line:earlybot})}$ only occurs if the decision to break the while loop in line~\ref{line:break decision} has been made less than $ k + \gap $ times before the limit of $ \tau $ marked edges is reached. 
We view the probabilistic decision to break the loop in line~\ref{line:break decision} as a Bernoulli trial with success probability $ p = (\gap+1)/(8\nu) $.
Let $ X $ be the random variable that denotes the number of successes in a sequence of $ \tau - 1 $ such trials with success probability $ p $.
Clearly, $ \Pr [R_{\bot}^{(\ref{line:earlybot})}] = \Pr [X < k + \gap] $.
Now consider an \emph{infinite} sequence of Bernoulli trials with success probability $ p $ and let $ Y $ be the random variable that denotes the number of trials needed to obtain $ k + \gap $ successes in this sequence.
Observe that $ \Pr [X < k + \gap] = \Pr [Y > \tau - 1] = \Pr [Y \geq \tau] $.
The distribution of $ Y $ is exactly the negative binomial distribution with parameters $ k + \gap $ and $ p $ (see, e.g.,~\cite{Feller68}) and thus $ \Exp [Y] = (k + \gap) / p $.
By Markov's inequality, $ \Pr [Y \geq \tau] = \Pr [Y \geq 8 \cdot (k + \gap) / p] = \Pr [Y \geq 8 \cdot \Exp [Y]] \leq 1 / 8 $.
Thus, $ \Pr [R_{\bot}^{(\ref{line:earlybot})}] = \Pr [Y \geq \tau] \leq 1/8 $.

To show that $P [R_{\bot}^{(\ref{line:latebot})}] \leq 1/8 $, recall first that the events $ R_{\bot}^{(\ref{line:latebot})} $ and $ R_{\bot}^{(\ref{line:earlybot})} $ are mutually exclusive and thus $P [R_{\bot}^{(\ref{line:latebot})}] = P [R_{\bot}^{(\ref{line:latebot})} \mid \bar{R}_{\bot}^{(\ref{line:earlybot})}] $, where $ \bar{R}_{\bot}^{(\ref{line:earlybot})} $ is the counter-event of $ R_{\bot}^{(\ref{line:earlybot})} $.
To prove our claim, we will show that $ P
[\bar{R}_{\bot}^{(\ref{line:latebot})} \mid
\bar{R}_{\bot}^{(\ref{line:earlybot})}] \geq 1 - 1/8 $, where $ \bar{R}_{\bot}^{(\ref{line:latebot})} $ is the counter-event of $ R_{\bot}^{(\ref{line:latebot})} $.
We first argue that, conditioned on $
\bar{R}_{\bot}^{(\ref{line:earlybot})} $, $
\bar{R}_{\bot}^{(\ref{line:latebot})} $ is implied by the following
event $ Q $: the algorithm either performs at most $ k + \gap - 1 $
iterations or the algorithm performs $ k + \gap $ iterations and for
at most $ \gap $ of these iterations $ y $ is set to a vertex in~$ S' $
in line~\ref{line:break decision}. If this is true, then we have $ \Pr
[\bar{R}_{\bot}^{(\ref{line:latebot})} \mid
\bar{R}_{\bot}^{(\ref{line:earlybot})}] \geq \Pr [Q \mid
\bar{R}_{\bot}^{(\ref{line:earlybot})}] $.

If the algorithm performs only $ k + \gap - 1 $ iterations, without the
occurrence of event $R_{\bot}^{(\ref{line:earlybot})}$, it must return
some set $ S = V(T) $ in line~\ref{line:dfsstuck}, which implies event
$ \bar{R}_{\bot}^{(\ref{line:latebot})} $ as desired.
Now assume that the algorithm performs $ k + \gap $ iterations and for
at most $ \gap $ of these iterations $ y $ is set to a vertex in~$ S'
$. We show that this situation implies event $ \bar{R}_{\bot}^{(\ref{line:latebot})} $.
The assumption implies that during the first $ k + \gap - 1 $ iterations, the algorithm must have set $ y $ to a vertex in $ V - S' $ in at least $ k - 1 $ iterations.
In each of these $ k - 1 $ iterations, $ | E (S', V - S') | $ is reduced by exactly one by \Cref{lem:reverse_cutsize_new} and initially $ | E (S', V - S') | \leq k - 1 $.
Therefore the algorithm has set $ y $ to a vertex in $ V - S' $ in \emph{exactly} $ k - 1 $ iterations.
Now at the beginning of iteration $ k + \gap $ (the final iteration) we have $ | E (S', V - S') | = 0 $.
It is therefore not possible that $ y $ is set to a vertex in $ V - S' $ in that iteration as these vertices are not reachable anymore.
It is furthermore not possible that $ y $ is set to a vertex in~$ S' $ as the number of times this already happened in previous iterations must be $ \gamma $ and for event $ Q $ to occur cannot increase to $ \gamma + 1 $.
Therefore, the only option left is that $ y $ remains $ \mathtt{NIL} $ in that final iteration and the algorithm returns $ S = V(T) $ in line~\ref{line:dfsstuck}, which implies event $ \bar{R}_{\bot}^{(\ref{line:latebot})} $.

Since $ \Pr [\bar{R}_{\bot}^{(\ref{line:latebot})} \mid
\bar{R}_{\bot}^{(\ref{line:earlybot})}] \geq \Pr [Q \mid
\bar{R}_{\bot}^{(\ref{line:earlybot})}] $, it remains to show that $ \Pr [Q \mid \bar{R}_{\bot}^{(\ref{line:earlybot})}] \geq 1 - 1/8 $.
Let $ Z $ be the random variable denoting the number of iterations for which $ y $ is set to a vertex in~$ S' $ in Line~\ref{line:break decision}.
Now, conditioned on $ \bar{R}_{\bot}^{(\ref{line:earlybot})} $, $ Q $ simply is the event that $ Z \leq \gap $.
Note that for each newly marked edge, $ y $ is set to the tail of this edge with probability $ p = (\gap+1)/(8\nu)$.
Furthermore, we only mark edges that have not been reversed yet.
Thus, $ y $ can be set to a vertex in $ S' $ only if the newly marked edge is from $ E (S', V) $ (where we refer to the state of this set at the beginning of the algorithm).
As we never unmark edges, the probability for each edge from $ E (S', V) $ to have $ y $ set to its tail is at most $ p $.
Therefore, $ \Exp [Z \mid \bar{R}_{\bot}^{(\ref{line:earlybot})}] \leq | E (S', V) | \cdot p = \nu p = (\gap+1)/8 $.
By Markov's inequality, we have $\Pr [Z < 8 \cdot \Exp [Z \mid \bar{R}_{\bot}^{(\ref{line:earlybot})}] \mid \bar{R}_{\bot}^{(\ref{line:earlybot})}] \geq 1-1/8$, and since $ \gap + 1 \geq 8 \cdot \Exp [Z \mid \bar{R}_{\bot}^{(\ref{line:earlybot})}] $ we get $ \Pr [Z < \gap + 1 \mid \bar{R}_{\bot}^{(\ref{line:earlybot})}] \geq \Pr [Z < 8 \cdot \Exp [Z \mid \bar{R}_{\bot}^{(\ref{line:earlybot})}] \mid \bar{R}_{\bot}^{(\ref{line:earlybot})}] $.
Finally, note that $ Z $ and $ \gap $ are integer and therefore $ \Pr [Z \leq \gap \mid \bar{R}_{\bot}^{(\ref{line:earlybot})}] = \Pr [Z < \gap + 1 \mid \bar{R}_{\bot}^{(\ref{line:earlybot})}] \geq \Pr [Z < 8 \cdot \Exp [Z \mid \bar{R}_{\bot}^{(\ref{line:earlybot})}] \mid \bar{R}_{\bot}^{(\ref{line:earlybot})}] \geq 1 - 1/8 $.
%
%
%
\end{proof}

\section{Local Vertex Connectivity}
\label{sec:localVC}

In this section, we give the vertex cut variant of the local algorithms
from \Cref{sec:localEC}.

\begin{theorem}
\label{thm:localVC_gap}
There exists the following randomized algorithm. It takes as inputs, 
\begin{itemize}[noitemsep,nolistsep]
\item a pointer to a seed vertex $x \in V$ in an adjacency list
  representing  an $n$-vertex $m$-edge directed graph $G=(V,E)$
\item a volume parameter (positive integer) $\nu$, 
\item a cut-size parameter (positive integer) $k$, and 
\item a slack parameter (non-negative integer) $\gap$, where
\end{itemize}
\begin{equation} \label{eq:precon_vc}
1 \leq k < n/4, \quad \mbox{ and } \quad \gamma \leq k < \nu < m(\gap+1)/(12480k).
\end{equation}
It accesses (i.e., makes queries for) $O(\nu k /(\gap+1))$ edges and runs in $O(\nu k^2/ (\gap+1))$ time. It then outputs 
in the following manner. 
\begin{itemize}[noitemsep,nolistsep]
\item If there exists a separation triple $(L',S',R')$ such that $L' \ni x, \vol^{\out}(L')
  \leq \nu, $ and $|S'| < k$, then with probability at least $3/4$, the algorithm
  outputs a vertex-cut of size at most $k+\gap$ (otherwise it outputs $\bot$).  
\item Otherwise (i.e., no such separation triple $(L',S',R')$ exists), the algorithm outputs either
  a vertex-cut of size at most $k+\gap$ or $\bot$. 
\end{itemize}
\end{theorem}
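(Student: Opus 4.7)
The plan is to reduce local vertex connectivity to local edge connectivity via the standard split-graph construction and then invoke \Cref{thm:localEC_approx_new}. Define $G^* = (V^*, E^*)$ by splitting each $v \in V$ into two copies $v^{\inn}$ and $v^{\out}$, adding a split edge $(v^{\inn}, v^{\out})$, and replacing each $(u,v) \in E$ with $(u^{\out}, v^{\inn})$. Then $|V^*| = 2n$ and $|E^*| = m + n$, and any adjacency-list query in $G^*$ can be answered with $O(1)$ queries in $G$. Under this transformation, $st$-vertex cuts in $G$ correspond to $s^{\out}$-$t^{\inn}$ edge cuts in $G^*$, and removing a split edge $(v^{\inn}, v^{\out})$ plays the role of removing the vertex $v$.

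For the existence direction, suppose a separation triple $(L', S', R')$ of $G$ satisfies $x \in L'$, $|S'| < k$, and $\vol^{\out}_G(L') \leq \nu$. The natural lift
\[
L^* := \{v^{\inn}, v^{\out} : v \in L'\} \cup \{s^{\inn} : s \in S'\}
\]
contains $x^{\inn}$, has $|E_{G^*}(L^*, V^* - L^*)| = |S'| < k$ (every $G$-edge leaving $L'$ ends in $L' \cup S'$, so the only $G^*$-edges leaving $L^*$ are the split edges at the $S'$-vertices), and satisfies $\vol^{\out}_{G^*}(L^*) = \vol^{\out}_G(L') + |L'| + |S'| = O(\nu)$ once we preprocess $L'$ to its out-reachable part from $x$. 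Invoking \Cref{thm:localEC_approx_new} on $G^*$ with seed $x^{\inn}$, volume parameter $O(\nu)$, cut-size $k$, and slack $\gap$: the hypothesis $\nu < m(\gap+1)/(12480 k)$ is engineered to imply the LocalEC precondition $\nu^* < (m+n)(\gap+1)/(130 k)$ with room to spare. With probability at least $3/4$ the call returns some $S^* \subsetneq V^*$ with $x^{\inn} \in S^*$, $|E_{G^*}(S^*, V^*-S^*)| < k+\gap$, and $\vol^{\out}_{G^*}(S^*) = O(\nu k/(\gap+1))$, while using $O(\nu k/(\gap+1))$ queries and $O(\nu k^2/(\gap+1))$ time as required.

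To extract a vertex cut of $G$, set
\[
T := \{v \in V : \text{some edge of } E_{G^*}(S^*, V^*-S^*) \text{ is incident to } v^{\inn} \text{ or } v^{\out}\},
\]
charging each cut edge (whether a split edge $(v^{\inn}, v^{\out})$ or an original-style edge $(u^{\out}, v^{\inn})$) to the single vertex $v$, so that $|T| < k+\gap$. Define $L := \{v \in V \setminus T : v^{\inn} \in S^* \text{ and } v^{\out} \in S^*\}$ and $R := V \setminus (L \cup T)$. A short case analysis on $(u,v) \in E$ with $u \in L$ shows $v \in L \cup T$: either $(u^{\out}, v^{\inn})$ crosses the cut (forcing $v \in T$), or it stays inside $S^*$, whence $v^{\inn} \in S^*$, and then either $v^{\out} \in S^*$ too (placing $v \in L$ unless $v \in T$) or the split edge at $v$ crosses (placing $v \in T$). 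Hence $E_G(L, R) = \emptyset$, and $T$ is a valid vertex cut of size at most $k+\gap$ once we verify $L, R \neq \emptyset$. The seed choice $x^{\inn}$ ensures $x \in L$ in the typical case (the split edge from $x^{\inn}$ is the very first edge explored by the final DFS), while $|L| \leq |\{v : v^{\inn} \in S^*\}| \leq \vol^{\out}_{G^*}(S^*) = O(\nu k/(\gap+1))$ together with the tight precondition $\nu < m(\gap+1)/(12480 k)$ and the constraint $k < n/4$ ensure $|L| + |T| < n$, so that $R \neq \emptyset$.

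The main obstacle is the back-translation: because $S^*$ may intersect the two copies of a vertex in arbitrary combinations, the sets $T$, $L$, and $R$ must be set up so as to simultaneously ensure (i) $|T| < k+\gap$, (ii) $E_G(L, R) = \emptyset$, and (iii) $L, R \neq \emptyset$. The sizeable constant $12480$ in the precondition, compared with the $130$ appearing in \Cref{thm:localEC_approx_new}, is calibrated to provide enough volume slack for the $R \neq \emptyset$ condition, while the hypothesis $k < n/4$ keeps $|T|$ safely away from $n$. The second bullet of the theorem (algorithm never outputs a non-cut even when no qualifying triple exists) follows directly because the post-processing only commits to a set after the separation conditions above are verified, and the time, query, and error-probability bounds are inherited from \Cref{thm:localEC_approx_new} up to constants.
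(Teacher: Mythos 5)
Your overall strategy — reduce to the edge version via the split graph and invoke \Cref{thm:localEC_approx_new}, then translate the returned edge cut back into a vertex cut — is the same as the paper's. However, there are two genuine gaps in the proposal.

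First, you split the seed vertex $x$ into $x^{\inn}$ and $x^{\out}$ and seed LocalEC at $x^{\inn}$. The paper deliberately does \emph{not} split $x$ (it sets $x_{\inn}=x_{\out}=x$), and this matters: since $x^{\inn}$ has out-degree exactly one in $G^*$ (its only out-edge is the split edge $(x^{\inn}, x^{\out})$), it can happen — after a single path reversal that uses that split edge — that a subsequent DFS from $x^{\inn}$ gets stuck immediately and LocalEC returns $S^* = \{x^{\inn}\}$. Your back-translation then yields $T=\{x\}$ and $L=\emptyset$, which is not a vertex cut, so the algorithm would emit an invalid output (violating even the ``either a vertex cut or $\bot$'' clause). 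Your remark that ``the split edge from $x^{\inn}$ is the very first edge explored by the final DFS'' in ``the typical case'' acknowledges but does not close this gap. Keeping $x$ unsplit guarantees $x\in L$ because both formal copies of $x$ are the same vertex of $G^*$.

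Second, your argument that $R \neq \emptyset$ is asserted rather than proved. You write $|L| \le |\{v : v^{\inn} \in S^*\}| \le \vol^{\out}_{G^*}(S^*) = O(\nu k/(\gap+1))$ and claim this, together with $\nu < m(\gap+1)/(12480k)$ and $k<n/4$, gives $|L|+|T|<n$; but that chain only bounds $|L|$ by $O(m)$, which can be much larger than $n$. The paper's proof of the corresponding fact (Lemma~\ref{lem:soundness}) is genuinely delicate: it first handles a base case where some $v_{\out}\in L'$ has $\deg^{\out}_{G'}(v_{\out}) \le |E_{G'}(L',V'-L')|$ by returning the singleton $\{v\}$, and then — assuming all such out-degrees are large — splits into the cases $m' \ge 4n'k'$ and $m' < 4n'k'$ and derives a contradiction in each from $\vol^{\out}_{G'}(L'_0)\le m'/16$. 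None of that is present in your proposal, and without it the claim that the returned $T$ is a vertex cut is unsupported. Finally, the parenthetical about ``preprocessing $L'$ to its out-reachable part'' is a red herring: the correct justification that $\vol^{\out}_{G^*}(L^*) \le 3\nu$ is the standing assumption that every vertex of $G$ has out-degree at least $1$, which gives $|L'|\le\vol^{\out}_G(L')$ and $|S'|\le\vol^{\out}_G(L')$.
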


We obtain exact and $(1+\epsilon)$-approximate local algorithms for
\Cref{thm:localVC_gap} by setting $\gap = 0$ and $\gap =  \lfloor
\epsilon k \rfloor,$ respectively. 

\begin{corollary}
\label{cor:localVC_exact}
There exists the following randomized algorithm. It takes as the same inputs as
in \Cref{thm:localVC_gap} where $\gap$ is set to be equal to $0$.   
It accesses (i.e., makes queries for) $O(\nu k)$ edges and
runs in $O(\nu k^2 )$ time.  It then outputs 
in the following manner.  
\begin{itemize}[noitemsep,nolistsep]
\item  If there exists a separation triple $(L',S',R')$ such that $L'
  \ni x, \vol^{\out}(L') \leq \nu, $ and $|S'| < k$, then with probability at least $3/4$, the algorithm
  outputs a vertex-cut of size at most $k$ (otherwise it outputs $\bot$).  
\item Otherwise (i.e., no such $(L',S',R')$ exists), the algorithm
  outputs either  a vertex-cut of size at most $ k $  or $\bot$. 
\end{itemize}
\end{corollary}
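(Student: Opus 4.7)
The plan is to obtain Corollary~\ref{cor:localVC_exact} as an immediate specialization of Theorem~\ref{thm:localVC_gap} by fixing the slack parameter to $\gap = 0$. This mirrors the derivation of Corollary~\ref{cor:localEC_exact_new} from Theorem~\ref{thm:localEC_approx_new} that already appears in the paper.

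First I would verify that the preconditions in \eqref{eq:precon_vc} remain consistent under the substitution $\gap = 0$: the requirement $\gap \leq k$ is trivially satisfied since $0 \leq k$, and the volume bound $\nu < m(\gap+1)/(12480k)$ simply becomes $\nu < m/(12480k)$, which is inherited verbatim from the input constraints that the corollary imports from the theorem. Next, I would invoke Theorem~\ref{thm:localVC_gap} with $\gap = 0$ and read off the conclusions: the query complexity $O(\nu k/(\gap+1))$ and running time $O(\nu k^2/(\gap+1))$ collapse to $O(\nu k)$ and $O(\nu k^2)$, matching the bounds in the corollary; the output guarantee ``vertex-cut of size at most $k+\gap$'' collapses to ``vertex-cut of size at most $k$''; and both the $3/4$ success probability (when a witness separation triple $(L',S',R')$ exists) and the fallback behavior (either a valid cut or $\bot$ if no such triple exists) transfer verbatim.

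Since the derivation is a pure substitution, there is no real obstacle in the proof of the corollary itself; all the difficulty is delegated to Theorem~\ref{thm:localVC_gap}. The only point worth commenting on is that $\gap = 0$ corresponds to the exact setting, yielding the largest running time among valid choices of $\gap$ but producing a cut with the tightest possible size guarantee.
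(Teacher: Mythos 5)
Your proposal is correct and matches the paper's intent exactly: the paper states right before the corollary that it is obtained from Theorem~\ref{thm:localVC_gap} by setting $\gap=0$, and your verification that the preconditions, complexity bounds, and output guarantees all specialize cleanly under that substitution is precisely the (omitted) argument. No further comment is needed.
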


\begin{corollary}
\label{cor:localVC_approx}
There exists the following randomized algorithm. It takes as the same inputs as
in \Cref{thm:localVC_gap} with additional parameter $\epsilon
\in (0,1]$ where $\gap$ is set to be equal to $ \lfloor \epsilon k \rfloor$.  
It accesses (i.e., makes queries for) $O(\nu /\epsilon)$ edges and
runs in $O(\nu k/ \epsilon)$ time.  It then outputs 
in the following manner.  
\begin{itemize}[noitemsep,nolistsep]
\item  If there exists a separation triple $(L',S',R')$ such that $L' \ni x, \vol^{\out}(L')
  \leq \nu, $ and $|S'| <  k$, then with probability at least $3/4$, the algorithm
  outputs a vertex-cut of size at most $\lfloor (1+\epsilon) k \rfloor$ (otherwise it outputs $\bot$).  
\item Otherwise (i.e., no such $(L',S',R')$ exists), the algorithm
  outputs either  a vertex-cut of size at most $\lfloor (1+\epsilon) k
  \rfloor$  or $\bot$. 
\end{itemize}
\end{corollary}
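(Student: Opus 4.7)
The plan is to obtain \Cref{cor:localVC_approx} as an immediate specialization of \Cref{thm:localVC_gap}, exactly mirroring the proof already given for \Cref{cor:localEC_approx_new}. First I would invoke \Cref{thm:localVC_gap} with the slack parameter set to $\gap = \lfloor \epsilon k \rfloor$, and verify precondition \eqref{eq:precon_vc}. The condition $\gap \leq k$ is immediate from $\epsilon \in (0,1]$, since $\lfloor \epsilon k \rfloor \leq \epsilon k \leq k$; the conditions on $k$, $\nu$, and $m$ in \eqref{eq:precon_vc} are inherited from the hypotheses of the corollary (with the factor $\gap+1$ only making the upper bound on $\nu$ weaker, so the precondition is preserved).

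Next I would translate the cut-size conclusion using the same integer arithmetic identity as in \eqref{eq:gap-to-approx}, namely
\begin{equation*}
k + \gap \;=\; k + \lfloor \epsilon k \rfloor \;\leq\; \lfloor (1+\epsilon)\,k \rfloor.
\end{equation*}
Thus any vertex cut of size at most $k+\gap$ output by the algorithm of \Cref{thm:localVC_gap} is, a fortiori, a vertex cut of size at most $\lfloor(1+\epsilon)k\rfloor$, matching both bullet points in the statement of the corollary. The success-probability guarantee (probability at least $3/4$ in the YES case, and the completeness statement when no witness separation triple exists) transfers verbatim.

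Finally I would substitute $\gap = \lfloor \epsilon k \rfloor$ into the complexity bounds of \Cref{thm:localVC_gap}. Since $\gap + 1 = \lfloor \epsilon k \rfloor + 1 > \epsilon k$, we have $1/(\gap+1) < 1/(\epsilon k)$, so the query bound $O(\nu k / (\gap+1))$ becomes $O(\nu/\epsilon)$ and the running time bound $O(\nu k^{2} / (\gap+1))$ becomes $O(\nu k / \epsilon)$, exactly as claimed.

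Every step is either a direct invocation of \Cref{thm:localVC_gap} or a routine floor-function inequality, so there is no real obstacle; the only point requiring a moment of attention is confirming the precondition $\gap \leq k$ and correctly tracking the substitution $\gap+1 \mapsto \Theta(\epsilon k)$ in the complexity expressions. The proof is therefore essentially a one-line specialization, completely parallel to that of \Cref{cor:localEC_approx_new}.
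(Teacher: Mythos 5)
Your proposal is correct and follows exactly the same route as the paper: the paper's proof of \Cref{cor:localVC_approx} likewise consists of invoking \Cref{thm:localVC_gap} with $\gap = \lfloor \epsilon k \rfloor$ and applying the floor-arithmetic bound in \eqref{eq:gap-to-approx}. The precondition check and the complexity substitution via $\gap + 1 > \epsilon k$ that you spell out are implicit in the paper's one-line proof but are exactly what is needed.
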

\begin{proof}
The results follow from \Cref{thm:localVC_gap} where we set
$\gap = \lfloor \epsilon k \rfloor$, and \Cref{eq:gap-to-approx}.
\end{proof}

To prove \Cref{thm:localVC_gap}, in \Cref{sec:reduc} we first
reduce the problem to the edge version of the problem using a well-known
reduction (e.g., \cite{Even75,HenzingerRG00}) and then in \Cref{sec:reduc_apply}
we plug the algorithm from \Cref{thm:localEC_approx_new} into the reduction.

\subsection{Reducing from Vertex to Edge Connectivity}

\label{sec:reduc}

Given a directed $n$-vertex $m$-edge graph $G=(V,E)$ and a vertex
$x\in V$, let $G'=(V',E')$ be an $n'$-vertex $m'$-edge graph defined
as follows. We call $G'$ the \emph{split graph} w.r.t. $x$. For each $v\in V-\{x\}$,
we add vertices $v_{\inn}$ and $v_{\out}$ to $V'$ and a directed
edge $(v_{\inn},v_{\out})$ to $E'$. We also add $x$
to $V'$ and we denote $x_{\inn}=x_{\out}=x$ for convenience. For
each edge $(u,v)\in E$, we add $(u_{\out},v_{\inn})$ to $E'$. Suppose,
for convenience, that the minimum out-degree of vertices in $G$ is
$1$. The following two lemmas draw connections between $G$ and $G'$.

\begin{lemma}
\label{lem:completeness}Let $(L,S,R)$ be a separation triple in
$G$ where $S=N_{G}^{\out}(L)$. Let $L'=\{v_{\inn},v_{\out}\mid v\in L\}\cup\{v_{\inn}\mid v\in S\}$
be a set of vertices in $G'$. Then $|E_{G'}(L',V'-L')|=|S|$ and
$\vol_{G}^{\out}(L)\le\vol_{G'}^{\out}(L')\le 3\vol_{G}^{\out}(L)$.
\end{lemma}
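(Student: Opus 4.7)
My plan is to handle the two claims separately by making the edge structure of $G'$ explicit. First I would partition $E'$ into ``split edges'' $(v_{\inn}, v_{\out})$, one for each $v \in V - \{x\}$, and ``copied edges'' $(u_{\out}, v_{\inn})$, one for each $(u, v) \in E$. With this partition in hand, both claims reduce to straightforward case analyses on where the endpoints of each edge lie with respect to the three blocks $L$, $S$, $R$ of the separation triple.

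For the cut-size identity, I would case-split on the location of $v$ for a split edge: if $v \in L$, both $v_{\inn}$ and $v_{\out}$ are in $L'$ and the edge is internal; if $v \in S$, only $v_{\inn} \in L'$, so the split edge leaves $L'$; and if $v \in R$, neither endpoint lies in $L'$. Hence split edges contribute exactly $|S|$ to $|E_{G'}(L', V'-L')|$. For a copied edge $(u_{\out}, v_{\inn})$ to cross, we would need $u_{\out} \in L'$ (forcing $u \in L$) and $v_{\inn} \notin L'$ (forcing $v \in R$, since $v \in L \cup S$ would put $v_{\inn}$ in $L'$); but $E(L, R) = \emptyset$ rules this out. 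So copied edges contribute $0$, and $|E_{G'}(L', V'-L')| = |S|$.

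For the volume bounds, I would compute out-degrees in $G'$: each $v_{\inn} \in L'$ (for either $v \in L - \{x\}$ or $v \in S$) contributes $1$ through its outgoing split edge, each $v_{\out} \in L'$ contributes $\deg_G^{\out}(v)$ copied out-edges, and the special vertex $x$ (when $x \in L$) contributes $\deg_G^{\out}(x)$ directly, with no accompanying split edge because $x_{\inn}$ and $x_{\out}$ are identified. Summing these pieces yields
$$\vol_{G'}^{\out}(L') \;=\; \vol_G^{\out}(L) + (|L| - 1) + |S|.$$
The lower bound $\vol_G^{\out}(L) \leq \vol_{G'}^{\out}(L')$ is then immediate. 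For the upper bound, I would invoke the standing assumption that every vertex of $G$ has out-degree at least $1$, which gives $|L| \leq \vol_G^{\out}(L)$, and I would use $E(L, R) = \emptyset$ to conclude that every outgoing edge of $L$ lands in $S$, so $|S| = |N_G^{\out}(L)| \leq |E(L, S)| \leq \vol_G^{\out}(L)$. Adding these two estimates to $\vol_G^{\out}(L)$ gives the upper bound $\vol_{G'}^{\out}(L') \leq 3 \vol_G^{\out}(L)$.

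The whole argument is a bookkeeping exercise, so I do not anticipate a real obstacle; the only subtlety is remembering that the identification $x_{\inn} = x_{\out} = x$ kills one split edge, which is why the $(|L| - 1) + |S|$ overhead appears rather than $|L| + |S|$, and it is this missing split edge together with the two simple inequalities $|L|,|S| \leq \vol_G^{\out}(L)$ that makes the constant $3$ tight enough for the downstream reduction.
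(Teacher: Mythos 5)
Your proof is correct and is essentially the paper's own argument, just spelled out in full: the paper simply asserts $|E_{G'}(L',V'-L')| = |\{(v_{\inn},v_{\out}) \mid v \in S\}| = |S|$ and $\vol_{G'}^{\out}(L') = \vol_G^{\out}(L) + |L| + |S| \le 3\vol_G^{\out}(L)$ using out-degree $\ge 1$, while you fill in the case analysis on edge types and the two inequalities $|L|, |S| \le \vol_G^{\out}(L)$. You are in fact slightly more careful than the paper: when $x \in L$, the identification $x_{\inn} = x_{\out} = x$ removes one split edge, so the exact count is $\vol_G^{\out}(L) + (|L|-1) + |S|$, which is what you wrote; the paper writes $|L|$ rather than $|L|-1$, but this is harmless since it only makes the stated upper bound $3\vol_G^{\out}(L)$ a fortiori true.
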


\begin{proof}
As $S=N_{G}^{\out}(L)$, we have $|E_{G'}(L',V'-L')|=|\{(v_{\inn},v_{\out})\mid v\in S\}|=|S|$.
Also, $\vol_{G'}^{\out}(L')=\vol_{G}^{\out}(L)+|L|+|S|\le 3\vol_{G}^{\out}(L)$
because every vertex in $G$ has out-degree at least $1$ and $S=N_{G}^{\out}(L)$.
\end{proof}
\begin{figure}
	\centering

\begin{tikzpicture}
	\tikzset{default_node/.style={circle,draw,fill=black,inner sep=0pt,minimum size=4pt}}

	\draw [rounded corners,thick,dashed,green!30!black,fill=green!20] (1.6,0) rectangle (4.4,7.25);
	\draw [rounded corners,thick,dashed,blue!30!black,fill=blue!20] (-1.4,0) rectangle (1.4,4);
	\draw [rounded corners,thick] (-3,-0.25) rectangle (3,4.25);
	\draw [thick] (-3.25,-0.5) rectangle (3.25,6.25);

	\node [anchor=south,blue!30!black] (L) at (-1, 3.25) {\Large $L$};
	\node [anchor=south] (L') at (-2.5, 3.25) {\Large $L'$};
	\node [anchor=south] (L0') at (-2.5, 5) {\Large $L_0'$};
	\node [anchor=south,green!30!black] (S) at (3, 6.5) {\Large $S$};

	\node [default_node,label=below:$x$] (pin) at (0,0.5) {};

	\node [default_node,label=left:$p_\text{in}$] (pin) at (-0.5,3) {};
	\node [default_node,label=left:$q_\text{in}$] (qin) at (-0.5,2) {};
	\node [default_node,label=left:$r_\text{in}$] (rin) at (-0.5,1) {};
	\node [default_node,label=right:$p_\text{out}$] (pout) at (0.5,3) {};
	\node [default_node,label=right:$q_\text{out}$] (qout) at (0.5,2) {};
	\node [default_node,label=right:$r_\text{out}$] (rout) at (0.5,1) {};

	\node [default_node,label=left:$a_\text{in}$] (ain) at (-3.5,3) {};
	\node [default_node,label=left:$b_\text{in}$] (bin) at (-3.5,2) {};
	\node [default_node,label=left:$c_\text{in}$] (cin) at (-3.5,1) {};
	\node [default_node,label=right:$a_\text{out}$] (aout) at (-2.5,3) {};
	\node [default_node,label=right:$b_\text{out}$] (bout) at (-2.5,2) {};
	\node [default_node,label=right:$c_\text{out}$] (cout) at (-2.5,1) {};

	\node [default_node,label=left:$w_\text{in}$] (win) at (2.5,3) {};
	\node [default_node,label=left:$y_\text{in}$] (yin) at (2.5,2) {};
	\node [default_node,label=left:$z_\text{in}$] (zin) at (2.5,1) {};
	\node [default_node,label=right:$w_\text{out}$] (wout) at (3.5,3) {};
	\node [default_node,label=right:$y_\text{out}$] (yout) at (3.5,2) {};
	\node [default_node,label=right:$z_\text{out}$] (zout) at (3.5,1) {};

	\node [default_node,label=left:$u_\text{in}$] (uin) at (2.5,5.75) {};
	\node [default_node,label=left:$v_\text{in}$] (vin) at (2.5,4.75) {};
	\node [default_node,label=right:$u_\text{out}$] (uout) at (3.5,5.75) {};
	\node [default_node,label=right:$v_\text{out}$] (vout) at (3.5,4.75) {};

	\node (anon) at (0.7,3.4) {};

	\draw [->,semithick] (pin) -- (pout);
	\draw [->,semithick] (qin) -- (qout);
	\draw [->,semithick] (rin) -- (rout);

	\draw [->,semithick] (ain) -- (aout);
	\draw [->,semithick] (bin) -- (bout);
	\draw [->,semithick] (cin) -- (cout);

	\draw [->,semithick] (win) -- (wout);
	\draw [->,semithick] (yin) -- (yout);
	\draw [->,semithick] (zin) -- (zout);

	\draw [->,semithick] (uin) -- (uout);
	\draw [->,semithick] (vin) -- (vout);

	\draw [->,semithick] (anon) -- (uin);
	\draw [->,semithick] (anon) -- (vin);
\end{tikzpicture}

	\caption{Construction of a separation triple $(L,S,R)$ in $G$ from an edge cut $L'$
		in $G'$. Most edges are omitted. For this example, $L'_{0}=L'\cup\{u_{\protect\inn},v_{\protect\inn}$\},
		$L=\{r,p,q,x\}$, $S=\{u,v,w,y,z\}$.\label{fig:reduc convert back}}
\end{figure}
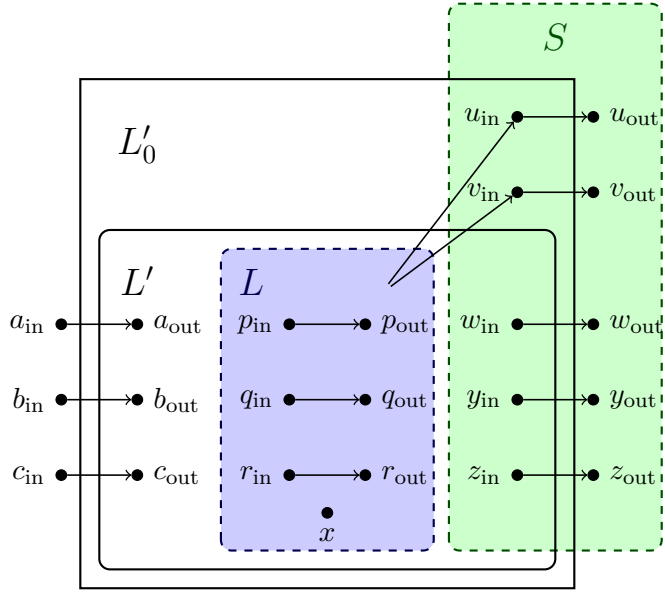

\begin{lemma}
\label{lem:soundness}Let $L'\ni x$ be a set of vertices of $G'$.
Then, there is a set of vertices $L$ in $G$ such that $\vol_{G}^{\out}(L)\le2\vol_{G'}^{\out}(L')$
and $|S|\le|E_{G'}(L',V'-L')|$ where $S=N_{G}(L)$. Given $L'$,
$L$ can be constructed in $O(\vol_{G'}^{\out}(L'))$ time.
Furthermore, for $R=V-(L\cup S)$ we have $R\neq\emptyset$, i.e., $(L,S,R)$ is
a separation triple, if $\vol_{G'}^{\out}(L')\le m'/32$ and $|E_{G'}(L',V'-L')|\le n/2$.

\end{lemma}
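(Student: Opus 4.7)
Construction. I would build $L$ from $L'$ in two steps. First, close $L'$ under outgoing regular crossings by setting
\[
L'_0 \;=\; L' \;\cup\; \{\,v_\inn \notin L' \;:\; \exists\,u \text{ with } (u_\out,v_\inn)\in E_{G'}(L',V'-L')\,\},
\]
and second, project back to $V$ via
\[
L=\{v\in V:v_\inn\in L'_0\text{ and }v_\out\in L'_0\},\ S=\{v\in V:v_\inn\in L'_0\text{ and }v_\out\notin L'_0\},\ R=V\setminus(L\cup S),
\]
adopting the convention $x_\inn=x_\out=x$, so that $x\in L$. The construction only enumerates out-edges of vertices of $L'$, hence runs in $O(\vol_{G'}^\out(L'))$ time.

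Separator property and volume bound. For any edge $(u,v)\in E$ with $u\in L$, the copy $u_\out$ lies in $L'\subseteq L'_0$. If $v_\inn\in L'$, then $v_\inn\in L'_0$, and $v\in L$ or $v\in S$ depending on whether $v_\out\in L'_0$; if $v_\inn\notin L'$, then $(u_\out,v_\inn)$ is a crossing edge, so $v_\inn$ is added to $L'_0$, again placing $v\in L\cup S$. Hence $E_G(L,R)=\emptyset$. For the volume bound, every $v\in L\setminus\{x\}$ satisfies $v_\out\in L'$ because $L'_0$ only adds $v_\inn$-copies; since $\deg_G^\out(v)=\deg_{G'}^\out(v_\out)$, summing over $L\setminus\{x\}$ gives $\sum_{v\in L\setminus\{x\}}\deg_G^\out(v)\le \vol_{G'}^\out(L')$, and $\deg_G^\out(x)=\deg_{G'}^\out(x)\le \vol_{G'}^\out(L')$ (since $x\in L'$), yielding $\vol_G^\out(L)\le 2\vol_{G'}^\out(L')$.

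Cut-size bound and non-emptiness. For $|S|$, I would charge each $v\in S$ injectively to a crossing edge of $L'$: if $v_\inn\in L'$ then $v_\out\notin L'$ and the split edge $(v_\inn,v_\out)$ crosses $L'$; if $v_\inn\in L'_0\setminus L'$ then by construction some regular edge $(u_\out,v_\inn)\in E_{G'}(L',V'-L')$ exists. Different $v$'s produce distinct target copies in $V'$, so $|S|\le |E_{G'}(L',V'-L')|$. For the non-emptiness of $R$, note the identity $L\cup S=\{v\in V:v_\inn\in L'_0\}$, so $|L\cup S|\le [x\in L'_0]+|L'_\inn|+|L'_0\setminus L'|$, where $L'_\inn=\{v\ne x:v_\inn\in L'\}$. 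Using $|L'_\inn|\le \vol_{G'}^\out(L')$ (each $v_\inn\in L'$ has out-degree exactly $1$ in $G'$, and the min-out-degree $\ge 1$ assumption means no other contribution is wasted) and $|L'_0\setminus L'|\le |E_{G'}(L',V'-L')|$, the hypotheses $\vol_{G'}^\out(L')\le m'/32$ and $|E_{G'}(L',V'-L')|\le n/2$ are combined to conclude $|L\cup S|<n$, so $R\ne\emptyset$.

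The main obstacle I anticipate is this final counting step: closing the strict inequality $|L\cup S|<n$ from the two numeric hypotheses is delicate, since $|L'_\inn|$ and $|L'_0\setminus L'|$ must be bounded simultaneously by quantities whose sum stays below $n$. This is precisely why the constants $1/32$ and $1/2$ appear in the hypotheses, and what forces the particular form of the closure $L'_0$: adding only $v_\inn$-copies (never $v_\out$-copies) keeps the projection small while still guaranteeing that $S$ is a valid vertex separator.
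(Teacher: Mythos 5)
Your construction of $L'_0$, $L$, $S$, $R$ is the same as the paper's, and your arguments for $\vol_G^\out(L)\le 2\vol_{G'}^\out(L')$, for $|S|\le|E_{G'}(L',V'-L')|$ via the injective charging of each $v\in S$ to a crossing edge, for the separator property, and for the $O(\vol_{G'}^\out(L'))$ running time are all correct (the paper reaches the volume and cut-size bounds through the intermediate set $L'_0$ rather than by charging, but the content is equivalent).

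The non-emptiness argument, however, has a real gap, and it is exactly where you anticipated trouble. Your bound is
\[
|L\cup S| \;\le\; 1 + |L'_\inn| + |L'_0\setminus L'| \;\le\; 1 + \vol_{G'}^\out(L') + |E_{G'}(L',V'-L')| \;\le\; 1 + \tfrac{m'}{32} + \tfrac{n}{2},
\]
and this does not imply $|L\cup S|<n$: the right-hand side exceeds $n$ whenever $m' > 16n - 32$, which is the typical situation for dense graphs ($m'$ can be $\Theta(n^2)$). Bounding $|L'_\inn|$ by $\vol_{G'}^\out(L')$ wastes too much when $L'$ contains many $v_\out$-copies with large out-degree. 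The paper avoids this with a preliminary reduction that your proof is missing: if some $v_\out\in L'$ has $\deg_{G'}^\out(v_\out)\le|E_{G'}(L',V'-L')|$, one simply returns $L=\{v\}$ (this $L$ already satisfies all the claimed bounds, and $R\ne\emptyset$ follows from $1+|E_{G'}(L',V'-L')|\le 1+n/2<n$). Having disposed of that case, one may assume $\deg_{G'}^\out(v_\out)>|E_{G'}(L',V'-L')|\ge k'$ for \emph{every} $v_\out\in L'$, where $k'=|E_{G'}(L'_0,V'-L'_0)|$. This degree lower bound is the missing ingredient: assuming $R=\emptyset$, one deduces $|V'-L'_0|=|V'_\out-L'_0|=k'$, and then argues by cases on whether $m'\ge 4n'k'$ (in which case $m'=\vol_{G'}^\out(L'_0)+\vol_{G'}^\out(V'-L'_0)\le m'/16 + k'n'\le m'/16+m'/4<m'$, contradiction) or $m'<4n'k'$ (in which case $\vol_{G'}^\out(L'_0)\ge |L'_0\cap V'_\out|\cdot k' \ge (n-k')k'\ge nk'/2 > m'/16$, contradicting $\vol_{G'}^\out(L'_0)\le m'/16$). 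Without the preliminary degree reduction, the second case cannot be closed, and that is precisely why your counting approach stalls in the dense regime.
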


\begin{proof}
First, note that if there is $v \in V$ such that $v_{\out}\in L'$ and
$\deg_{G'}^{\out}(v_{\out})\le|E_{G'}(L',V'-L')|$, then we can return
$L=\{v\}$ and $S=N_{G}^{\out}(\{v\})$ and we are done. So from now,
we assume that $\deg_{G'}^{\out}(v_{\out})>|E_{G'}(L',V'-L')|$.

By the structure of $G'$, observe that there are sets $S_{1},S_{2}\subseteq V$
such that 
\[
E_{G'}(L',V'-L')=\{(v_{\inn},v_{\out})\mid v\in S_{1}\}\cup\{(u_{\out},v_{\inn})\mid v\in S_{2}\}
\]
Let $L'_{0}=L'\cup\{v_{\inn}\mid v\in S_{2}\}$. See \Cref{fig:reduc
	convert back} for illustration. So there is a set $S\subsetneq V$ where
\[
E_{G'}(L'_{0},V'-L'_{0})=\{(v_{\inn},v_{\out})\mid v\in S\}.
\]
We have $|S|=|E_{G'}(L'_{0},V'-L'_{0})|\le|E_{G'}(L',V'-L')|$ because
for each $v\in S_{2}$, $\deg^{\out}(v_{\inn})=1\le\deg^{\inn}(v_{\inn})$.
Also, $\vol_{G'}^{\out}(L'_{0})\le\vol_{G'}^{\out}(L')+|E_{G'}(L',V'-L')|\le2\vol_{G'}^{\out}(L')$.

Let $L=\{v\mid v_{\inn},v_{\out}\in L'_{0}\}$. Note that $L\cap S=\emptyset$.
See \Cref{fig:reduc convert back} for illustration. Observe that
$x\in L$ because $x_{\out}=x_{\inn}$. Moreover, $N_{G}(L)=S$ and
$\vol_{G}^{\out}(L)\le\vol_{G'}^{\out}(L'_{0})\le2\vol_{G'}^{\out}(L')$.
$L$ be can constructed in time $O(|\{v_{\out}\in L'\}|)=O(\vol_{G'}^{\out}(L'))$
because the minimum out-degree of vertices in $G$ is $1$.

For the second statement, observe that $R=V-(L\cup S)=\{v\mid v_{\inn}\notin L'_{0}\}$.
Let $V'_{\inn}=\{v_{\inn}\in V'\}\cup\{x\}$ and $V'_{\out}=\{v_{\out}\in V'\}-\{x\}$.
Let $k'=|E_{G'}(L'_{0},V'-L'_{0})|$. Suppose for contradiction that
$R=\emptyset$. We first claim that $R=\emptyset$ implies
\[
|V'-L'_{0}|=|V'_{\out}-L'_{0}|=k'.
\]
This holds because $L'_{0}\supseteq V'_{\inn}$, $V'-L'_{0}\subseteq V'_{\out}$,
and $E_{G'}(L'_{0},V'-L'_{0})$ only contains edges of the form $(v_{\inn},v_{\out})$.
Now, there are two cases. If $m'\ge4n'k'$, then we have 
\begin{align*}
m' & =\vol_{G'}^{\out}(L'_{0})+\vol_{G'}^{\out}(V'-L'_{0})\\
& \le2\vol_{G'}^{\out}(L')+|V'_{\out}-L'_{0}|n'\\
& \le2\cdot m'/32+k'n'\\
& \le m'/16+m'/4<m'
\end{align*}
which is a contradiction. Otherwise, we have $m'<4n'k'$. Note that
$n'<2n$ by the construction of $G'$ and so $m'<8nk'$. Hence, as $\deg_{G'}^{\out}(v_{\out})\le|E_{G'}(L',V'-L')| \geq k $ for every $v_{\out}\in L'$ we
have 
\[
\vol_{G'}^{\out}(L'_{0})\ge|L'_{0}\cap V'_{\out}|k'\ge(n-k')k'\ge nk'/2>m'/16
\]
which contradicts $\vol_{G'}^{\out}(L'_{0})\le2\vol_{G'}^{\out}(L')\le2\cdot m'/32 = m'/16$. 
\end{proof}

\subsection{Proof of \Cref{thm:localVC_gap}}

\label{sec:reduc_apply}
Given an $n$-vertex $m$-edge $G=(V,E)$
represented as adjacency lists, a vertex $x\in V$ and parameters
$\nu,k,\gap$ from \Cref{thm:localVC_gap} where $\nu\le (\gap+1) m/(
12480k)$
and $k\le n/4$, we will work on the split graph $G'$ with $n'$-vertices
$m'$-edges as described in \Cref{sec:reduc}. The adjacency list
of $G'$ can be created ``on the fly''. Let $\localEC(x',\nu',k',\gap')$
denote the algorithm from \Cref{thm:localEC_approx_new} with parameters
$x',\nu',k',\gap'$. We run $\localEC(x,3\nu,k,\gap)$ on
$G'=(V',E')$ in time $O(\nu k^2/(\gap+1))$. Note that $3\nu\le
(\gap+1) m/(8k)\le (\gap+1) m'/(8k)$
as required by \Cref{thm:localEC_approx_new}.

We show that if there exists a separation triple $(L,S,R)$ in $G$
where $L \ni x, |S| < k, $ and $ \vol^{\out}_G(L) \leq \nu$, then
$\localEC(x,3\nu,k,\gap)$ outputs $\bot$ with probability at most
$1/4$.  By \Cref{lem:completeness}, there exists $L'$ in $G'$ such
that $L' \ni x, |E_{G'}(L',V-L')| < k, $ and $ \vol^{\out}_{G'}(L') \leq 3
\vol^{\out}_G(L) \leq 3\nu$. Therefore, by \Cref{thm:localEC_approx_new},
$\localEC(x, 3\nu,k,\gap)$ returns $\bot$ with probability at most
$1/4$.

If, in $G'$, $\localEC(x,3\nu,k,\gap)$ returns $L'$, then by
\Cref{thm:localEC_approx_new} we have  $L'\ni x,
|E_{G'}(L',V'-L')|< k + \gap$ and
$\vol_{G'}^{\out}(L')\le 390\nu k /\gap$.  It remains to show that  we can construct $L
\subsetneq V$ in~$G$ such that $N^{\out}_G(L)$ is a vertex-cut, and
$|N^{\out}_G(L)| < k+\gap$. By \Cref{lem:soundness},
we can obtain in $O(\nu k/(\gap+1))$ time and two sets $L$ and $S=N^{\out}(L)$
where $|S|< k+\gap$. Let $R=V-L\cup S$. As $\vol_{G'}^{\out}(L')\le
390\nu k/(\gap + 1) \stackrel{(\ref{eq:precon_vc})} \le m'/32$
and $k+\gap \le 2k \leq n/2$, we have that
$(L,S,R)$ is a separation triple by \Cref{lem:soundness}. That is, $S=
N^{\out}_G(L)$ is a vertex cut. %

\section{Vertex Connectivity}\label{sec:vertex_con_global}

In this section, we give the first near-linear time algorithm for
checking $k$-vertex connectivity for any $k=\tilde{O}(1)$ in both undirected
and directed graphs.    
\begin{theorem}
\label{thm:VC_undir}There is a randomized (Monte Carlo) algorithm
that takes as input an undirected graph~$G$, a cut-size parameter
$k$, and an accuracy parameter $\epsilon\in(0,1]$, and in time $\tilde{O}(m+nk^{2}/\epsilon)$
either outputs a vertex cut of size less than $\left\lfloor (1+\epsilon)k\right\rfloor $
or declares that $G$ is $k$-vertex connected w.h.p. By setting $\epsilon<1/k$,
the same algorithm decides (exact) $k$-vertex connectivity of $G$ in $\tilde{O}(m+nk^{3})$
time.
\end{theorem}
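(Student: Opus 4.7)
The plan is to combine the new local vertex-connectivity algorithm of \Cref{cor:localVC_approx} with the global sampling framework of Nanongkai, Saranurak, and Yingchareonthawornchai, proceeding in three steps.

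\textbf{Preprocessing.} If $G$ has a vertex of degree less than $k$, return its neighborhood as a vertex cut of size less than $k$ in $O(m)$ time. Otherwise, compute a known sparse $k$-vertex-connectivity certificate $G'$ of $G$ with $m' = \tilde{O}(nk)$ edges in $\tilde{O}(m)$ time; from now on we work only on $G'$. This step is what turns the ``$m$'' in the target bound into an additive $\tilde{O}(m)$ term.

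\textbf{Main sampling loop.} For each geometric level $\nu \in \{2^i : 0 \leq i \leq \lceil \log m' \rceil\}$ (interpreted as a guess for $\vol_{G'}(L^*)$, where $L^*$ is the smaller-volume side of a hypothetical minimum vertex cut $(L^*, S^*, R^*)$ of $G'$), sample $r = \Theta((m'/\nu) \log n)$ edges of $G'$ uniformly at random and invoke the algorithm of \Cref{cor:localVC_approx} on each endpoint of each sampled edge with volume parameter $\nu$, cut-size parameter $k$, and accuracy $\epsilon$. For each returned set $S$, verify in $O(m)$ time that $S$ is actually a vertex cut of $G$; if so, output it. If no level produces a valid cut, declare that $G$ is $k$-vertex connected.

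\textbf{Correctness and complexity.} Assume $\kappa_G < k$ and fix a minimum vertex cut $(L^*, S^*, R^*)$ of $G'$ with $\vol_{G'}(L^*) \leq \vol_{G'}(R^*) \leq m'$. At the correct level $\nu^* \in [\vol_{G'}(L^*), 2\vol_{G'}(L^*)]$, a uniformly random edge of $G'$ has an endpoint in $L^*$ with probability at least $\vol_{G'}(L^*)/(2m') \geq \nu^*/(4m')$, so the $r$ samples contain a seed in $L^*$ with probability $1 - n^{-\Omega(1)}$. Conditioned on this event, \Cref{cor:localVC_approx} outputs a vertex cut of size less than $\lfloor (1+\epsilon)k \rfloor$ with probability at least $3/4$, which is boosted to high probability by the $\Theta(\log n)$ independent repetitions absorbed into $r$. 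The per-level work is $r \cdot O(\nu k / \epsilon) = \tilde{O}(m' k / \epsilon) = \tilde{O}(n k^2 / \epsilon)$; summing over the $O(\log m')$ levels of $\nu$ and adding the $\tilde{O}(m)$ preprocessing gives the claimed $\tilde{O}(m + nk^2/\epsilon)$ bound. Setting $\epsilon < 1/k$ forces $\lfloor (1+\epsilon)k \rfloor = k$ and yields $\tilde{O}(m + nk^3)$ for exact $k$-vertex connectivity.

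\textbf{Main obstacle.} The crucial ingredient is the sparsifier, which must preserve small vertex cuts across $G$ and $G'$, have only $\tilde{O}(nk)$ edges, and be computable in $\tilde{O}(m)$ time; without it, edge-sampling directly on $G$ gives only $\tilde{O}(mk/\epsilon)$, which is too slow when $m \gg nk$. Cuts of size less than $k$ transfer directly under the certificate, while approximate cuts of size in $[k, (1+\epsilon)k)$ returned by LocalVC on $G'$ are handled by explicit $O(m)$-time verification on $G$; since almost all LocalVC calls return $\bot$, the aggregate verification cost remains within budget.
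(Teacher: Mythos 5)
Your overall skeleton---Nagamochi--Ibaraki sparsification, then edge sampling at geometric volume levels, then running LocalVC (\Cref{cor:localVC_approx}) on sampled endpoints---matches the paper's route through the framework of \Cref{lem:VC_framework}, and the complexity accounting per level is correct. However, there is a genuine gap in the completeness argument that the paper's framework explicitly handles and your proof does not.

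The LocalVC algorithm of \Cref{cor:localVC_approx} has the precondition inherited from \eqref{eq:precon_vc}: with slack $\gap = \lfloor \epsilon k\rfloor$ it requires $\nu < m'(\gap+1)/(12480\,k)$, which forces $\nu = O(\epsilon m')$ (and since $\gap \leq k$, at most $O(m')$ with a small constant, no matter what slack is chosen). Yet the smaller side $L^*$ of a minimum vertex cut can have $\vol_{G'}(L^*)$ as large as $\Theta(m')$. For any level $\nu$ in your geometric loop that does not exceed this threshold, the hypothesis $\vol_{G'}(L^*) \leq \nu$ is false and LocalVC may legitimately return $\bot$; for levels that would exceed it, the precondition of \Cref{cor:localVC_approx} is violated and the corollary gives no guarantee at all. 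So when $\vol_{G'}(L^*) = \omega(\epsilon m')$, your algorithm can declare $G$ to be $k$-vertex connected despite a cut existing, which breaks correctness. The paper closes this hole precisely via the $T_{\textpair}$ term in \Cref{lem:VC_framework}: when $\vol^{\out}(L^*) \geq \bar\nu$ (and, WLOG, $\vol^{\out}(R^*) \geq \bar\nu$ as well), it samples $\tilde O(m'/\bar\nu)$ \emph{pairs} of edges, which w.h.p.\ hits a pair $(x,y)$ with $x \in L^*$ and $y \in R^*$, and runs an exact $O(m'k)$-time Ford--Fulkerson max-flow for each pair; with $\bar\nu = \Theta(\epsilon m')$ this costs $\tilde O(m'/\bar\nu)\cdot O(m'k) = \tilde O(m'k/\epsilon) = \tilde O(nk^2/\epsilon)$ and stays within budget. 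You need to add this pair-sampling/max-flow branch (or justify why the local algorithm alone suffices, which it does not as stated).

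A smaller remark: your extra $O(m)$-time verification of each candidate cut on $G$ is not in the paper and is most likely unnecessary---the Nagamochi--Ibaraki certificate preserves the small vertex cuts themselves, not merely the Boolean answer to $k$-vertex connectivity, so a cut of size $<k$ found in $G'$ is already a cut of $G$. As written, your claim that ``almost all LocalVC calls return $\bot$'' is not justified (there is no bound on false-positive cuts in the statement of \Cref{cor:localVC_approx} when $G$ is not $k$-vertex connected on the relevant side), so if you keep the verification step you need an argument bounding its aggregate cost; removing it is cleaner.
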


Now we present the new results for directed graph. 
\begin{theorem}
\label{thm:VC_dir}There is a randomized (Monte Carlo) algorithm that
takes as input a directed graph~$G$, a cut-size parameter $k$,
and an accuracy parameter $\epsilon\in(0,1]$, and in time \\
$\tilde{O}(\min\{mk/\epsilon ,  \poly(1/\epsilon) n^{2+o(1)}\sqrt{k}\})$
either outputs a vertex cut of size less than $\left\lfloor (1+\epsilon)k\right\rfloor $
or declares that $G$ is $k$-vertex connected w.h.p.  For exact vertex connectivity,  there is a randomized  (Monte Carlo) algorithm for exact $k$-vertex connectivity of $G$ in $\tilde{O}(  \min\{ mk^2, k^3n + k^{3/2}m^{1/2}n \} )$ time.
\end{theorem}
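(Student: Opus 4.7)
The plan is to apply the framework of Nanongkai, Saranurak, and Yingchareonthawornchai~\cite{NanongkaiSY19} using our new local vertex connectivity algorithms as the main subroutine. Fix a minimum separation triple $(L^*, S^*, R^*)$ of $G$ with $|S^*| < k$. Since $E(L^*, R^*)=\emptyset$, all out-edges of $L^*$ end in $S^*$, hence at least one of $\vol_G^{\out}(L^*)$ and $\vol_G^{\inn}(R^*)$ is at most $m/2$. It therefore suffices to run the algorithm twice, once on $G$ (searching for a small out-volume $L^*$) and once on the edge-reversed graph (searching for the symmetric case of small in-volume $R^*$), returning the better cut; only the first instance is described below.

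For the $\tilde O(mk/\epsilon)$ bound, iterate $\nu$ over powers of $2$ from $k$ up to $m/2$. For each guess $\nu$, sample $\tilde O(m/\nu)$ seed vertices, each drawn with probability proportional to its out-degree, so that when $\nu\in[\vol^{\out}(L^*),2\vol^{\out}(L^*))$ at least one seed lands in $L^*$ with high probability. For each sampled seed $x$, invoke LocalVC from \Cref{cor:localVC_approx} with volume parameter $2\nu$, cut-size parameter $k$, and accuracy $\epsilon$. A single call costs $\tilde O(\nu k/\epsilon)$, so the per-guess total is $\tilde O((m/\nu)\cdot(\nu k/\epsilon)) = \tilde O(mk/\epsilon)$, and summing over the $O(\log n)$ guesses preserves the bound. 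For the exact bound $\tilde O(mk^2)$, substitute \Cref{cor:localVC_exact}, which is equivalent to setting $\epsilon \approx 1/k$.

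For the dense-regime bound $\tilde O(\poly(1/\epsilon)\, n^{2+o(1)}\sqrt{k})$, a purely local approach is wasteful once $\vol^{\out}(L^*)$ is large, so I would split at a threshold $\nu^*$ and change the sampling strategy to use \emph{vertex count} rather than edge volume: when the side $L^*$ has many vertices, we hit it with few uniformly sampled seeds, and for each pair we either run LocalVC (if $\vol^{\out}(L^*)$ is still moderate) or compute a $(1+\epsilon)$-approximate max-flow to a uniformly sampled partner vertex (if $\vol^{\out}(L^*)$ is too large). The max-flow is computed in the in/out-split auxiliary graph from \Cref{sec:reduc}, and a nearly-linear-time approximate max-flow routine costs $\tilde O(m/\poly(\epsilon)) = \tilde O(n^2/\poly(\epsilon))$ per call; multiplying by $\tilde O(n/\nu^*)$ sampled pairs and choosing $\nu^*$ to equalize the local and flow contributions yields the claimed bound. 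The exact variant $\tilde O(k^3 n + k^{3/2} m^{1/2} n)$ follows by substituting a Gabow-style $\tilde O(m\sqrt{k})$ exact bounded $k$-flow and \Cref{cor:localVC_exact}; the two summands reflect the local and flow phases respectively.

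The main obstacle is the second bound: correctly balancing the local and flow regimes so that $\tilde O(n/\nu^*)$ uniformly drawn pairs suffice to intersect both $L^*$ and $R^*$ with high probability, and verifying that the flow subroutine's $\sqrt k$-dependence combines with the local $\nu^* k^2$ cost to give the stated sum. A secondary subtlety is the directed bookkeeping — running LocalVC on the edge-reversed graph for the symmetric case, and taking care that out-degree-weighted sampling is used on $G$ while in-degree-weighted sampling is used on its reverse, and that the $(1+\epsilon)$-slack returned by \Cref{cor:localVC_approx} is treated consistently between the two runs.
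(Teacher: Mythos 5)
Your first bound $\tilde O(mk/\epsilon)$ and the exact $\tilde O(mk^2)$ are derived the same way the paper does — degree-weighted seed sampling (the edge-sampling variant of the NSY19 framework, \Cref{eq:edge-sampling}) with LocalVC from \Cref{cor:localVC_approx}/\Cref{cor:localVC_exact} and Ford-Fulkerson for the pairwise check — so those parts are fine.

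The remaining two bounds have genuine gaps. For the $\tilde O(\poly(1/\epsilon) n^{2+o(1)}\sqrt{k})$ bound, you invoke ``a nearly-linear-time approximate max-flow routine'' running in $\tilde O(m/\poly(\epsilon))$ time. No such routine is known for \emph{directed} graphs; the paper instead plugs in the Chuzhoy--Khanna $(1+\epsilon)$-approximate $st$-vertex-cut algorithm, which runs in $n^{2+o(1)}\poly(1/\epsilon)$ time regardless of $m$ and is specifically a vertex-cut (not edge-flow) algorithm. The paper then uses the node-sampling form of the framework (\Cref{eq:node-sampling}) with $\bar\sigma = n/\sqrt{k}$ and, importantly, only for $k \le n^{2/3}$ — for $k > n^{2/3}$ it falls back to a separate $\ot(\poly(1/\epsilon) n^{3+o(1)}/k)$-time algorithm from NSY19. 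Your proposal omits this range split entirely, and without $k\le n^{2/3}$ the local term $n k^{2}/\epsilon$ arising from the $\bar\sigma k$ part of the volume does not stay under the claimed bound.

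For the exact $\tilde O(k^3 n + k^{3/2}m^{1/2}n)$ bound, you propose substituting a ``Gabow-style $\tilde O(m\sqrt{k})$ exact bounded $k$-flow.'' The paper does not do this: it again applies \Cref{eq:node-sampling} with plain Ford-Fulkerson $T_{\text{pair}} = O(mk)$ and $T_{\text{local}}(\nu) = O(\nu k^2)$, taking $\bar\sigma = \Theta(\sqrt{m/k})$, which balances to exactly the stated $\tilde O(k^{3/2}m^{1/2}n + k^3 n)$. (An $\tilde O(m\sqrt k)$ unit-capacity flow bound, if used, would actually give a stronger $\tilde O(k m^{1/2} n + k^3 n)$ — so it isn't wrong, but it is neither what the paper does nor what the theorem claims, and you would still need to justify and verify the balance, which you explicitly flag as unresolved.) In short: the framework choice is right, but you have not pinned down the pairwise $st$-cut subroutines, the $k$-range split, or the parameter $\bar\sigma$ that make the last two bounds go through.
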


To prove \Cref{thm:VC_undir} and \Cref{thm:VC_dir}, we will apply
our framework \cite{NanongkaiSY19} for reducing the vertex connectivity
problem to the local vertex connectivity problem. To describe the
reduction, let $T_{\text{pair}}(m,n,k,\epsilon,p)$ be the time required to either find, for given vertices $s$ and $t$,
an $st$-vertex cut of size less than $\left\lfloor (1+\epsilon)k\right\rfloor $ or to declare
that $s$ and $t$ are $k$-vertex connected correctly with probability at least $1-p$.
Let $T_{\text{local}}(\nu,k,\epsilon,p)$ be the time for solving
correctly with probability at least $1-p$ the local vertex connectivity
problem from \Cref{cor:localVC_approx} when a volume parameter is
$\nu$, the cut-size parameter is $k$, and the accuracy parameter
is $\epsilon$.
\begin{lemma}
[\cite{NanongkaiSY19} Lemma 5.14, 5.15]\label{lem:VC_framework} There is a randomized
(Monte Carlo) algorithm that takes as input a graph $G$, a cut-size
parameter $k$, and an accuracy parameter $\epsilon>0$, and runs
in time proportional to one of these expressions
\begin{align} 
\tilde{O}(m/\overline{\nu})\cdot(T_{\textpair}(m,n,k,\epsilon,1/\poly(n))&+T_{\textlocal}(\overline{\nu},k,\epsilon,1/\poly(n))) \label{eq:edge-sampling} \\ 
\ot (n/\overline{\sigma})\cdot(T_{\textpair}(m,n,k,\epsilon,1/\poly(n)) &+  T_{\textlocal}(\overline{\sigma}^2 + \overline{\sigma}k,k,\epsilon,1/\poly(n))) \label{eq:node-sampling} %
\end{align} 

where $\overline{\nu}\le m,$ and $ \overline{\sigma}\le n$ are parameters that can be
chosen arbitrarily, and either outputs a vertex cut of size less than $\left\lfloor (1+\epsilon)k\right\rfloor $
or declares that $G$ is $k$-vertex connected w.h.p.
\end{lemma}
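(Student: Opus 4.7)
The plan is to follow the reduction framework of \cite{NanongkaiSY19} essentially verbatim, establishing each of the two running-time expressions separately by a common template: enumerate candidate scales for the smaller side of the cut by doubling, and at each scale combine uniform sampling of ``seed'' candidates with two kinds of subroutines \textemdash{} local vertex connectivity calls and pairwise $st$-vertex connectivity calls. I may assume $G$ is not $k$-vertex connected and fix a separation triple $(L^*, S^*, R^*)$ with $|S^*| < k$ and $\vol^{\out}(L^*) \leq \vol^{\out}(R^*)$ (and analogously $|L^*| \leq |R^*|$ for node sampling); whenever the algorithm returns a certificate of size less than $\lfloor (1+\epsilon) k \rfloor$ we are done, so I only need to argue that some subroutine finds such a certificate w.h.p.

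For \Cref{eq:edge-sampling}, at each guessed scale $\overline{\nu}$, the algorithm samples $\tilde{\Theta}(m/\overline{\nu})$ edges uniformly. For the tail $x$ of each sampled edge, it invokes $T_{\textlocal}(\overline{\nu}, k, \epsilon, 1/\poly(n))$ (via \Cref{cor:localVC_approx}) and also $T_{\textpair}(m, n, k, \epsilon, 1/\poly(n))$ between $x$ and each member of a fixed ``sink'' set of $k+1$ canonical vertices. I would split the analysis into two regimes for the correct scale (the one with $\overline{\nu} \asymp \vol^{\out}(L^*)$): in the regime $\vol^{\out}(L^*) \geq \overline{\nu}$, a Chernoff/coupon-collector bound gives that some sampled tail lies in $L^*$ w.h.p., and the local VC call then finds the cut by \Cref{cor:localVC_approx}; in the complementary regime (smaller guesses of $\overline{\nu}$), the pigeonhole principle ensures that the sink set intersects $R^*$, so the pairwise call between some sampled $x \in L^*$ and that sink vertex returns a cut. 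For \Cref{eq:node-sampling}, I would repeat the argument but sample $\tilde{\Theta}(n/\overline{\sigma})$ \emph{vertices} uniformly and feed them to $T_{\textlocal}(\overline{\sigma}^2 + \overline{\sigma} k, k, \epsilon, \cdot)$; the exponent in the volume parameter is justified by noting that a set $L^*$ with $|L^*| \leq \overline{\sigma}$ and $|N^{\out}(L^*)| = |S^*| < k$ satisfies $\vol^{\out}(L^*) \leq |L^*|(|L^*| - 1 + |S^*|) \leq \overline{\sigma}^2 + \overline{\sigma} k$.

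Summing the costs over all $O(\log n)$ scales yields the stated bounds, absorbing the $\log$ factors into the $\tilde{O}(\cdot)$. The main obstacle I anticipate is the borderline regime in which $L^*$ is too small for sampling to hit it reliably yet too large to be handled trivially by the sink set \textemdash{} this is precisely where the hybrid of local and pairwise VC calls is needed, and the careful choice of sink set (together with the doubling enumeration of scales) ensures that exactly one of the two subroutines succeeds at the correct scale. Since the local VC subroutine is called $\tilde{O}(m/\overline{\nu})$ (respectively $\tilde{O}(n/\overline{\sigma})$) times and each call has success probability $3/4$ per \Cref{cor:localVC_approx}, I would boost to $1/\poly(n)$ success by $O(\log n)$ repetitions, which only affects hidden polylog factors.
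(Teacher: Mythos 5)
Your proposal follows the same high-level template as the paper (enumerate scales, sample seeds, mix local-VC and pairwise-VC calls, boost success probability) and your accounting for \cref{eq:node-sampling} via $\vol^{\out}(L^*) \le |L^*|(|L^*| - 1 + |S^*|) \le \overline{\sigma}^2 + \overline{\sigma}k$ is correct. However, there are two concrete problems with how you dispatch the two subroutines.

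First, you have the regimes backwards. You say that in the regime $\vol^{\out}(L^*) \ge \overline{\nu}$ the local-VC call with volume parameter $\overline{\nu}$ finds the cut. But \cref{cor:localVC_approx} only guarantees success when there exists a separation triple $(L',S',R')$ with $\vol^{\out}(L') \le \nu$; if $\vol^{\out}(L^*)$ strictly exceeds the volume parameter, the local algorithm is allowed to return $\bot$. In the paper's proof, $\overline{\nu}$ is a fixed trade-off parameter, and the two cases are exactly the opposite of yours: when $\vol^{\out}(L^*) \ge \overline{\nu}$, since $\vol^{\out}(R) \ge \vol^{\out}(L^*) \ge \overline{\nu}$ as well, sampling $\tilde O(m/\overline{\nu})$ edges hits \emph{both} $L^*$ and $R^*$ w.h.p., and a pairwise $st$-cut call between the two tails finds the cut; when $\vol^{\out}(L^*) < \overline{\nu}$, a sub-enumeration over scales $2^i \le \overline{\nu}$ locates the scale at which $\tilde O(m/2^i)$ samples hit $L^*$, and the local call with volume parameter $2^i \ge \vol^{\out}(L^*)$ succeeds. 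That sub-enumeration is what justifies the $T_{\textlocal}(\overline{\nu}, \ldots)$ term in the bound, because $\tilde O(m/2^i)\cdot T_{\textlocal}(2^i,\ldots) = \tilde O(m/\overline{\nu})\cdot T_{\textlocal}(\overline{\nu},\ldots)$ as long as $T_{\textlocal}$ grows at least linearly in its volume parameter. Your description collapses that two-level structure (fixed $\overline{\nu}$, then a doubling loop up to $\overline{\nu}$) into a single enumeration, which is why the case labels got inverted.

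Second, your pairwise mechanism via a ``fixed sink set of $k+1$ canonical vertices'' does not match the paper and, as stated, does not establish correctness. Pigeonhole with $|S^*| < k$ only gives you a sink vertex that avoids $S^*$; it could well lie in $L^*$, in which case a pairwise call from $x \in L^*$ to that sink returns no cut (and the reverse-direction call returns no cut either, for the same reason). The paper instead samples \emph{pairs} of edges and tries every sampled pair, relying on the observation that \emph{both} sides have volume $\ge \overline{\nu}$ in that case so that, over $\tilde O(m/\overline{\nu})$ samples, some sampled tail lies in $L^*$ and another lies in $R^*$ w.h.p. If you want to salvage a sink-set variant, you would need an argument that actually produces a vertex guaranteed (or very likely) to be in $R^*$, which the simple $k+1$-vertex pigeonhole does not do on its own.

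Beyond these two points, your proposal is consistent with the framework: boosting the local algorithm's $3/4$ guarantee to $1 - 1/\poly(n)$ by $O(\log n)$ repetitions, absorbing the $O(\log n)$ many scales into $\tilde O(\cdot)$, and handling $G$ and its reverse graph to justify assuming w.l.o.g.\ $\vol^{\out}(L^*) \le \vol^{\out}(R^*)$ are all exactly the moves the paper makes.
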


For completeness, we give a simple proof sketch of \Cref{eq:edge-sampling} which is used
for our algorithm for undirected graphs. The idea for other equations
is similar and also simple.
\begin{proof}[Proof sketch]
	Suppose that $G$ is not $k$-vertex connected. It suffices to give
	an algorithm that outputs a vertex cut of size less than $\left\lfloor (1+\epsilon)k\right\rfloor $
	w.h.p. By considering both $G$ and its reverse graph (where the direction
	of each edge is reversed), there exists w.l.o.g.~a separation triple
	$(L,S,R)$ where $\vol^{\out}(L)\le\vol^{\out}(R)$. There are two
	cases.
	
	Suppose $\vol^{\out}(L)\ge\overline{\nu}$. By sampling $\tilde{O}(m/\overline{\nu})$
	pairs of edges $e=(x,x')$ and $f=(y,y')$, there exists w.h.p.~a pair
	$(e,f)$ where $x\in L$ and $y\in R$. For such a pair $(x,y)$, if
	we check whether $x$ and~$y$ are $k$-vertex connected in time $T_{\text{pair}}(m,n,k,\epsilon,1/\text{\ensuremath{\poly}} (n))$,
	we must obtain an $xy$-vertex cut of size less than $\left\lfloor (1+\epsilon)k\right\rfloor $.
	So, if we check this for each pair $(x,y)$, then we will obtain the
	cut w.h.p. 
	
	Suppose $\vol^{\out}(L)\le\overline{\nu}$. Suppose further that $\vol^{\out}(L)\in(2^{i-1},2^{i}]$. 
	By sampling $\tilde{O}(m/2^{i})$ pairs of edges $e=(x,x')$,
	there exists w.h.p.\ an edge $e$ where $x\in L$. For such vertex $x$,
	if we check the local vertex connectivity in time $T_{\text{local}}(2^{i},k,\epsilon,1/\text{\ensuremath{\poly}} (n))$,
	then the algorithm must return a vertex cut of size less than $\left\lfloor (1+\epsilon)k\right\rfloor $.
	So, if we check this for each pair $(x,y)$, then we will obtain the
	cut w.h.p.
	
	To conclude, the running time in the first case is $\tilde{O}(m/\overline{\nu})\cdot T_{\text{pair}}(m,n,k,\epsilon,1/\text{\ensuremath{\poly}} (n))$. For the second case, we try all $O(\log n)$ many $2^i$, and each such try takes time $\tilde{O}(m/2^{i})\cdot T_{\text{local}}(2^{i},k,\epsilon,1/\text{\ensuremath{\poly}} (n))=\tilde{O}(m/\overline{\nu})\cdot T_{\text{local}}(\overline{\nu},k,\epsilon,1/\text{\ensuremath{\poly}} (n))$
	(if $T_{\text{local}}(\nu,k,\epsilon,1/\text{\ensuremath{\poly}} (n)) = \Omega (\nu) $). This completes the proof of the running time.
	For the correctness, if $G$ is not $k$-vertex connected, we must obtain a desired vertex cut of size $\left\lfloor (1+\epsilon)k\right\rfloor $ w.h.p. So if we do not find any cut, we declare that $G$ is $k$-vertex connected w.h.p.
\end{proof}

\subsection{Undirected Graphs}

Here, we prove \Cref{thm:VC_undir}. First, it suffices to
provide an algorithm with $\tilde{O}(mk/\epsilon)$ time. Indeed, by using the
sparsification algorithm by Nagamochi and Ibaraki \cite{NagamochiI92},
we can sparsify an undirected graph in linear time so that $m=O(nk)$
and $k$-vertex connectivity is preserved. By this preprocessing, the total
running time is $O(m)+\tilde{O}((nk)k/\epsilon))=\tilde{O}(m+nk^{2}/\epsilon)$
as desired. Next, we assume that $k\le\min\{n/4,5\delta\}$ where
$\delta$ is the minimum out-degree of $G$. If $k>5\delta$, then
it is $G$ is clearly not $k$-vertex connected and the out-neighborhood of the
vertex with minimum out-degree is a vertex cut of size less than $k$.
If $k>n/4$, then we can invoke the algorithm by Henzinger, Rao and
Gabow \cite{HenzingerRG00} for solving the problem exactly in time
$O(mn)=O(mk)$.

Now, we have $T_{\text{pair}}(m,n,k,\epsilon,p)=O(mk)$ by the Ford-Fulkerson
algorithm. By repeating the algorithm from \Cref{cor:localVC_approx} $O(\log\frac{1}{p})$
times for boosting its success probability, $T_{\text{local}}(\nu,k,\epsilon,p)=O(\nu k\epsilon^{-1}\log\frac{1}{p})$.
We choose $\overline{\nu}=O(\epsilon m)$ as required by \Cref{cor:localVC_approx}
and also $k\le\min\{n/4,5\delta\}$. Applying \Cref{lem:VC_framework} (\Cref{eq:edge-sampling}),
we obtain an algorithm for \Cref{thm:VC_undir} with running time
\[
\tilde{O}(m/\epsilon m)\cdot O(mk+(\epsilon m)k\epsilon^{-1}\log n)=\tilde{O}(mk/\epsilon).
\]

\subsection{Directed Graphs}
Here, we prove \Cref{thm:VC_dir}.
We again assume that $k\le\min\{n/4,5\delta\}$ using the same reasoning as in the undirected case.
We first show how to obtain the claimed time bound for the approximate problem.
Note that the $\tilde{O}(mk/\epsilon)$-time algorithm follows by the same argument as in the undirected case, 
because both the Ford-Fulkerson algorithm and the local algorithm from \Cref{cor:localVC_approx} work as well in directed graphs.

Next, we give an approximation algorithm with running time $\tilde{O}(\poly(1/\epsilon) n^{2+o(1)}\sqrt{k})$. 
We assume $k \le n^{2/3}$  (for $k \ge n^{2/3}$, we use state-of-the-art $\ot( \poly(1/\epsilon) n^{3+o(1)}/k )$-time algorithm by \cite{NanongkaiSY19}). We have $T_{\text{local}}(\nu,k,\epsilon,p)=O(\nu k\epsilon^{-1}\log\frac{1}{p})$ by \Cref{cor:localVC_approx} and
$T_{\text{pair}}(m,n,k,\epsilon,1/\poly (n))= \ot(\poly(1/\epsilon) n^{2+o(1)})$ using the recent result for $(1+\eps)$-approximating the minimum $st$-vertex cut by Chuzhoy and Khanna \cite{ChuzhoyK19}. By choosing $\overline{\sigma} = n/\sqrt{k}$ for \Cref{lem:VC_framework} (\Cref{eq:node-sampling}), we obtain an algorithm with running time
\begin{align*}
\tilde{O}(n/\overline{\sigma})\cdot(n^{2+o(1)}\poly(1/\epsilon)+(\overline{\sigma}^{2}k+\overline{\sigma}k^{2})/\epsilon) 
& =\tilde{O}(\sqrt{k}\poly(1/\epsilon))\cdot(n^{2+o(1)}+n^{2}+nk^{1.5})\\
& =\tilde{O}(n^{2+o(1)}\sqrt{k}\poly(1/\epsilon)).
\end{align*}

Next, we show how to obtain the time bound for the exact problem. First, observe that we can obtain a $\ot(mk^2)$-time exact algorithm from the $\ot(mk/\epsilon)$-time approximate algorithm by setting $\epsilon < 1/k$. %
It remains to provide an algorithm with the running time $\ot(k^3n + k^{3/2}m^{1/2}n)$.
By \Cref{cor:localVC_exact}, there is an exact algorithm for local vertex connectivity with running time $T_{\text{local}}(\nu,k,1/2k,p)=O(\nu k^2 \log\frac{1}{p})$. 
Also, we have $T_{\text{pair}}(m,n,k,\epsilon,p)= O(mk)$ by the Ford-Fulkerson algorithm. By choosing $\overline{\sigma}  = O(\sqrt{m/k})$ in \Cref{lem:VC_framework} (\Cref{eq:node-sampling}), we obtain an algorithm with running time 
\begin{align*}
\ensuremath{\ot(n/\overline{\sigma})\cdot(mk+(\overline{\sigma}^{2}+\overline{\sigma}k)k^{2})} & =\ot(n/\overline{\sigma})\cdot(mk+(m/k+\sqrt{mk})k^{2})\\
& =\ot(n\sqrt{k/m})\cdot(mk+k^{2.5}\sqrt{m})\\
& =\ot(k^{3/2}m^{1/2}n+k^{3}n).
\end{align*}
Note that $\overline{\sigma}^{2}+\overline{\sigma}k = O(m/k)$ as required by \Cref{cor:localVC_exact}.

	\section{Property Testing} \label{sec:property-testing}

In this section, we give property testing algorithms for
distinguishing between a graph that is $k$-edge/$k$-vertex connected and a graph that is 
$\epsilon$-far from having such property with correct probability at
least $2/3$ for both unbounded-degree and bounded-degree incident-list model.
Recall that for any $\epsilon > 0$, a directed
graph $G$ is $\epsilon$-far from having a property $P$ if at least
$\epsilon m$ edge modifications are needed to make $G$ satisfy
property $P$.  We assume that $\davg = m/n$ is known to the
algorithm at the beginning. %
We state our main results in this section. 

\begin{theorem} \label{thm:main-property-testing}
For the unbounded-degree model, there is a one-sided property testing algorithm
for $k$-edge ($k$-vertex) connectivity  where
$k < \ot (\epsilon n)$ with false reject probability at most $1/3$ that uses
$\ot(k^2/(\epsilon^2 \davg))$ queries (same for $k$-vertex).  If $\davg$ is unknown, then there is
a similar algorithm that uses $\ot(k/\epsilon^2)$ queries (same for
$k$-vertex).   If $G$ is simple, then
the same algorithm for testing $k$-edge connectivity queries at most
  $\ot(\min\{ k^2/(\davg\epsilon^2), k/(\davg\epsilon^3) \})$  (or $\ot(\min\{k/\epsilon^2, 1/\epsilon^3 \})$ edges if $\davg$ is unknown).
\end{theorem}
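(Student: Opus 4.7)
The plan is to apply the standard one-sided property-testing paradigm on top of \Cref{cor:localEC_exact_new} and \Cref{cor:localVC_exact}: sample random vertices and invoke the local cut-detection algorithm from each. Because these local algorithms only ever output a genuine cut of size less than $k$, the tester never rejects a $k$-connected graph, so the one-sided error guarantee is automatic. The work is to show that when $G$ is $\epsilon$-far from $k$-connected, a small number of sampled seeds suffices to unearth a certifying cut with constant probability.

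The combinatorial core is a decomposition lemma: if $G$ is $\epsilon$-far from being $k$-edge-connected, then there exist pairwise disjoint vertex sets $L_1,\dots,L_t \subsetneq V$ with $t = \Omega(\epsilon m/k)$ and $|E(L_i, V\setminus L_i)| < k$ for every $i$. I would prove this by considering the maximal $k$-edge-connected partition of $V$: consecutive parts in any ordering can be stitched into a single $k$-edge-connected graph by $O(k)$ edge insertions per pair, so if there were fewer than $\epsilon m / (c k)$ parts for a suitable constant $c$, fewer than $\epsilon m$ modifications would suffice, contradicting $\epsilon$-farness. An analogous statement for vertex connectivity yields a disjoint family of separation triples $(L_i,S_i,R_i)$ with $|S_i|<k$. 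Since the $L_i$ are pairwise disjoint, $\sum_i \vol^{\mathrm{out}}(L_i) \leq m$, so by a median / dyadic averaging a constant fraction of them lie in some scale $\nu^* = O(k/\epsilon)$.

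Equipped with this, the tester iterates over dyadic volume guesses $\nu = 2^j$ for $j = 0,\dots,O(\log m)$. At scale $\nu$ it samples $q_\nu$ seeds (uniformly if $\bar d$ is unknown, proportionally to degree via uniform edge sampling if $\bar d$ is known) and from each sampled seed $x$ runs the exact local algorithm of \Cref{cor:localEC_exact_new} or \Cref{cor:localVC_exact} with parameters $(\nu, k)$; the tester rejects if any call returns a nonempty set. At the correct scale $\nu^*$ the sample size is tuned so that the expected number of sampled seeds landing in a small-volume bad side is $\Omega(\log n)$, producing a cut with high probability. Invoking the WLOG assumptions that $\bar d \geq k$ (otherwise $G$ has average degree below $k$ and so cannot be $k$-connected) and $d \leq k/\epsilon$ (otherwise a single high-degree vertex already witnesses the required $\epsilon m$ modifications by itself), balancing the product $q_\nu \cdot \nu \cdot k$ across scales yields $\tilde O(k^2/(\epsilon^2 \bar d))$ queries when $\bar d$ is known and $\tilde O(k/\epsilon^2)$ when it is not; the $k$-vertex case follows identically via \Cref{cor:localVC_exact}.

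The simple-graph improvement for $k$-edge connectivity exploits that in a simple directed graph any bad side $L$ with outgoing volume $\nu$ must contain at least $\Omega(\sqrt{\nu})$ distinct vertices, so very small sides must have very small volume and can therefore be detected by cheap direct neighborhood probes; combining this geometric restriction with the general sampling scheme yields the $\tilde O(\min\{k/\epsilon^2, 1/\epsilon^3\})$ bound. The main obstacle will be establishing the decomposition lemma tightly, especially for vertex connectivity where maximal $k$-vertex-connected components need not partition $V$ cleanly; carefully peeling off a disjoint family of witness sides $L_i$ while controlling the sum of their outgoing volumes is what enables the dyadic sampling analysis to match the target query complexity.
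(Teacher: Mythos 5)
Your high-level template — sample random seeds, run a local cut detector from each, and argue via a decomposition of $\epsilon$-far graphs into many disjoint witness sets — is the same as the paper's, and the decomposition you state (pairwise disjoint $L_i$ with $|E(L_i, V\setminus L_i)|<k$ and $t=\Omega(\epsilon m/k)$) is a valid consequence of the Orenstein--Ron characterization. But the proof as sketched loses a factor of $k$ and does not reach the claimed bound. You plan to invoke the \emph{exact} local procedures (\Cref{cor:localEC_exact_new}, \Cref{cor:localVC_exact}), each of which costs $O(\nu k)$ queries. At the volume scale $\nu^*=\Theta(k/\epsilon)$ you identify, a single call already costs $O(k^2/\epsilon)$; with $\Omega(\epsilon m/k)$ witness sets covering $\Omega(\epsilon m/k)$ vertices you still need $\ot(k/(\epsilon\davg))$ samples to hit one, so the total is $\ot(k^3/(\epsilon^2\davg))$ — one $k$ too many. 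The paper closes this gap by using the \emph{slack} variant of the local algorithm (\Cref{lem:gaplocalec}), with query cost $\ot(\nu k/(\gap+1))$, and by bucketing the witness family of \Cref{thm:eps-far-edge} not by volume but by \emph{deficiency} $k-d^{\textout}(X)\in[2^i,2^{i+1})$. Sets with large deficiency have large cut slack, so one sets $\gap_i\approx 2^i$ and caps the volume at $\nu_i=O(2^i/\epsilon)$, making each local call cost $\ot(k/\epsilon)$ \emph{independently of the bucket}. Your flattened decomposition ("$\Omega(\epsilon m/k)$ sets with cut $<k$") discards precisely the deficiency information that powers this trade-off; this is the missing idea, and without it the target complexity is genuinely out of reach.

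Two further points. First, you propose sampling seeds "proportionally to degree via uniform edge sampling" when $\davg$ is known, but in the unbounded-degree model near-uniform edge sampling already requires $\Omega(n/\sqrt m)$ queries, which would dominate your budget; the paper samples \emph{vertices} uniformly in both the $\davg$-known and $\davg$-unknown regimes (only the number of samples changes), and reserves edge sampling for the bounded-degree model where $d$-regularity makes it free (\Cref{pro:sample-edge}). Second, for vertex connectivity the disjointness you need is not automatic: the witness family from \Cref{thm:eps-far-vertex} consists of pairwise \emph{independent} separation triples, and the paper invokes a lemma of Frank and Jord\'an (\Cref{lem:LR-disjoint}) asserting that the \emph{small} left (resp.\ right) partitions in such a family are pairwise disjoint, which is what justifies the vertex-sampling hit probability. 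Your simple-graph intuition ($\vol^{\textout}(X)\leq d^{\textout}(X)+|X|^2$) is the right germ, but the paper funnels it into a count of low-degree vertices when $\epsilon>4/k$ (\Cref{lem:lots-singleton}), detected by the degree check in step 2 of the tester, rather than a separate direct-probe regime.
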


In the bounded-degree model, we assume that $d$ is known in the beginning. 
\begin{theorem} \label{thm:main-property-testing2} 
For the bounded-degree model, there is a one-sided property testing algorithm
for $k$-edge ($k$-vertex) connectivity where $k < \ot (\epsilon n)$
with false reject probability at most $1/3$ that uses
$\ot(k/\epsilon)$ queries (same for $k$-vertex).  If $G$ is simple, then
the same algorithm for testing $k$-edge connectivity queries at most
 $\ot(\min\{ k/\epsilon, 1/\epsilon^2 \})$. 
\end{theorem}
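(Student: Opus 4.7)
The plan is to derive \Cref{thm:main-property-testing2} by specializing the unbounded-degree tester of \Cref{thm:main-property-testing} to the bounded-degree incident-list model, exploiting the fact that the maximum degree~$d$ is part of the input.

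\emph{Step~1 (WLOG $k \le d \le O(k/\epsilon)$).} I first argue two trivial WLOG reductions. The lower bound $d \ge k$ is necessary, since a vertex of out- or in-degree below~$k$ is itself a cut of size $<k$ and can be certified from the single size of one incident list. The upper bound $d \le O(k/\epsilon)$ can be assumed because, when $d > k/\epsilon$, the modification budget $\epsilon n d$ in the bounded-degree definition of ``$\epsilon$-far'' already exceeds what is needed to transform any $n$-vertex graph of maximum degree~$d$ into a $k$-edge- (or $k$-vertex-) connected one, so no graph is $\epsilon$-far in that regime. This matches the remark in the footnote to \Cref{thm:intro:vertex-testing-bounded}.

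\emph{Step~2 (bounded-far implies unbounded-far).} Since $m \le n d$, any graph that is $\epsilon$-far from $k$-connectivity in the bounded-degree sense (needing $\ge \epsilon n d$ modifications) is automatically $\epsilon$-far in the unbounded-degree sense (needing $\ge \epsilon m$ modifications). Hence the tester of \Cref{thm:main-property-testing}, invoked on the same graph with the same parameters $\epsilon$ and $k$, is already a valid one-sided bounded-degree tester with error probability at most $1/3$, and the only thing to check is the query complexity.

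\emph{Step~3 (arithmetic).} Plugging $\davg \le d \le O(k/\epsilon)$ into the query complexities from \Cref{thm:main-property-testing} gives the claimed bounds: the general $k$-edge- (resp.\ $k$-vertex-) connectivity bound $\ot(k^2/(\epsilon^2 \davg))$ simplifies to $\ot(k/\epsilon)$, and the improved simple-graph bound $\ot(\min\{k^2/(\davg \epsilon^2), k/(\davg \epsilon^3)\})$ simplifies to $\ot(\min\{k/\epsilon, 1/\epsilon^2\})$. The same chain of reductions applies uniformly to both the edge and vertex variants because the corresponding unbounded-degree bounds have the same shape.

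\emph{Main obstacle.} The crux of the proof is the upper-bound WLOG of Step~1: one has to exhibit, for every relevant graph class (directed and undirected, simple and multigraphs), an explicit degree-respecting construction of a $k$-edge- (or $k$-vertex-) connected graph on~$n$ vertices that uses fewer than $\epsilon n d$ edge modifications whenever $d > k/\epsilon$. For undirected multigraphs a $k$-regular expander overlay does the job, but for the simple and directed variants one must additionally avoid creating parallel edges and carefully balance in- and out-degrees under the modification budget. Once this reduction is secured, Steps~2 and~3 are syntactic and the theorem follows directly.
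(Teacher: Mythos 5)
Your Step~3 contains an arithmetic error that the rest of the argument cannot repair, and it is not a minor slip: it is exactly where the bounded-degree model is supposed to buy an extra factor of $1/\epsilon$ over the unbounded-degree bound. You write that plugging $\davg \le d \le O(k/\epsilon)$ into $\ot\bigl(k^2/(\epsilon^2\davg)\bigr)$ yields $\ot(k/\epsilon)$, but the query complexity is \emph{decreasing} in $\davg$, so an upper bound $\davg \le O(k/\epsilon)$ only gives a \emph{lower} bound on the query count. To conclude $\ot\bigl(k^2/(\epsilon^2\davg)\bigr) = \ot(k/\epsilon)$ you would need $\davg \ge \Omega(k/\epsilon)$, and nothing in the bounded-degree model guarantees that: a $d$-bounded graph can have $\davg$ as small as $k$ (e.g., $d$-regular with $d=k$ when $\epsilon$ is tiny). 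The most your reduction can deliver, via the legitimate WLOG $\davg \ge k$, is $\ot(k/\epsilon^2)$ for the general case and $\ot(\min\{k/\epsilon^2, 1/\epsilon^3\})$ for the simple case --- i.e., the unbounded-degree bounds from \Cref{thm:main-property-testing}, not the claimed bounded-degree ones. Your Step~2 (bounded-far implies unbounded-far since $m \le nd$) is fine, and the degree-range reductions in Step~1 are plausible, but they cannot extract a factor of $1/\epsilon$ that was never there.

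The paper's proof of \Cref{thm:main-property-testing2} is a genuinely different algorithm, not a black-box invocation of the unbounded-degree tester. In the bounded-degree model the graph is treated as $d$-regular (padding short lists with self-loops), which makes it possible to sample a uniformly random edge with a single query (\Cref{pro:sample-edge}); the paper explicitly notes this is not feasible in the unbounded-degree model, where uniform edge sampling costs $\Omega(n/\sqrt{m})$ queries per sample. The algorithm then uses a two-dimensional decomposition of the far-from-$k$-connected witness family: bucketing by cut-deficit $i$ (as in the unbounded case) \emph{and} by the volume $j$ of the small side. For each $(i,j)$ pair it samples only $\tilde\Theta\bigl(1/(\epsilon 2^{j-i})\bigr)$ edges and calls the local algorithm with volume $\nu = 2^{j+1}$, so the per-call query cost $\ot(2^{j-i}k)$ and the number of samples multiply to $\ot(k/\epsilon)$ uniformly across all $(i,j)$; summing over $O(\log^2)$ pairs keeps the total at $\ot(k/\epsilon)$. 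The crucial probabilistic step (\Cref{lem:bigvol-small}, and its vertex analogue \Cref{lem:j-large-vol-vertex}) shows that some volume-bucket $\cC_{i,\textsmall,j}$ carries an $\Omega\bigl(\epsilon m 2^{j-i}/\polylog\bigr)$ fraction of the total volume, so that \emph{edge} sampling (proportional to volume) hits it with the stated sample budget, whereas \emph{vertex} sampling (proportional to cardinality) would not. This volume-weighted hitting argument is precisely what your reduction has no way to recover.
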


 We prove \Cref{thm:main-property-testing} using properties of
 $\epsilon$-far from being $k$-edge/vertex connected from
 \cite{OrensteinR11}  and \cite{FrankJ99} along with a variant of
approximate local edge connectivity  in \Cref{sec:test-k-edge}, and
approximate local vertex connectivity in \Cref{sec:test-k-vertex}. 

\subsection{Testing $k$-Edge Connectivity: Unbounded-Degree Model} \label{sec:test-k-edge}
In this section, we prove \Cref{thm:main-property-testing} for testing
$k$-edge connectivity where 
\begin{align} \label{eq:input-k-testing-edge} 
k < \frac{\epsilon n}{520} (\lfloor \log_2(m/n) \rfloor +1) = \ot(\epsilon n ).
\end{align}

 The key tool for our property testing
algorithm is an algorithm for approximate local edge connectivity in a suitable form
for the application to property testing.

\begin{corollary}
\label{lem:gaplocalec}
There exists the following randomized algorithm. It takes as inputs, 
\begin{itemize}[noitemsep,nolistsep]
\item a pointer to a seed vertex $x \in V$ in an adjacency list
  representing  an $n$-vertex $m$-edge graph $G=(V,E)$, 
\item a volume parameter (positive integer) $\nu$, 
\item a cut-size parameter (positive integer) $k$, and 
\item a slack parameter (non-negative integer) $\gap$, where
\end{itemize}
\begin{align} \label{eq:precon_ec_testing}
k \geq 1 + \gap,  \quad  \gap  \leq  k/2 \quad \mbox{ and } \quad \nu < m(\gap+1)/(130(k-\gap)).
\end{align}
It accesses (i.e., makes queries for) $\ot(\nu k /(\gap+1))$ edges. It then outputs 
in the following manner. 
\begin{itemize}[noitemsep,nolistsep]
\item If there exists a vertex-set $S'$ such that $S' \ni x, \vol^{\out}(S')
  \leq \nu, $ and $ |E(S',V-S')| < k-\gap$, then with probability at
  least 3/4, the algorithm
  outputs $S$ a non-empty vertex-set $S\subsetneq V$
  such that $|E(S,V-S)| < k$  (otherwise it outputs $\bot$).  
\item Otherwise (i.e., no such $S'$ exists), the algorithm outputs either
  a set $S$ as specified above or $\bot$. 
\end{itemize}
\end{corollary}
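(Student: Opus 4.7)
The plan is to obtain \Cref{lem:gaplocalec} as a direct reparameterization of \Cref{thm:localEC_approx_new}. Concretely, I would invoke the algorithm of \Cref{thm:localEC_approx_new} on the same seed vertex $x$ and same volume parameter $\nu$, but with cut-size parameter $k' := k - \gap$ and slack parameter $\gap' := \gap$. The output set $S$ then satisfies $|E(S, V - S)| < k' + \gap' = k$, which is precisely the bound required by the corollary. Symmetrically, any set $S'$ satisfying the corollary's hypothesis ($x \in S'$, $\vol^{\out}(S') \le \nu$, $|E(S', V - S')| < k - \gap$) is exactly a witness for the hypothesis of \Cref{thm:localEC_approx_new} under the reparameterization, so the probability that the algorithm outputs a valid $S$ is at least $3/4$.

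Next, I would verify that the preconditions $\gap' \le k' < \nu < m(\gap' + 1)/(130\, k')$ required by \Cref{thm:localEC_approx_new} follow from \eqref{eq:precon_ec_testing}. The inequality $\gap' \le k'$ unfolds to $\gap \le k - \gap$, equivalent to $\gap \le k/2$; the requirement $k' \ge 1$ becomes $k \ge 1 + \gap$; and the upper bound on $\nu$ translates to $\nu < m(\gap + 1)/(130(k - \gap))$. All three are among the hypotheses of \Cref{lem:gaplocalec}. The only precondition not automatic is $k' < \nu$, i.e., $\nu > k - \gap$. In the main regime (which is all that arises in the property testing applications of Section~\ref{sec:property-testing}) this holds; for the residual edge case $\nu \le k - \gap$ one inflates the volume parameter passed to \Cref{thm:localEC_approx_new} to $\nu' = \max(\nu, k - \gap + 1)$, noting that every $S'$ admissible for $\nu$ is also admissible for $\nu'$, and that the corollary's upper bound together with the bound $k - \gap + 1 \le 2(k-\gap)$ still leaves room under $m(\gap + 1)/(130\,k')$.

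The query complexity is inherited directly: \Cref{thm:localEC_approx_new} gives $O(\nu \cdot k' /(\gap' + 1)) = O(\nu (k - \gap)/(\gap + 1)) \le O(\nu k/(\gap + 1))$, which lies inside the claimed $\ot(\nu k/(\gap + 1))$. I expect the only mild obstacle to be the bookkeeping around the edge case $\nu \le k - \gap$, where one must check that inflating $\nu$ does not violate the upper-bound precondition or blow up the query count past the $\ot$ target; the bulk of the argument is an immediate consequence of \Cref{thm:localEC_approx_new} together with the observation that, unlike \Cref{thm:localEC_approx_new}, the corollary makes no claim on $\vol^{\out}(S)$, so the side condition $k < \nu$ (which was used in the theorem only for the volume bound on the returned set) can be safely relaxed.
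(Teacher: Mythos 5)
Your main reduction is exactly the paper's: run \Cref{thm:localEC_approx_new} with $(x', \nu', k', \gap') = (x, \nu, k - \gap, \gap)$, check that the corollary's preconditions translate to those of the theorem, and observe that the returned cut satisfies $|E(S, V - S)| < k' + \gap' = k$. That part is correct and matches the paper.

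The gap is in your handling of the boundary case $\nu \le k - \gap$. Inflating the volume parameter to $\nu' = \max(\nu, k - \gap + 1)$ does not in general preserve the theorem's precondition $\nu' < m(\gap + 1)/(130\,k')$: the corollary's hypothesis only bounds $\nu$, not $k - \gap$, relative to $m(\gap + 1)/(130(k - \gap))$. For instance, with $\gap = 0$, $k = 100$, $\nu = 1$, $m = 20000$, the hypothesis $\nu < m/(130k) \approx 1.54$ holds, but after inflation $\nu' = 101 \not< 1.54$, so the theorem cannot be invoked. The bound $k - \gap + 1 \le 2(k - \gap)$ you cite does not rescue this, since $2k' < m(\gap+1)/(130\,k')$ is likewise not a consequence of the hypotheses. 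The paper sidesteps all of this with a much simpler observation: when $\nu < k - \gap$, just inspect $\deg^{\out}(x)$ directly. If $\deg^{\out}(x) < k - \gap < k$, return the singleton $\{x\}$, which is a valid cut; otherwise return $\bot$, which is correct because any $S' \ni x$ has $\vol^{\out}(S') \ge \deg^{\out}(x) \ge k - \gap > \nu$, so no qualifying $S'$ can exist. You flagged the edge case as a potential obstacle, which was the right instinct — but the fix you propose does not go through, and a different (and easier) argument is needed there.
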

\begin{proof}
If $\nu < k - \gap$, then it is enough to run the following
procedure. If $\deg(x) < k - \gap$, then we found a cut of size
smaller than $k$, and return $\{ x\}$. Otherwise, we return
$\bot$. To see that we return $\bot$ correctly, for any
vertex set $S'$ that contains $x$, we have $\vol^{\textout}(S') \geq
k -\gap > \nu$, and thus we correctly output $\bot$.   
From now, we assume that 
\begin{align}  \label{eq:nu-geq-k-gap}
\nu \geq k -\gap.
\end{align}
Let the following tuple $(x',\nu', k', \gap')$ be a list of parameter
supplied to \Cref{thm:localEC_approx_new}. We run the local algorithm in 
\Cref{thm:localEC_approx_new}  using the following parameters
$(x',\nu', k', \gap') = (x, \nu, k-\gap, \gap)$. Note that
\Cref{eq:precon_ec_testing} and \Cref{eq:nu-geq-k-gap}  ensure that the conditions in
\Cref{eq:precon_ec} for the local algorithm with
parameter $(x',\nu', k', \gap')$ are satisfied. The query complexity
follows from \Cref{thm:localEC_approx_new}. 
\end{proof}
We now present an algorithm for testing $k$-edge connectivity.
 
\paragraph{Algorithm.}  
\begin{enumerate}
\item Sample $\Theta (\frac{1}{\epsilon})$ vertices uniformly at random.
\item If any of the sampled vertices has out-degree less than $k$, return the
  corresponding trivial edge cut. 
\item Sample $\Theta( \frac{ k \log k} { \epsilon \davg })$ vertices
  uniformly at random  (if $\davg$ is unknown, then sample $\Theta( \frac{\log
    k}{\epsilon})$ vertices instead).
\item For each sampled vertex $x$, and for $i \in \{0, 1 ,\ldots,
   \logk \}$, 
 \begin{enumerate}
 \item Let $\nu = 2^{i+2} \epsilon^{-1} \lfloor \log_2k \rfloor,$ and
   $\gap = \min\{2^{i}-1, \lfloor k/2 \rfloor \}$. 
 \item Run the local algorithm in \Cref{lem:gaplocalec} with parameters $(x,
   \nu , k, \gap )$ on both $G$ and $G^R$ where  $G^R$ is $G$ with
   reversed edges. 
 \end{enumerate} 
\item Return an edge cut of size less than $ k $ if any execution of
  the local algorithm above returns a cut. Otherwise, declare that $G$ is $k$-edge connected.
\end{enumerate}

\paragraph{Query Complexity.} We first show that the number of
edge queries is at most $\ot(k^2/(\epsilon^2 \davg))$. For each
sampled vertex $x$ and $i \in \{0,1,  \ldots, \logk \}$, by
\Cref{lem:gaplocalec}, the local algorithm queries $\ot( \nu k/\gap) =
\ot(k/\epsilon)$ edges. The result follows as we repeat $\log_2k$
times per sample, and we sample $O(k \log k /(\epsilon \davg))$
times. %
 If $\davg$ is unknown, we can remove the term $k/\davg$ from
 above since we sample $\Theta( \frac{\log  k}{\epsilon})$ vertices instead.   

\paragraph{Correctness.} If $G$ is $k$-edge connected, 
the algorithm above never returns an edge cut. We show that if $G$ is
$\epsilon$-far from being $k$-edge connected, then the algorithm outputs an
edge cut of size less $ k$ with constant probability. We start with a simple
observation. 
\begin{lemma}  \label{lem:mgeqnk}
If $m < nk/4$, then with constant probability, the algorithm outputs
an edge cut of size less than $k$ at step 2. 
\end{lemma}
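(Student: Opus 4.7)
The plan is to argue by a direct counting argument: when $m < nk/4$, a strict majority of vertices have out-degree less than $k$, so even the $O(1/\epsilon)$ vertices sampled in step 1 will contain such a vertex with constant probability, and any such vertex yields a trivial singleton cut of size $<k$ via step 2.

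First, I would use a Markov-type argument on the out-degree sequence. Since $\sum_{v \in V} \deg^{\out}(v) = m < nk/4$, the number of vertices with $\deg^{\out}(v) \geq k$ is strictly less than $m/k < n/4$. Consequently at least $3n/4$ vertices satisfy $\deg^{\out}(v) < k$. For each such vertex $x$, the singleton $S = \{x\}$ satisfies $|E(S, V - S)| = \deg^{\out}(x) < k$ and (assuming $n \geq 2$, which we may) is a proper subset of $V$, so it is a valid edge cut of size less than $k$.

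Second, I would translate this into a sampling guarantee. A uniform random vertex has out-degree less than $k$ with probability at least $3/4$. Since step 1 draws $\Theta(1/\epsilon) \geq 1$ independent uniform samples, the probability that none of them hits the $\geq 3n/4$ low-out-degree vertices is at most $(1/4)^{\Theta(1/\epsilon)} \leq 1/4$. Thus with probability at least $3/4$ at least one sampled vertex $x$ satisfies $\deg^{\out}(x) < k$, at which point step 2 explicitly returns $\{x\}$ as a trivial edge cut of size less than $k$. (If the algorithm is to be applied to undirected graphs, the same argument runs with $\deg$ in place of $\deg^{\out}$; and since the algorithm also runs on $G^R$, the argument symmetrically covers the in-degree case.)

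There is no real obstacle here: the statement is essentially the observation that having $m < nk/4$ forces many vertices to fall below the cut-size threshold, and step 1 samples enough to catch one. The only things to double-check are (i) that the singleton cut is nonempty on the complement side, which follows from $n \geq 2$, and (ii) that the constant in the sampling step is at least~$1$, which is immediate from the $\Theta(1/\epsilon)$ sample size. All the work is then subsumed by the trivial averaging inequality in the first paragraph.
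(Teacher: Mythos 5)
Your argument is the same Markov-type counting argument used in the paper: since $m < nk/4$, at most $m/k < n/4$ vertices can have out-degree $\geq k$, so a constant fraction of vertices yield a trivial singleton cut, and the $\Theta(1/\epsilon)$ samples in step 1 catch one with constant probability. The paper states the slightly looser bound of $n/2$ low-degree vertices (which it gets by an even cruder count), but the structure and conclusion are identical.
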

\paragraph{Remark.} This observation applies to any tester.
\begin{proof}
Suppose $m < nk/4$. There are at most $n/2$ vertices with out-degree at least
$k$. Hence, there are at least $n/2$ vertices of degree
less than $k$. In this case, we can sample $O(1)$ time where each sampled vertex
$x$ we check $\deg^{\textout}(x)  < k$ to get $k$-edge cut with
constant probability.
\end{proof}
From now  we assume that 
\begin{align} \label{eq:mgeqnk4}
m \geq nk/4.
\end{align}

Next, we state important properties when $G$ is $\epsilon$-far from
being $k$-edge connected. For any non-empty subset $X \subset V$, let
$d^{\textout}(X) = |E(X, V - X)|$, and $d^{\textin}(X) = |E(V-X,X)|$.  
\begin{theorem} [\cite{OrensteinR11} Corollary
  8]\label{thm:eps-far-edge}
A directed graph $G = (V,E)$ is $\epsilon$-far from being
$k$-edge connected (for $k \geq 1$) if and only if there exists a
family of disjoint subsets $\{X_1, \ldots, X_t\}$ of vertices for
which either $\sum_{i}(k - d^{\textout}(X_i)) > \epsilon m$ or $
\sum_{i}(k - d^{\textin}(X_i)) > \epsilon m$. 
\end{theorem}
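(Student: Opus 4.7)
The plan is to prove the two directions separately, using Frank's classical theorem on edge-connectivity augmentation in directed graphs \cite{FrankJ99} as the main ingredient.

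For the easy direction (sufficiency), suppose such a family $\{X_1,\ldots,X_t\}$ exists with, say, $\sum_i (k - d^{\textout}(X_i)) > \epsilon m$. I would show that any sequence of edge insertions and deletions transforming $G$ into a $k$-edge connected graph $G'$ must use strictly more than $\epsilon m$ operations. Since the $X_i$ are pairwise disjoint, every directed edge has its tail in at most one $X_i$; hence each edge insertion raises at most one $d^{\textout}(X_i)$ by one, while every edge deletion can only decrease some $d^{\textout}(X_i)$. In $G'$ we need $d^{\textout}_{G'}(X_i) \geq k$ for each $i$, so the number of insertions alone must be at least $\sum_i (k - d^{\textout}_G(X_i)) > \epsilon m$; counting deletions only strengthens the lower bound. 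This yields that $G$ is $\epsilon$-far. The symmetric argument handles the $d^{\textin}$ case.

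For the hard direction (necessity), I would invoke Frank's augmentation theorem, which states that the minimum number of edges one must add to a directed graph to make it $k$-edge connected equals
\[\max\Bigl(\max_{\mathcal{F}} \sum_{X \in \mathcal{F}} (k - d^{\textout}(X)),\ \max_{\mathcal{F}} \sum_{X \in \mathcal{F}} (k - d^{\textin}(X))\Bigr),\]
where $\mathcal{F}$ ranges over families of pairwise disjoint proper non-empty subsets of $V$. The key preliminary step is to reduce the general modification distance to the addition distance: I would argue that any optimal modification sequence may be assumed to contain no deletions, since removing an edge can only weakly increase each deficiency $k - d^{\textout}(X)$ and $k - d^{\textin}(X)$, and hence by Frank's formula does not decrease the number of subsequent additions required. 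Consequently the modification distance to $k$-edge connectivity equals the pure addition distance. If $G$ is $\epsilon$-far, this common value exceeds $\epsilon m$, so by Frank's theorem at least one of the two maxima above exceeds $\epsilon m$, producing the desired family $\{X_1,\ldots,X_t\}$.

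The main obstacle is Frank's theorem itself, whose proof is substantial; granting it as a black box, the only delicate point is the reduction from arbitrary modifications to pure additions, which rests squarely on the monotonicity of the $d^{\textout}$ and $d^{\textin}$ functionals under edge deletion. One should also be slightly careful about the boundary between strict and weak inequalities in the definition of $\epsilon$-far versus the statement of the theorem; this can be handled by a routine rounding argument or simply by using the same convention throughout.
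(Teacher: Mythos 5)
The paper does not prove this theorem; it is quoted directly from Orenstein and Ron (Corollary~8 of the cited work) as a known external result, so there is no internal argument to compare against. Your reconstruction is sound and is almost certainly the intended derivation: a deficiency-counting argument for sufficiency (each inserted edge raises at most one $d^{\textout}(X_i)$ because the $X_i$ are disjoint, and deletions only hurt), and for necessity a reduction from the general modification distance to the pure addition distance followed by Frank's directed augmentation min-max. Two small polishing remarks. Your deletion-elimination step is cleaner argued directly rather than via monotonicity of Frank's formula: from any modification sequence reaching a $k$-edge connected graph $G'$, perform only the insertions and skip the deletions; the result is an edge-superset of $G'$ and hence still $k$-edge connected, so the addition distance never exceeds the modification distance (and the reverse inequality is trivial), giving equality without any appeal to the formula for this step. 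Also, one must insist---as Frank's theorem does---that the $X_i$ range over \emph{proper non-empty} subsets of $V$, since otherwise the ``only if'' direction becomes vacuous by padding with $X_i = \emptyset$; and Frank's formula carries the positive-part operator $\max(k - d^{\textout}(X), 0)$, which you silently drop. That elision is harmless because a maximizing family never includes a set of non-positive deficiency, but it is worth saying so, as is a word about the strict versus non-strict inequality at the boundary where $\epsilon m$ is an integer.
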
 %
Let $\mathcal{F} :=  \{X_1, \ldots, X_t\}$ as in
\Cref{thm:eps-far-edge}.  We assume without loss of generality that
\begin{align} \label{eq:eps-far-edge} 
\sum_{i}(k - d^{\textout}(X_i)) > \epsilon m.
\end{align}  Let $\mathcal{C}_{-1} = \{ X \in \mathcal{F} \colon k \leq d^{\textout}(X) \}$. For $i
\in \{0, 1,\ldots, \lfloor \log_2k \rfloor \}$, let  $\mathcal{C}_i = \{ X \in \mathcal{F}
\colon k - d^{\textout}(X) \in [2^i, 2^{i+1}) \}$. Note that 
\begin{align} \label{eq:2ileqk}
   2^i \leq k, \text{ for any } i \in \{0, \ldots, \logk \}
\end{align}and
\begin{align} \label{eq:partitionF-edge}
 \mathcal{F} = \bigsqcup_{i = -1}^{ \lfloor \log_2k \rfloor} \mathcal{C}_i
\end{align}
where $\bigsqcup$ is the disjoint union.  Let $\cC_{i, \textbig} = \{
X \in \cC_i \colon \vol^{\textout}(X) \geq  2^{i+2} \epsilon^{-1} (\logk+1) \}$, and $\cC_{i, \textsmall} = \cC_i - \cC_{i, \textbig}$. 
The following lemma is the key for the algorithm's correctness. 
\begin{lemma} \label{lemma:many-small-edge}
There is some $i^* \in \{0, 1,\ldots, \logk\}$ such that $ |\cC_{i^*,\textsmall}| \geq \epsilon n \davg
/ (4k  (\logk+1)) $. If $\davg$ is unknown, we have
$|\cC_{i^*,\textsmall}| \geq  \epsilon n/(16(\logk+1))$ instead. 
\end{lemma}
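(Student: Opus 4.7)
The plan is a two-stage pigeonhole: first identify a level $i^{*}$ carrying a significant share of the total ``edge deficiency,'' and then argue that within that level only a small fraction of the sets can be ``big'', so most must be ``small''.

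First I would observe that the witnessing family $\cC$ from \Cref{thm:eps-far-edge} may be taken to contain no sets in $\cC_{-1}$: removing from the family any $X$ with $d^{\textout}(X) \geq k$ only \emph{increases} $\sum_i (k - d^{\textout}(X_i))$ and preserves disjointness, so we can assume $\cC_{-1} = \emptyset$. Then \Cref{eq:eps-far-edge} together with the partition \Cref{eq:partitionF-edge} gives
$$\sum_{i=0}^{\logk} \sum_{X \in \cC_i}(k - d^{\textout}(X)) > \epsilon m.$$
Averaging over the $\logk + 1$ levels, there exists $i^{*} \in \{0, \ldots, \logk\}$ with $\sum_{X \in \cC_{i^{*}}}(k - d^{\textout}(X)) > \epsilon m / (\logk + 1)$. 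Since every summand is strictly less than $2^{i^{*}+1}$ by definition of $\cC_{i^{*}}$, this yields
$$|\cC_{i^{*}}| > \frac{\epsilon m}{2^{i^{*}+1}(\logk + 1)}.$$

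Next I would bound $|\cC_{i^{*},\textbig}|$ from above using disjointness. Because the sets in $\cC$ are pairwise disjoint, $\sum_{X \in \cC_{i^{*},\textbig}} \vol^{\textout}(X) \leq m$; combined with the threshold $\vol^{\textout}(X) \geq 2^{i^{*}+2} \epsilon^{-1}(\logk + 1)$ in the definition of $\cC_{i^{*},\textbig}$, we get $|\cC_{i^{*},\textbig}| \leq \epsilon m / (2^{i^{*}+2}(\logk + 1))$. Subtracting from the lower bound on $|\cC_{i^{*}}|$ leaves
$$|\cC_{i^{*},\textsmall}| > \frac{\epsilon m}{2^{i^{*}+2}(\logk + 1)}.$$
Using \Cref{eq:2ileqk} so that $2^{i^{*}+2} \leq 4k$, and substituting $m = n\davg$, yields the first claimed bound $\epsilon n \davg / (4k(\logk + 1))$. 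For the $\davg$-unknown case, I would invoke \Cref{eq:mgeqnk4}, which gives $m \geq nk/4$, hence $\epsilon m / (4k(\logk + 1)) \geq \epsilon n / (16(\logk + 1))$.

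The only subtle point is the opening reduction to $\cC_{-1} = \emptyset$; everything else is a textbook averaging plus disjoint-volume argument. I do not anticipate genuine obstacles, only bookkeeping of the constants $4$ and $16$, which fall out cleanly from the fact that the factor of $2$ gap between $|\cC_{i^{*}}|$ and $|\cC_{i^{*},\textbig}|$ halves the coefficient, and $2^{i^{*}+2} \leq 4k$ absorbs the remaining geometric factor.
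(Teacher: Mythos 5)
Your proposal is correct and follows essentially the same averaging-then-disjointness argument as the paper: pigeonhole to find a level $i^*$ carrying more than a $1/(\logk+1)$ fraction of the deficiency, convert that to a lower bound on $|\cC_{i^*}|$, bound $|\cC_{i^*,\textbig}|$ via disjointness, and finish with $2^{i^*}\le k$ and $m=n\davg$ (or $m\ge nk/4$). The only cosmetic differences are that you explicitly discard $\cC_{-1}$ up front (the paper simply notes its contribution is nonpositive) and you phrase the big-set bound as a direct inequality rather than as $|\cC_{i^*,\textbig}|<|\cC_{i^*}|/2$, which is the same bound.
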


Let us briefly argue that \Cref{lemma:many-small-edge} implies the correctness of
the algorithm. If \Cref{lemma:many-small-edge} is true, then  by
sampling $\Theta ((k \log k)/(\epsilon \davg))$ many vertices (or $\Theta (\log k/\epsilon)$ if $\davg$ is unknown), the event that a sampled vertex belongs to some vertex set $S' \in
\cC_{i^*,\textsmall}$ has constant probability (since
$\cC_{i^*,\textsmall}$ contains disjoint sets). Now, assuming that a
sampled vertex $x$ belongs to a vertex set $S'$ as specified above. Since $S' \in
\cC_{i^*,\textsmall}$, by definition, $\vol^{\textout}(S') \leq
2^{i^*+2}\epsilon^{-1}(\lfloor \log_2 k\rfloor + 1)$, and $k -
d^{\out}(S') \in [2^{i^*}, 2^{i^*+1})$. In other words, there exists a
vertex set $S'$ such that $S' \ni x, \vol^{\out}(S') \leq
2^{i^*+2}\epsilon^{-1}(\lfloor \log_2 k\rfloor + 1),$ and
$|E(S',V-S')| \leq k - 2^{i^*} < k - (2^{i^*}-1)$. If $i^*$ was known, then we run the local algorithm in
\Cref{lem:gaplocalec} with parameters $(x,\nu,k,\gap)$ where $\nu =
2^{i^*+2}\epsilon^{-1}(\lfloor \log_2 k\rfloor + 1), $ and $\gap
=\min\{2^{i^*}-1, \lfloor k/2 \rfloor \} $ to obtain a non-empty set $S$ such that
$|E(S,V-S)| < k$ with probability at least $3/4$. Since $i^* \in [0, 1, \ldots, \logk\}$, it is
enough to run the local algorithm in \Cref{lem:gaplocalec} with
parameters $(x,\nu_i,k,\gap_i)$ where $\nu_i = 2^{i+2}\epsilon^{-1}(\logk+1)$ and $\gap_i = \min\{2^{i}-1, \lfloor k/2 \rfloor \}$ for every $i \in
\{0, 1,\ldots, \logk\}$.

It remains to check the preconditions for the local algorithm in
\Cref{lem:gaplocalec}. We show that
\Cref{eq:precon_ec_testing} is satisfied for any call of the local
algorithm. It is easy to see that the first condition $k \geq 1+ \gap$ is
satisfied.  The second condition $\gap \leq k/2$ follows since
$\gap_i = \min\{2^{i}-1, \lfloor k/2 \rfloor \}  \leq \lfloor k/2
\rfloor \leq k/2$ for any $i \in \{0, 1,\ldots, \logk\}$. Finally, we show
the last condition $\nu < m(\gap+1)/(130(k-\gap))$ is
satisfied. First, note that for any $i \in \{0, 1,\ldots, \logk\},
\nu_i = 2^{i+2}\epsilon^{-1}(\logk+1) \leq 4\epsilon^{-1}k(\logk+1)$.
Therefore, 
\begin{align}
\nu \leq  4\epsilon^{-1}k(\logk+1) \stackrel{~(\ref{eq:mgeqnk4})} \leq
  4 \frac{m\epsilon^{-1}}{n} \lfloor \log_2(m/n)+1   \rfloor \stackrel{~(\ref{eq:input-k-testing-edge})} < \frac{m}{130k} < m(\gap+1)/(130(k-\gap))
\end{align}
This last inequality follows since $0 \leq \gap < k/2$.  Therefore, the correctness follows.

\begin{proof}[Proof of \Cref{lemma:many-small-edge}] 
We begin by showing that there is some $i \in \{0, 1,\ldots, \logk \}$ such that $|\cC_i| > \epsilon m /
(2^{i+1} (\logk+1) )$.
that there is some $i \in \{0, 1,\ldots, \logk \}$ such that 
\begin{align}\label{eq:property:one}
\sum_{ X \in \mathcal{C}_i} (k - d^{\textout}(X)) > \epsilon m/
(  \lfloor \log_2k \rfloor+1).
\end{align}
 Suppose otherwise that for every $i \in \{0, 1,\ldots, \logk \}$,
$\sum_{ X \in \mathcal{C}_i } (k - d^{\textout}(X)) \leq \epsilon m/
(\lfloor \log_2k \rfloor+1)$.
Then we have $\sum_{ X \in \mathcal{F}}(k- d^{\textout}(X) )
\stackrel{(\ref{eq:partitionF-edge})} = \sum_{i=-1}^{\lfloor \log_2k
  \rfloor }  \sum_{ X \in \mathcal{C}_i} (k - d^{\textout}(X))  \leq
\sum_{i=0}^{\lfloor \log_2k \rfloor}
\sum_{ X \in \mathcal{C}_i} (k - d^{\textout}(X)) \leq \epsilon m$. However, this
contradicts  \Cref{eq:eps-far-edge} as in
\Cref{thm:eps-far-edge}. Second, we claim that for every
$i \in \{0, 1,\ldots, \logk \}$,  $$|\mathcal{C}_i| 2^{i+1} \geq \sum_{ X \in \mathcal{C}_i} (k - d^{\textout}(X)).$$
This follows trivially from the definition that each element $X$ of the set
$\mathcal{C}_i$ has $k - d^{\textout}(X) < 2^{i+1}$.
Now for the $i$ that satisfies \Cref{eq:property:one} we have 
\begin{align}\label{eq:property:two}
|\mathcal{C}_i|  \geq
\sum_{ X \in \mathcal{C}_i} (k - d^{\textout}(X)) / 2^{i+1} > \epsilon
  m/ ( 2^{i+1} (\lfloor \log_2k \rfloor+1))
\end{align}
as desired.

Recall that $\cC_{i, \textbig} = \{ X \in \cC_i \colon
\vol^{\textout}(X) \geq 2^{i+2} \epsilon^{-1} (\logk+1)  \}$, and $\cC_{i,
  \textsmall} = \cC_i - \cC_{i, \textbig}$. We show that $
|\cC_{i,\textbig}| <  |\cC_i|/2$ for $i$ that satisfies \Cref{eq:property:two} by the following chain of inequalities:
$$    2|C_{i,\textbig}| \leq \sum_{X \in \cC_{i, \textbig}}
\vol^{\textout}(X)/(  2^{i+1} \epsilon^{-1} (\lfloor \log_2k \rfloor+1) )
\leq \epsilon m /(2^{i+1}(\lfloor \log_2k \rfloor+1) )
\stackrel{(\ref{eq:property:two})}  < |\cC_i|. $$
The first inequality holds because by the inequality $\vol^{\textout}(X)/( 2^{i+1}
\epsilon^{-1} (\logk+1)) \geq 2$ for each $X \in
\cC_{i,\textbig}$ from the definition of $\cC_{i,\textbig}$, we have $ \sum_{X \in \cC_{i, \textbig}}
\vol^{\textout}(X)/( 2^{i+1} \epsilon^{-1} (\lfloor \log_2k \rfloor+1) )
\geq 2|C_{i,\textbig}| $. The second inequality holds because elements in~$\cC_{i, \textbig} $
are disjoint and thus $\sum_{X \in \cC_{i, \textbig}} \vol^{\textout}(X)
\leq m$. The final inequality follows from  \Cref{eq:property:two}.

For the same $i$, since  $|\cC_{i,\textbig}| <  |\cC_i|/2$, we
have \begin{align} \label{eq:last-ineq}
|\cC_{i,\textsmall}| \geq |\cC_i|/2 > \epsilon m / (2^{i+2}( \lfloor
       \log_2 k\rfloor+1) ) \geq   \epsilon n \davg / (4k  (\logk+1)). 
\end{align}  %
The last inequality follows from $m = n\davg$, and \Cref{eq:2ileqk}. If $\davg$ is unknown, by \Cref{eq:mgeqnk4}, the last inequality
becomes $\epsilon m / (2^{i+2} (\logk+1)) \geq  \epsilon nk / (16k
\logk) = \epsilon n / (16(\logk+1)) $.
\end{proof}
\paragraph{An improved bound for simple graphs.} The same algorithm gives
an improved bound when $G$ is simple.   If $\epsilon \leq 4/k$, the algorithm queries at most
$\ot(k^2/(\epsilon^2 \davg)) = \ot(1/(\epsilon^4 \davg)) $ edges
(and $\ot(1/\epsilon^3)$ edges if $\davg$ is unknown). Now, we assume
$\epsilon > 4/k$, we show that there are $\Omega(\epsilon n \davg/
k)$ ($\Omega(\epsilon n)$ if $\davg$ is unknown) many vertices with
degree less than $k $.

\begin{lemma} \label{lem:lots-singleton}
 If $\epsilon > 4/k$, $G$ is simple, and  $\epsilon$-far from being
 $k$-edge connected, then there exist at least $\epsilon n/2$ vertices
 ($\epsilon \davg n/(8k)$ vertices if $\davg$ is unknown) with
 degree less than $k$. 
\end{lemma}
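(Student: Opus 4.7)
The plan is to invoke \Cref{thm:eps-far-edge}: since $G$ is simple and $\epsilon$-far from $k$-edge connected, there is a family $\mathcal{F}$ of pairwise disjoint subsets of $V$ with $\sum_{X \in \mathcal{F}} (k - d^{\out}(X)) > \epsilon m$ (the in-degree case is symmetric). Let $L = \{v : \deg^{\out}(v) < k\}$ denote the set of low-out-degree vertices, which is what we seek to lower bound. The key structural observation exploits simplicity: any $X \in \mathcal{F}$ with $X \cap L = \emptyset$ must satisfy $|X| \geq k+1$. Indeed, every vertex in such an $X$ has $\deg^{\out} \geq k$, so $\vol^{\out}(X) \geq k|X|$; simplicity bounds $\vol^{\out}(X) \leq d^{\out}(X) + |X|(|X|-1) < k + |X|(|X|-1)$; combining yields $|X|(k - |X| + 1) < k$, which for $1 \leq |X| \leq k$ is impossible.

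Next, I partition $\mathcal{F} = \mathcal{F}_1 \cup \mathcal{F}_2$, where $\mathcal{F}_1$ contains the sets that hit $L$ and $\mathcal{F}_2$ those that do not. The structural observation gives $|\mathcal{F}_2| \leq n/(k+1)$ (disjoint sets of size $\geq k+1$), so its total contribution is at most $k \cdot n/(k+1) < n$. Since the sets in $\mathcal{F}_1$ are disjoint and each contains a vertex of $L$, we have $|\mathcal{F}_1| \leq |L|$, so its contribution is at most $k|L|$. Adding these bounds yields $k|L| + n > \epsilon m$, i.e., $|L| > (\epsilon m - n)/k$.

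Finally, I convert this inequality into the two claimed bounds using $\epsilon > 4/k$. For the unknown-$\davg$ variant, substituting $m = n\davg$ and arguing via $\epsilon > 4/k$ that $\epsilon m - n$ is a constant fraction of $\epsilon m$ yields $|L| \geq \epsilon \davg n / (8k)$. For the stronger bound $|L| \geq \epsilon n/2$ under the known-$\davg$ convention \Cref{eq:mgeqnk4} (so $m \geq nk/4$), I split on density: when $m \leq nk(1 - \epsilon/2)$, the trivial degree-counting bound $|L| \geq n - m/k$ already gives $|L| \geq \epsilon n/2$ without the family machinery; when $m > nk(1 - \epsilon/2)$, the family-based inequality $|L| > (\epsilon m - n)/k$ combined with $\epsilon > 4/k$ suffices after a short computation. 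The main technical obstacle is pinning down the constants in this last step: one must tighten the $\mathcal{F}_2$ estimate --- e.g.\ via the sharper $c_X \leq |X| - 1$ for $X \in \mathcal{F}_2$ --- so that the residual $n$ is strictly absorbed into a small fraction of $\epsilon m$, and the hypothesis $\epsilon > 4/k$ must be used carefully so the two density regimes meet at the same threshold constant.
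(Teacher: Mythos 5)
Your proof takes a genuinely different route from the paper's. The paper restricts attention to the sets $X \in \cC = \{X : d^{\out}(X) < k\}$ of out-volume below $2k/\epsilon$ (at least half of $\cC$) and argues, by a three-case analysis on $|X|$ using simplicity and $\epsilon > 4/k$, that each such $X$ must contain a vertex of out-degree $< k$. You deploy the same simplicity inequality $\vol^{\out}(X) \leq d^{\out}(X) + |X|(|X|-1)$ in the contrapositive: any $X \in \cC$ containing no low-degree vertex must satisfy $|X| \geq k+1$, because $|X|(k-|X|+1) < k$ fails for every integer $1 \leq |X| \leq k$ (the quadratic $s(k-s+1)$ is concave with value $k$ at $s=1$ and $s=k$). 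Counting disjoint sets then gives $\epsilon m < k|L| + kn/(k+1) < k|L| + n$. The two proofs exploit simplicity in essentially the same place; yours trades the paper's case analysis for a cleaner counting step, at the cost of a slightly more delicate conversion of $|L| > (\epsilon m - n)/k$ into the stated bounds.

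On that last point you were right to flag the constants, but the fix is lighter than sharpening the $\mathcal{F}_2$ estimate. Simply pair $|L| > (\epsilon m - n)/k$ with the trivial degree count $|L| \geq n - m/k$. These cross at $m^* = (k+1)n/(\epsilon+1)$, where both evaluate to $n(\epsilon - 1/k)/(\epsilon+1)$, which under $\epsilon > 4/k$ (and $\epsilon \leq 1$) exceeds $\tfrac{3}{8}\epsilon n$; use the trivial bound for $m \leq m^*$ and the family bound for $m \geq m^*$, giving $|L| > \tfrac{3}{8}\epsilon n$ unconditionally. This matches --- in fact slightly improves on --- the $\epsilon n/8$ that the paper's own proof actually derives. (The lemma statement appears to have its two bounds transposed relative to both the proof and the downstream algorithm, which use $\Omega(\epsilon n \davg / k)$ when $\davg$ is known and $\Omega(\epsilon n)$ when it is not.) Your argument also delivers the $\davg$-dependent bound: when $\davg \leq 3k/4$ the uniform $3\epsilon n/8$ already dominates $\epsilon n\davg/(2k)$, and when $\davg > 3k/4$ one has $m > \tfrac{3}{4}nk > 3n/\epsilon$, so $\epsilon m - n > \epsilon m / 2$ and $|L| > \epsilon m/(2k) = \epsilon n\davg/(2k)$.
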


\Cref{lem:lots-singleton} immediately yields the correctness of the
algorithm as number of singleton with degree less than $k$ is at least $\epsilon n/2$ vertices
 ($\epsilon \davg n/(8k)$ vertices if $\davg$ is unknown), and we
 sample $\Theta( k/(\epsilon \davg))$ (or $\Theta(1/\epsilon)$ vertices if $\davg$ is unknown) at step 1
and 2 to check if each sampled vertex has degree less than $k$.  Next, we prove \Cref{lem:lots-singleton}. 

\begin{proof} [Proof of \Cref{lem:lots-singleton}]

Let $\cC = \{ X \colon k - d^{\textout}(X)\geq 1 \}$. We claim that 
\begin{align} \label{eq:cC-g-emk} |\cC| > \epsilon m/k. \end{align} This follows from 
$$|\cC|k \geq \sum_{X \in \cC} (k - d^{\textout}(X)) \geq \sum_{X \in
  \mathcal{F}} (k - d^{\textout}(X)) > \epsilon m.$$   The first inequality follows from each term $k -
d^{\textout}(X)$ is at most $k$, and there are $|\cC|$ terms. The
second inequality follows from each $X \in \mathcal{F}
- \cC$, $k - d^{\textout}(X) \leq 0$. The third inequality follows from
\Cref{eq:eps-far-edge}. 

Let $\cC_{\textbig} = \{ X \in \cC \colon \vol^{\textout}(X)\geq
2k/\epsilon \}$, and $\cC_{\textsmall} = \cC - \cC_{\textbig}$.  We
claim that \begin{align} \label{eq:cC-small-g-eps-n}|\cC_{\textsmall}| > \epsilon n/8. \end{align} First, we show that
\begin{align} \label{eq:cC-big-leq-cC} |\cC_{\textbig}| < |\cC|/2. \end{align} This follows from $$ |\cC_{\textbig}|
\leq \sum_{X \in \cC_{\textbig}} \vol^{\textout}(X)/(2k
\epsilon^{-1}) \leq  \epsilon m/(2k) < |\cC|/2.  $$
The first inequality follows from the fact that for each $X \in \cC_{\textbig}, \vol^{\textout}(X)/
(2k \epsilon^{-1}) \geq 1$. Hence,  $\sum_{X \in \cC_{\textbig}} \vol^{\textout}(X)/
(2k \epsilon^{-1}) \geq  |\cC_{\textbig}|$.  The second inequality
follows from the fact that $\cC_{\textbig}$ contains disjoint sets,
and $ \sum_{X \in \cC_{\textbig}} \vol^{\textout}(X) \leq
\vol^{\textout}(V) = m$. The last inequality follows from
\Cref{eq:cC-g-emk}. Next, we have  \begin{align} \label{eq:cctextsmall2k}|\cC_{\textsmall}|
                                     \geq |\cC|/2 \geq \epsilon m /
                                     (2k) \geq \epsilon (n\davg)/
                                     (2k) \geq \epsilon (n\davg)/2. \end{align}
The first inequality follows from \Cref{eq:cC-big-leq-cC} and that  $\cC_{\textsmall} = \cC - \cC_{\textbig}$. The second
inequality follows from \Cref{eq:cC-g-emk}. The third inequality
follows from  $m = n\davg$.  If $\davg$ is unknown, the last part
of \Cref{eq:cctextsmall2k} becomes $ m / (2k) \geq \epsilon (nk )/ (8k) \geq \epsilon
n/8$. This follows from \Cref{eq:mgeqnk4}.%

It suffices to show that, for each $X\in\cC_{\textsmall}$, the average
degree of vertices in $X$, which is $\frac{\vol^{\textout}(X)}{|X|}$, is less
than $k$. If this is true, then there exists vertex $x\in X$ where
$\deg x<k$. Since the sets in~$\cC$ are disjoint, each set $X \in
\cC$ contains a vertex with degree less than $k$, and
$|\cC_{\textsmall}| > \epsilon n/8$ (by \Cref{eq:cC-small-g-eps-n}), we have that the number of
singleton vertex with degree less than $k$ is $> \epsilon n/8$, and we
are done. 

Now, fix $X\in\cC_{\textsmall}$ and we want to show that $\frac{\vol^{\textout}(X)}{|X|}<k$.
Consider three cases. If $|X|=1$, then $\frac{\vol^{\textout}(X)}{|X|}=d^{\textout}(X)<k$.
Next, if $|X|\ge2/\epsilon$, then $\frac{\vol^{\textout}(X)}{|X|}<\frac{2k/\epsilon}{2/\epsilon}=k$
as $X\in\cC_{\textsmall}$. In the last case, we have $2\le|X|<2/\epsilon\le k/2$.
Here we use the assumption $ \epsilon < 4/k $.
Note that $\vol^{\textout}(X)\le d^{\textout}(X)+|X|^{2}$ because the graph is simple.
So, 
\[
\frac{\vol(X)}{|X|}\le\frac{d^{\textout}(X)+|X|^{2}}{|X|}<\frac{k}{|X|}+|X|<\frac{k}{2}+\frac{k}{2}=k.\qedhere
\]
\end{proof}

\subsection{Testing $k$-Edge Connectivity: Bounded-Degree Model} \label{sec:test-k-edge-bounded}
In this section, we prove \Cref{thm:main-property-testing} for testing $k$-edge connectivity for
bounded degree model.  In this model, we know the maximum out-degree
$d$. We assume that $G$ is $d$-regular, meaning that every vertex has
degree $d$. If $G$ is not $d$-regular, we can ``treat'' $G$ as if it
is $d$-regular as follows. For any list $L_v$ of size less than $ d$, and $i
\in (|L_v|, d]$, we ensure that query$(v,i)$ returns a self-loop edge (i.e.,
an edge $(v,v)$). 

\paragraph{Edge-sampling procedure.} The key property of a $d$-regular graph is that we can easily sample edges
uniformly.
The other arguments made in the following would in principle also apply to the unbounded-degree model.
However, in the unbounded-degree model any algorithm that samples an edge from an almost uniform distribution must perform $ \Omega (n/\sqrt{m}) = \Omega (\sqrt{n} / \sqrt{d}) $ queries~\cite{EdenR18}, which is not feasible for our purposes.
The sampling in the bounded-degree model works as follows: We first sample a vertex $x \in V$. Then, we
perform query$(x,i)$ where $i$ is an integer sampled uniformly from
$[1,d]$. 
\begin{proposition}  \label{pro:sample-edge}
For any edge $e \in E$, the probability that $e$ is sampled from the edge-sampling procedure is $1/m$.
\end{proposition}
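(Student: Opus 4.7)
The plan is to observe that the edge-sampling procedure is essentially uniform sampling from a set of size $nd$, which under the $d$-regularity assumption equals $m$.

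First I would record the basic counting: since we treat $G$ as $d$-regular (padding with self-loops if necessary), every vertex has exactly $d$ outgoing edges in its (padded) adjacency list, so $m = nd$. Next I would fix an arbitrary edge $e \in E$ with tail $u$, and let $j \in [1,d]$ be the (unique) index at which $e$ appears in $L_u$. Observe that the procedure returns $e$ if and only if the sampled vertex $x$ equals $u$ and the sampled index $i$ equals $j$. Since $x$ is drawn uniformly from $V$ and $i$ is drawn independently and uniformly from $[1,d]$, we have
\begin{equation*}
\Pr[\text{edge sampled} = e] \;=\; \Pr[x = u] \cdot \Pr[i = j] \;=\; \frac{1}{n} \cdot \frac{1}{d} \;=\; \frac{1}{nd} \;=\; \frac{1}{m},
\end{equation*}
as claimed.

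There is no real obstacle here; the only subtlety is bookkeeping around the padding convention. Specifically, in the non-regular case the edge set $E$ referred to by the proposition should be understood as the padded edge set (which contains self-loops added for index slots beyond $|L_v|$), and $m = nd$ refers to its cardinality. With that convention fixed, each (padded) edge corresponds bijectively to a pair $(v,i) \in V \times [1,d]$, and the two-stage uniform sampling procedure produces this pair uniformly at random, yielding the claimed $1/m$ marginal on every edge.
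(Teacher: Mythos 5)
Your proof is correct and takes essentially the same route as the paper's: the paper conditions on which list $L_v$ is sampled (a one-line application of total probability), while you phrase it as the procedure selecting a uniformly random pair $(x,i) \in V \times [1,d]$ in bijection with the (padded) edge set; in both cases the calculation reduces to $(1/n)\cdot(1/d) = 1/(nd) = 1/m$ under the $d$-regularity assumption. The extra remark you make about the padding convention — that $m = nd$ only after self-loops are added, so $E$ and $m$ must be read with respect to the padded graph — is a legitimate point that the paper leaves implicit.
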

\begin{proof}
Fix any edge $e \in E$. The edge $e$ belongs to some list
$L_v$. Therefore, the probability that $e$ is queried according to our
edge-sampling procedure is
\begin{align*}
\Pr [e \text{ is queried}] &=   \Pr [e \text{ is queried} \mid L_v \text{ is
                          sampled}] \cdot \Pr [L_v \text{ is sampled}] \, + \\ 
& ~~~\,\Pr [e \text{ is queried} \mid L_v \text{ is
                       not   sampled}] \cdot \Pr [L_v \text{ is not sampled} ] \\
 &= \Pr [e \text{ is queried} \mid L_v \text{ is
                          sampled}] \cdot \Pr [L_v \text{ is sampled} ] \\
&= (1/d) \cdot (1/n) = 1/m.\qedhere
\end{align*}
\end{proof}

We present an algorithm for testing $k$-edge connectivity for
bounded-degree model and analysis. %
 
\paragraph{Algorithm.}  
\begin{enumerate}
\item Sample $\Theta (\frac{1}{\epsilon})$ vertices uniformly at random.
\item If any of the sampled vertices has out-degree less than $k$, return the
  corresponding trivial edge cut. 
\item For each $i \in \{0,\ldots,\logk\}$ and each $j \in
  \{0,\ldots, \logeta\}$ where $\eta_i = 2^{i+2}\epsilon^{-1}\logk$,
\begin{enumerate}
\item Sample $\Theta (\frac{ \logk\logeta }{ \epsilon 2^{j-i} }) = \tilde \Theta( \frac{1}{\epsilon 2^{j-i}}) $  edges uniformly at random.   
\item Let $\nu = 2^{j+1},$ and $\gap =  \min\{2^{i}-1, \lfloor k/2 \rfloor \}$. 
 \item Run the local algorithm in \Cref{lem:gaplocalec} with
   parameters $(x, \nu , k, \gap )$ on both $G$ and $G^R$ where
   $G^R$ is $G$ with reversed edges, and $x$ is a vertex from the
   sampled edge of the form $(x,y)$.  
\end{enumerate}
\item Return an edge cut of size less than $ k $ if any execution of
  the local algorithm above returns a cut. Otherwise, declare that $G$ is $k$-edge connected.
\end{enumerate}

\paragraph{Query Complexity.}  We first show that the number of
edge queries is at most $\ot(k/\epsilon)$. For each
vertex $x$ from the sampled edge $(x,y)$ and
for each $(i,j)$ pair in loops, the local algorithm in \Cref{lem:gaplocalec}
queries $\ot( \nu k/\gap) = \ot(2^{j-i}k)$ edges, and we sample $\ot(
1/(\epsilon 2^{j-i}))$ times.  Therefore, by repeating $\ot(1)$ time, the total edge queries is at most $\ot( k/ \epsilon)$.

\paragraph{Correctness.} If $G$ is $k$-edge connected, then the algorithm
never returns any edge cut, and we are done. Suppose $G$ is
$\epsilon$-far from being $k$-edge connected, then we show that the
algorithm outputs an edge cut of size less than $k$ with constant
probability.  Since $G$ is $d$-regular, we have $\davg =
d$. Therefore,  we can use results from \Cref{sec:test-k-edge}. %
 Let $\mathcal{F} :=  \{X_1, \ldots, X_t\}$ as in
\Cref{thm:eps-far-edge}.  We assume without loss of generality that
\begin{align} \label{eq:eps-far-edge2} 
\sum_{i}(k - d^{\textout}(X_i)) > \epsilon m.
\end{align}  Let $\mathcal{C}_{-1} = \{ X \in \mathcal{F} \colon k \leq d^{\textout}(X) \}$. For $i
\in \{0, 1,\ldots,\logk\}$, let  $\mathcal{C}_i = \{ X \in \mathcal{F}
\colon k - d^{\textout}(X) \in [2^i, 2^{i+1}) \}$.  Let $\cC_{i,
  \textbig} = \{ X \in \cC_i \colon \vol^{\textout}(X) \geq 2^{i+2}
(\logk+1) /\epsilon \}$, and $\cC_{i, \textsmall} = \cC_i - \cC_{i,
  \textbig}$.  By \Cref{lemma:many-small-edge},  there is an $i \in \{0, 1,\ldots, \logk \}$ such
that  \begin{align} \label{eq:ccitextsmallgeqepsilon} |\cC_{i,\textsmall}| \geq \epsilon n \davg
/ (4k  (\logk+1)) = \epsilon m/ (4k (\logk+1)). \end{align} This last inequality
follows since $n\davg = nd = m$. 
Now let $\eta_i = 2^{i+2}\epsilon^{-1}\logk$, and for every $j \in \{0, 1, \ldots, \logeta \}$,
let $\cC_{i,\textsmall, j} = \{ X \in \cC_{i,\textsmall} \colon \vol^{\textout}(X) \in [2^j, 2^{j+1}) \}$. 

\begin{lemma} \label{lem:bigvol-small}
For the $i \in \{0, \ldots, \logk \}$ that satisfies  \Cref{eq:ccitextsmallgeqepsilon}, there is a $j \in \{0, 1, \ldots, \logeta \}$ such that $\sum_{X \in \cC_{i,\textsmall,j}}
\vol^{\textout}(X) \geq \epsilon m 2^{j-i}/ (4 (\logk+1)( \logeta+1))$. 
\end{lemma}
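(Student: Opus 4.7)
The plan is to combine a pigeonhole argument over the volume buckets $\cC_{i,\textsmall,j}$ with the sharper form of the lower bound on $|\cC_{i,\textsmall}|$ that was actually derived (but then weakened) in Section~5.1. The bound $|\cC_{i,\textsmall}| \geq \epsilon n \davg/(4k(\logk+1))$ cited above loses a factor of $2^i/k$ because it simplifies $2^{i+2}$ to $4k$ using $2^i \leq k$; the unsimplified inequality $|\cC_{i,\textsmall}| > \epsilon m/(2^{i+2}(\logk+1))$ established in the displayed chain of inequalities just before \Cref{eq:last-ineq} is precisely what makes the $2^{j-i}$ factor come out right.

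Concretely, I would first observe that every $X \in \cC_{i,\textsmall}$ is non-empty and (since $G$ is $d$-regular with $d \geq 1$, which we may assume) satisfies $1 \leq \vol^{\textout}(X) < \eta_i \cdot O(1)$, so that $\cC_{i,\textsmall} = \bigsqcup_{j=0}^{\logeta} \cC_{i,\textsmall,j}$ (checking against the definitions in Section~5.1 that the cutoff for $\textbig$ lies within a factor of $2$ of $\eta_i$, which is absorbed by the pigeonhole constant). Then by averaging, there exists some $j \in \{0,1,\ldots,\logeta\}$ with
\[
|\cC_{i,\textsmall,j}| \geq \frac{|\cC_{i,\textsmall}|}{\logeta+1}.
\]

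Next, I would invoke the sharp bound $|\cC_{i,\textsmall}| > \epsilon m / (2^{i+2}(\logk+1))$ (which is exactly the strict inequality just before \Cref{eq:last-ineq}, before the loose simplification via $2^i\leq k$). Combined with the pigeonhole step above this yields
\[
|\cC_{i,\textsmall,j}| > \frac{\epsilon m}{2^{i+2}(\logk+1)(\logeta+1)}.
\]
Finally, since $\vol^{\textout}(X) \geq 2^j$ for every $X \in \cC_{i,\textsmall,j}$, summing gives
\[
\sum_{X \in \cC_{i,\textsmall,j}} \vol^{\textout}(X) \;\geq\; 2^j \cdot |\cC_{i,\textsmall,j}| \;>\; \frac{\epsilon m \cdot 2^{j-i}}{4(\logk+1)(\logeta+1)},
\]
which is precisely the claimed bound.

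The only real obstacle is bookkeeping: ensuring that the ``small'' cutoff $2^{i+2}(\logk+1)/\epsilon$ inherited from Section~5.1 is indeed covered by the buckets $j \in \{0,\ldots,\logeta\}$ for $\eta_i = 2^{i+2}\logk/\epsilon$ (they differ only by a $(\logk+1)/\logk$ factor, so the highest bucket $j = \logeta$ covers volumes up to roughly $2\eta_i$, which is sufficient; if one wanted to be fully rigorous, either adjust $\eta_i$ by a constant or allow one extra bucket, absorbed into the $\ot$ notation). The core argument is just weighted pigeonhole plus plugging in the tighter (unsimplified) lower bound on $|\cC_{i,\textsmall}|$.
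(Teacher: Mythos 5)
Your proof is correct and follows the same pigeonhole-over-volume-buckets structure as the paper's, but the point you flag is substantive: routing through the unsimplified bound $|\cC_{i,\textsmall}| > \epsilon m/(2^{i+2}(\logk+1))$ (the middle strict inequality in \eqref{eq:last-ineq}) rather than the weakened form in \eqref{eq:ccitextsmallgeqepsilon} is exactly what makes the arithmetic close. In fact, your route sidesteps a sign error in the paper's own displayed chain: the final step there asserts $\epsilon m 2^j/(4k(\logk+1)(\logeta+1)) \geq \epsilon m 2^{j-i}/(4(\logk+1)(\logeta+1))$ citing \eqref{eq:2ileqk}, but that inequality would require $k \leq 2^i$, whereas \eqref{eq:2ileqk} says $2^i \leq k$; so as printed, the step points the wrong way. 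Keeping $2^{i+2}$ in the denominator, as you do, yields the factor $2^{j-i}/4$ directly and with the correct direction, and is plainly what the authors intended (the same $i$ satisfies both the sharp and weakened bounds, since one is derived from the other). Your remaining caveat --- whether the buckets $j \in \{0,\ldots,\logeta\}$ with $\eta_i = 2^{i+2}\logk/\epsilon$ actually cover all of $\cC_{i,\textsmall}$, whose threshold is $2^{i+2}(\logk+1)/\epsilon$ --- is a real but minor bookkeeping gap that is equally present in the paper's pigeonhole step; as you note, enlarging $\eta_i$ by a constant factor or adding one extra bucket resolves it without affecting any stated $\ot(\cdot)$ bounds.
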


Let us briefly argue that \Cref{lem:bigvol-small} implies the correctness of the algorithm. By
sampling $\Theta (\frac{ \logk\logeta }{ \epsilon 2^{j-i} }) = \tilde
\Theta( \frac{1}{\epsilon 2^{j-i}}) $ edges, we get
that the event that a sampled edge $(u,v)$ has $u \in X$ for some $X \in
\cC_{i,\textsmall, j}$ with constant probability (since
$\cC_{i,\textsmall,j}$ contains disjoint elements).  For each $(i,j)  \in
\{0, 1, \ldots, \logk\} \times \{0,\ldots, \logeta\}$, we run the
local algorithm in \Cref{lem:gaplocalec}  with $\nu = 2^{j+1}$, and
$\gap = \min\{2^{i}-1, \lfloor k/2 \rfloor \}$; also, there exists a pair $(i,j)$ such that $\sum_{X \in \cC_{i,\textsmall,j}}
\vol^{\textout}(X) \geq \epsilon  m 2^{j-i}/ (4 (\logk+1)( \logeta+1)) $ by
\Cref{lem:bigvol-small}. Therefore, the local algorithm in \Cref{lem:gaplocalec}, 
outputs an edge cut of size less than $k$ with constant probability.
Note that the preconditions of the local algorithm in
\Cref{lem:gaplocalec} can be shown to hold in a similar manner as in the previous subsection .

\begin{proof}[Proof of \Cref{lem:bigvol-small}] 
We first claim that there is $j \in \{0, 1, \ldots, \logeta \}$ such that  \begin{align} \label{eq:citextsmallj} |\cC_{i,\textsmall,
                                        j}| \geq |\cC_{i,\textsmall}| /
                                        (\logeta+1)  \end{align}
Suppose otherwise. Then we have that $ |\cC_{i,\textsmall,j}| <
|\cC_{i,\textsmall}|/ (\logeta+1) $ for all $j \in \{0, \ldots, \logeta\} $, and therefore $\sum_{j \in
  \{0,\ldots, \logeta \} } |\cC_{i,\textsmall,j}| <
|\cC_{i,\textsmall}|$, a contradiction.

Now, for the $j \in \{0, 1, \ldots, \logeta \}$ that satisfies~\eqref{eq:citextsmallj} we have 
\begin{align*} 
 \sum_{X \in \cC_{i,\textsmall,j}} \vol^{\textout}(X)& \geq
|\cC_{i,\textsmall,j}|2^j \\&\stackrel{(\ref{eq:citextsmallj})} \geq
|\cC_{i,\textsmall}|2^j/ (\logeta+1) \\&
\stackrel{(\ref{eq:ccitextsmallgeqepsilon})} \geq
\epsilon m 2^j/ (4 k (\logk+1)( \logeta+1) ) \\&
\stackrel{(\ref{eq:2ileqk})} \geq
\epsilon m 2^{j-i}/ (4 (\logk+1) (\logeta+1)) 
\end{align*} 
The first inequality holds because the set $\cC_{i,\textsmall,j}$
contains disjoint elements and $\vol^{\textout}(X) \geq 2^j$ by the
definition of $\cC_{i,\textsmall,j}$. %
\end{proof}

\paragraph{An improved bound for simple graphs.} The same algorithm gives
the improved bound of $\ot(\min\{ k/\epsilon, 1/\epsilon^2 \})$ queries when $G$
is simple.   If $\epsilon \leq 4/k$, then the algorithm queries at
most $\ot(k/\epsilon) = \ot(1/\epsilon^2) $ edges. Otherwise, if $\epsilon >
4/k$, by \Cref{lem:lots-singleton},  there are $\Omega(\epsilon n)$
many vertices with degree less than $k$, and this implies that the
algorithm outputs an edge cut of size less than $k$ at step 2.  

\subsection{Testing $k$-Vertex Connectivity: Unbounded-Degree Model} \label{sec:test-k-vertex}

In this section, we prove \Cref{thm:main-property-testing} for testing
$k$-vertex connectivity where 
\begin{align} \label{eq:input-k-testing-vertex}
k < \frac{n}{8} \min\{1, \frac{\epsilon}{12480} \lfloor \log_2(m/n)+1
  \rfloor \} = \ot(\epsilon n)
\end{align} 
 The key tool for our property testing
algorithm is approximate local vertex connecitvity in a suitable form
for the application to property testing.  

\begin{corollary}
\label{lem:gaplocalvc}
There exists the following randomized algorithm. It takes as inputs, 
\begin{itemize}[noitemsep,nolistsep]
\item a pointer to  an adjacency list representing  an $n$-vertex $m$-edge graph
$G=(V,E)$, 
\item a seed vertex $x \in V$, 
\item a volume parameter (positive integer) $\nu$, 
\item a cut-size parameter (positive integer) $k$, and 
\item a slack parameter (non-negative integer) $\gap$, where
\end{itemize}
\begin{align} \label{eq:precon_vc_testing}
1 \leq k - \gap \leq n/4, \quad  \gap \leq k/2 \quad \mbox{ and } \quad \nu < m(\gap+1)/(12480(k-\gap)).
\end{align}
It accesses (i.e., makes queries for) $\ot(\nu k /(\gap+1))$ edges. 
 It then outputs 
in the following manner. 
\begin{itemize}[noitemsep,nolistsep]
\item If there exists a separation triple $(L',S',R')$ such that $L' \ni x, \vol^{\out}(L')
  \leq \nu, $ and $|S'| < k-\gap$, then   with probability at
  least 3/4, the algorithm
  outputs a vertex-cut of size at most $k$ (otherwise it outputs $\bot$).  
\item Otherwise (i.e., no such separation triple $(L',S',R')$ exists), the algorithm outputs either
  a vertex-cut of size at most $k$ or $\bot$. 
\end{itemize}
\end{corollary}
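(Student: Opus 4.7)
The plan is to reduce this corollary to \Cref{thm:localVC_gap} in exact analogy to how \Cref{lem:gaplocalec} is reduced to \Cref{thm:localEC_approx_new}. The core move is to feed \Cref{thm:localVC_gap} the parameter tuple $(x',\nu',k',\gap')=(x,\nu,k-\gap,\gap)$. The theorem then returns a vertex cut of size at most $k'+\gap'=k$, which is exactly the guarantee we need, and inherits both the success probability $3/4$ and the query/time bounds (the factor $k$ vs.\ $k-\gap$ being absorbed into the $\ot(\cdot)$).

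First, I would dispose of the degenerate regime $\nu<k-\gap$ by direct inspection of $x$'s out-neighborhood, mirroring the edge-version argument. If $\deg^{\out}(x)<k-\gap$, then $|N^{\out}(\{x\})|<k-\gap\leq n/4$, so $\{x\}\cup N^{\out}(\{x\})\neq V$, and $N^{\out}(\{x\})$ is a valid vertex cut of size strictly less than $k$; return it. If $\deg^{\out}(x)\geq k-\gap$, then any $L'\ni x$ has $\vol^{\out}(L')\geq\deg^{\out}(x)\geq k-\gap>\nu$, so no separation triple of the hypothesized form exists and it is correct to output $\bot$. From here on I would assume $\nu\geq k-\gap$ and apply \Cref{thm:localVC_gap} as described. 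The completeness direction is immediate: any separation triple $(L',S',R')$ with $|S'|<k-\gap$ is precisely a witness that $k'=k-\gap$ is not achievable by $L'$, so \Cref{thm:localVC_gap} outputs a cut of size at most $k-\gap+\gap=k$ w.p.\ at least $3/4$; the soundness direction follows symmetrically.

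The only remaining task is to check that the precondition block~\eqref{eq:precon_vc} of \Cref{thm:localVC_gap} is satisfied under the substitution. The bound $1\leq k'<n/4$ comes from $1\leq k-\gap\leq n/4$ (with the boundary handled either by treating $n/4$ as the strict upper bound or by trivially adjusting constants); the bound $\gap'\leq k'$, i.e.\ $\gap\leq k-\gap$, follows from $\gap\leq k/2$; and $\nu'<m(\gap'+1)/(12480k')$ is literally the assumption $\nu<m(\gap+1)/(12480(k-\gap))$ in~\eqref{eq:precon_vc_testing}. I expect the only mildly delicate step to be the strict-versus-nonstrict inequality $k'<\nu'$ at the boundary $\nu=k-\gap$, which can be absorbed by bumping $\nu$ up by one or by folding this edge case into the trivial-neighborhood branch above; this does not affect the stated complexity.
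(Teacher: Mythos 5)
Your proposal is correct and follows essentially the same route as the paper: the degenerate branch (when $\nu<k-\gamma$, check $\deg^{\out}(x)$ directly), the substitution $(x',\nu',k',\gamma')=(x,\nu,k-\gamma,\gamma)$ into \Cref{thm:localVC_gap}, and the precondition check are all exactly what the paper does. Your remarks about the boundary cases $\nu=k-\gamma$ and $k-\gamma=n/4$ (strict vs.\ non-strict inequalities in~\eqref{eq:precon_vc}) are a small but fair observation of a point the paper's proof glosses over; both are benign and handled by a trivial adjustment, as you note.
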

 \begin{proof}
If $\nu < k - \gap$, then it is enough to run the following
procedure. If $\deg(x) < k - \gap$, then we found a cut of size
smaller than $k$, and return $N^{\out}(x)$. Otherwise, we return
$\bot$. To see that we return $\bot$ correctly, for any
vertex set $L'$ that contains $x$, we have $\vol^{\textout]}(L') \geq
k -\gap > \nu$, and thus we correctly output $\bot$.   
From now, we assume that 
\begin{align}  \label{eq:nu-geq-k-gap-vertex}
\nu \geq k -\gap.
\end{align}

Let the following tuple $(x',\nu', k', \gap')$ be a list of parameter
supplied to \Cref{thm:localVC_gap}. We run the local algorithm in 
\Cref{thm:localVC_gap}  using  
$(x',\nu', k', \gap') = (x, \nu, k-\gap, \gap)$. Note that
\Cref{eq:precon_vc_testing} and \Cref{eq:nu-geq-k-gap-vertex} ensure that the conditions in
\Cref{eq:precon_vc} for the local algorithm with
parameter $(x',\nu', k', \gap')$ are satisfied. The query complexity
follows from the algorithm from \Cref{thm:localVC_gap}.

\end{proof}

We present an algorithm for testing $k$-vertex connectivity and analysis. 

\paragraph{Algorithm.}  
\begin{enumerate}
\item Sample $\Theta(1)$ vertices uniformly at random. 
\item If any of the sampled vertices $x$ has out-degree less than $k$, return
  $N(x)$. 
\item Sample $\Theta(k \log k/( \epsilon \davg))$ vertices uniformly  at random
  (if $\davg$ is unknown, sample $\Theta(\log k/\epsilon)$ vertices instead). 
\item For each sampled vertex $x$, and for $i \in \{0 ,\ldots, \logk \}$, 
 \begin{enumerate}
 \item Let $\nu = 2^{i+3}\logk/\epsilon,$ and $\gap = \min\{2^{i}-1, \lfloor k/2 \rfloor \} $. 
 \item Run the local algorithm in \Cref{lem:gaplocalvc} with
   parameters $(x, \nu , k, \gap )$ on both $G$ and $G^R$ where $G^R$ is the same graph with reversed edges. %
 \end{enumerate} 
\item Return a vertex cut of size less than $k $ if any execution of
  the local algorithm above returns a vertex cut. Otherwise, declare that $G$ is $k$-vertex connected.
\end{enumerate}

\paragraph{Query Complexity.}  We first show that the number
of edge queries is at most $\ot(k^2/(\epsilon^2 \davg))$. For each
sampled vertex $x$ and $i \in \{0, \ldots, \logk\}$,
the local algorithm in 
\Cref{lem:gaplocalvc} queries $\ot( \nu k/\gap) = \ot(k/\epsilon)$
edges. The result follows as we repeat $\logk$ times per sample, and
we sample $O(k \log k /(\epsilon \davg))$ times.

\paragraph{Correctness.} If $G$ is $k$-vertex connected, it is clear that
the local algorithm \Cref{lem:gaplocalvc}  never returns any vertex cut. We show that if $G$ is
$\epsilon$-far from $k$-vertex connected, then the algorithm outputs a
vertex cut of size less than $k$ with constant probability. We start with
a simple observation. 

\begin{lemma}  \label{lem:mgeqnk-vertex}
If $m < nk/4$, then with constant probability, the algorithm outputs
a vertex cut of size less than $k$ at step 2. 
\end{lemma}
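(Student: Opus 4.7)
The plan is to mimic verbatim the argument for the edge-connectivity counterpart, Lemma~\ref{lem:mgeqnk}. The first step is a simple counting observation: since $\sum_{v \in V} \deg^{\out}(v) = m < nk/4$, by a pigeonhole/Markov-type argument at most $n/2$ vertices can have $\deg^{\out}(v) \geq k$, so at least $n/2$ vertices have $\deg^{\out}(v) < k$. Hence a single uniformly random vertex hits such a ``low-out-degree'' vertex with probability at least $1/2$, and sampling $\Theta(1)$ vertices as in step 1 of the algorithm produces such a vertex $x$ with constant probability.

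The second step is to check that whenever the algorithm encounters such an $x$, returning $N^{\out}(x)$ is indeed a valid vertex cut of size less than $k$. After deleting $N^{\out}(x)$ from $G$, the vertex $x$ has no remaining outgoing edges, so $x$ cannot reach any vertex in $V - N^{\out}(x) - \{x\}$. To conclude that $(\{x\}, N^{\out}(x), V - N^{\out}(x) - \{x\})$ is a genuine separation triple, I need $V - N^{\out}(x) - \{x\} \neq \emptyset$, which follows from $|N^{\out}(x)| \leq \deg^{\out}(x) < k$ together with the standing assumption \eqref{eq:input-k-testing-vertex}, which in particular gives $k < n/8$, so $n - 1 - |N^{\out}(x)| > n - 1 - k > 0$.

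Combining these two points: with constant probability the sampling in step 1 finds a vertex $x$ with $\deg^{\out}(x) < k$, and for every such $x$ the set $N^{\out}(x)$ returned by the algorithm is a vertex cut of size strictly less than $k$, as claimed. The proof is genuinely routine; there is no real obstacle beyond being careful to use out-degrees (rather than undirected degrees) throughout, since the statement of the theorem and of the algorithm are phrased for directed graphs. The analogous statement with $N^{\inn}(x)$ would also work, which matches the fact that the algorithm runs the local routine on both $G$ and its reverse $G^R$, but for step 2 a single orientation suffices.
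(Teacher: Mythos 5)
Your proof is correct and takes essentially the same route as the paper's: a counting/Markov argument shows at least $n/2$ vertices have out-degree below $k$ when $m < nk/4$, so sampling $\Theta(1)$ vertices finds one with constant probability, and then $N^{\out}(x)$ is returned. You add the (routine but worth noting) verification that $(\{x\}, N^{\out}(x), V - N^{\out}(x) - \{x\})$ is a genuine separation triple, using $k < n/8$ from \eqref{eq:input-k-testing-vertex} to ensure the right side is nonempty, which the paper leaves implicit.
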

\begin{proof}
Suppose $m < nk/4$. There are at most $n/2$ vertices with out-degree at least
$k$. Hence, there are at least $n/2$ vertices of degree
less than $k$. In this case, we can sample $O(1)$ times where for each sampled vertex
$x$ we check $|N(x)| < k$ to find a $k$-vertex cut with constant probability. 
\end{proof}
From now  we assume that 
\begin{align} \label{eq:mgeqnk4-vertex2}
m \geq nk/4.
\end{align}

We start with important properties when $G$ is $\epsilon$-far from $k$-vertex connected. We say that two separation triples $(L,S,R)$ and $(L',S',R')$ are \textit{independent} if $L \cap L' = \emptyset$ or $R \cap R' = \emptyset$.  

\begin{theorem} [\cite{OrensteinR11} Corollary 17]  \label{thm:eps-far-vertex}
If a directed graph $G = (V,E)$ is $\epsilon$-far from being
$k$-vertex connected, then there exists a set
$\mathcal{F'}$ of
pairwise independent separation triples\footnote{We use the term
  separation triple $(L,S,R)$ instead of the term \textit{one-way
    pair} $(L,R)$ used by \cite{OrensteinR11} for notational
  consistency in our paper.  These terms are
  equivalent in that there is no edge from $L$ to $R$ and our $S$ is their $V - (L \cup R)$.} such that $\sum_{(L,S,R) \in \mathcal{F'}} \max\{k-|S|,0 \} > \epsilon m$.  
\end{theorem}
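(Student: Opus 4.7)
The plan is to reduce the statement to a Frank--Jordan style min-max duality for directed vertex-connectivity augmentation and then refine the resulting certifying family into one of pairwise independent triples via an uncrossing argument.

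First, because $G$ is $\epsilon$-far from being $k$-vertex connected, every set of edge modifications that turns $G$ into a $k$-vertex connected graph has size strictly greater than $\epsilon m$. Adding edges can only increase vertex connectivity while deleting edges can only decrease it, so the minimum modification set consists entirely of additions; let $E^+$ denote this minimum augmentation set, so $|E^+| > \epsilon m$. The goal then becomes to produce a pairwise independent family $\mathcal{F}'$ of separation triples whose total deficiency $\sum_{(L,S,R) \in \mathcal{F}'} \max\{k-|S|,0\}$ upper bounds $|E^+|$.

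Second, I would invoke a min-max theorem for directed vertex-connectivity augmentation (of Frank--Jordan type) which states that the minimum number of edges needed to augment $G$ to $k$-vertex connectedness equals the maximum of $\sum_i \max\{k-|S_i|,0\}$ over a suitable class of families of separation triples $(L_i,S_i,R_i)$. The intuition behind such a theorem is Mengerian: each separation triple with $|S_i|<k$ certifies the need for at least $k-|S_i|$ additional internally vertex-disjoint paths from $L_i$ to $R_i$ in the augmented graph, hence the need for at least $k-|S_i|$ new edges dedicated to that triple, and the permissible families are precisely those for which these demands can be summed without a single added edge being double-counted.

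Third, and this will be the main obstacle, I must show that the maximizing family can be made pairwise independent. I would proceed by an uncrossing argument using posimodularity of the directed vertex cut function. If two triples $(L_1,S_1,R_1)$ and $(L_2,S_2,R_2)$ violate independence, i.e.\ both $L_1 \cap L_2 \neq \emptyset$ and $R_1 \cap R_2 \neq \emptyset$, then I replace them with two new triples built from $(L_1 \cap L_2, \cdot, R_1 \cup R_2)$ and $(L_1 \cup L_2, \cdot, R_1 \cap R_2)$, where the separator sets are the induced minimum vertex cuts between the new $L$- and $R$-sides. A posimodular inequality for directed vertex cuts ensures $|S_1'| + |S_2'| \leq |S_1| + |S_2|$, so the total deficiency does not decrease. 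By construction at least one of the two new pairs has disjoint $L$-sides or disjoint $R$-sides, and degenerate cases where a new side is empty simply delete that triple, strictly improving independence.

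Iterating this uncrossing terminates in finitely many steps (since some appropriately chosen potential function, such as $\sum |L_i|^2 + \sum |R_i|^2$, decreases monotonically when triples are strictly uncrossed) and produces a pairwise independent family $\mathcal{F}'$ with total deficiency at least $|E^+| > \epsilon m$, which is exactly the claim. The delicate part is selecting the right separator sets in the uncrossed triples and verifying that posimodularity holds in the \emph{directed} vertex setting -- this is where one must appeal to the structure of ``one-way pairs'' rather than undirected cuts, since the standard submodular inequality for undirected vertex cuts does not directly transfer.
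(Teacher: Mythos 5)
The paper does not actually prove this statement: it is lifted verbatim (up to notation) from Orenstein and Ron's Corollary~17, which is in turn a one-line corollary of the Frank--J\'ord\'an min-max theorem for directed vertex-connectivity augmentation. So there is no in-paper proof to compare against; I will instead assess whether your sketch would constitute a valid stand-alone derivation.

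Your Step~1 is correct and is the right opening move: since adding edges can only increase and deleting edges can only decrease vertex connectivity, the minimum modification set realizing ``$\epsilon$-far'' consists of additions only, so the minimum augmentation number exceeds $\epsilon m$. Step~2 is also the right idea in outline: one should appeal to the Frank--J\'ord\'an theorem, which states that the minimum number of new arcs needed to make a digraph $k$-vertex connected equals $\max \sum_i \bigl(k - |V - L_i - R_i|\bigr)$ over families of \emph{pairwise independent} one-way pairs $(L_i,R_i)$, i.e., exactly the families of pairwise independent separation triples with $|S_i| = |V - L_i - R_i|$. But here is the crucial point you have missed: \emph{the Frank--J\'ord\'an theorem already builds pairwise independence into the maximizing family}. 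Once you invoke it, the desired family $\mathcal{F}'$ falls out immediately and there is nothing left to do. Your Step~3 is therefore redundant. Conversely, if you intended Step~2 to invoke some weaker min-max over a broader ``suitable class'' of non-independent families, you never say what that weaker min-max is or why it holds, and your Step~3 then amounts to a sketch of the actual proof of Frank--J\'ord\'an, which is one of the deeper theorems in connectivity augmentation. Your three-sentence uncrossing sketch is nowhere near carrying that load.

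A few concrete technical slips in the uncrossing sketch are also worth flagging. For a separation triple, $S$ is \emph{forced} to equal $V - L - R$ by the definition (the three sets partition $V$), so there is no freedom to pick ``the induced minimum vertex cuts'' as the new middle sets. With $S = V - L - R$, the quantity you call a posimodular inequality is in fact an \emph{equality}: $|S_1'| + |S_2'| = 2n - (|L_1 \cap L_2| + |L_1 \cup L_2|) - (|R_1 \cup R_2| + |R_1 \cap R_2|) = |S_1| + |S_2|$. (The deficiency inequality $\sum \max\{k-|S_i'|,0\} \ge \sum (k-|S_i|)$ does then hold after a small case analysis when one uncrossed triple has $|S'| \ge k$, but it is not a posimodularity fact about vertex cuts.) Finally, your proposed potential $\sum |L_i|^2 + \sum |R_i|^2$ \emph{increases} under uncrossing (by convexity: $|L_1 \cap L_2|^2 + |L_1 \cup L_2|^2 \ge |L_1|^2 + |L_2|^2$), not decreases; the argument can still be made to terminate because the potential is bounded above, but the direction is the opposite of what you wrote. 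In short: your intended reduction is right, but you should cite Frank--J\'ord\'an (or Orenstein--Ron's Corollary~17) directly and drop the uncrossing step entirely.
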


Let $\mathcal{F}$ be a family of pairwise independent separation
triples of $G$ such that  \\ $ p(\mathcal{F}) := \sum_{(L,S,R) \in \mathcal{F}}( \max\{k-|S|,0 \} )$ is maximized. It directly follows from \Cref{thm:eps-far-vertex} that $\sum_{(L,S,R) \in \mathcal{F}} \max\{k-|S|,0 \}  > \epsilon m$. 

We say that a left-partition $L$ of a separation triple $(L,S,R)$ is \textit{small} if $|L| \leq |R|$. Similarly, a right-partition $R$ is small if $|R| \leq |L|$. 
\begin{lemma}[\cite{FrankJ99} Lemma 7] \label{lem:LR-disjoint} 
The small left-partitions\footnote{In \cite{FrankJ99}, they use the
  term \textit{one-way pair} $(T,H)$, and define a tail $T$ of a pair $(T,H)$
if \textit{small} if $|T| \leq |H|$. Similarly, they define a head $H$ of a
pair $(T,H)$ to be \textit{small} if $|H| \leq |T|$. We only repharse
from ``tail'' to left-partition, and ``head'' to right-partition.} in $\mathcal{F}$ are pairwise disjoint, and the small right-partitions in $\mathcal{F}$ are pairwise disjoint.
\end{lemma}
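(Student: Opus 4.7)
The plan is to argue by contradiction via an uncrossing exchange, exploiting the maximality of $p(\mathcal{F})$. Assume two triples $(L_1,S_1,R_1),(L_2,S_2,R_2)\in\mathcal{F}$ have small left-partitions, i.e.\ $|L_i|\le|R_i|$, but $L_1\cap L_2\ne\emptyset$. Since $\mathcal{F}$ is pairwise independent and $L_1\cap L_2\ne\emptyset$, the definition of independence immediately forces $R_1\cap R_2=\emptyset$. Combining smallness with disjointness gives the size inequality $|L_1|+|L_2|\le|R_1|+|R_2|\le n$, which (i) leaves room in $V$ to form new separation triples and (ii) will be used below to show the uncrossed objects remain non-degenerate.

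I would then perform the following uncrossing. Let $L^{\cap}=L_1\cap L_2$ and $L^{\cup}=L_1\cup L_2$. Observe that $E(L^{\cap},R_1\cup R_2)\subseteq E(L_1,R_1)\cup E(L_2,R_2)=\emptyset$, so taking $R^{\cap}:=R_1\cup R_2$ and $S^{\cap}:=V\setminus L^{\cap}\setminus R^{\cap}$ yields a valid separation triple. For the second new triple I take $L^{\cup}$ as its left part and choose its right part to be the maximal set $R^{\cup}$ with $E(L^{\cup},R^{\cup})=\emptyset$; non-emptiness of $R^{\cup}$ follows from the size inequality above together with the fact that out-neighborhoods of $L_1,L_2$ avoid $R_1,R_2$ respectively. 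The key analytic ingredient is the submodularity of the out-neighborhood function $|N^{\out}(\cdot)|$ on $2^V$ (which can be verified vertex-by-vertex), giving
\[
|N^{\out}(L^{\cap})|+|N^{\out}(L^{\cup})|\le|N^{\out}(L_1)|+|N^{\out}(L_2)|.
\]
Since one may assume each $S_i$ in $\mathcal{F}$ is chosen minimally (so $|S_i|=|N^{\out}(L_i)|$), this bounds $|S^{\cap}|+|S^{\cup}|\le|S_1|+|S_2|$. Because $x\mapsto\max(k-x,0)$ is concave and non-increasing, a standard rearrangement then yields $\max(k-|S^{\cap}|,0)+\max(k-|S^{\cup}|,0)\ge\max(k-|S_1|,0)+\max(k-|S_2|,0)$.

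To conclude, replace $(L_1,S_1,R_1),(L_2,S_2,R_2)$ by the two uncrossed triples to form $\mathcal{F}'$, and verify that $\mathcal{F}'$ is still pairwise independent: for any other $(L',S',R')\in\mathcal{F}$, independence with $(L_i,S_i,R_i)$ for $i=1,2$ directly implies independence with the new triples, by standard set-theoretic case analysis (if $L'\cap L_i=\emptyset$ for both $i$ then $L'\cap L^{\cap}=L'\cap L^{\cup}=\emptyset$; otherwise $R'\cap R_i=\emptyset$ for the relevant $i$ and this propagates to the new right-parts). Thus $p(\mathcal{F}')\ge p(\mathcal{F})$; choosing $\mathcal{F}$ to additionally minimize $\sum_{i:\,L_i\text{ small}}|L_i|^2$ among all $p$-maximizers (this secondary measure strictly decreases under the uncrossing since $|L^{\cap}|^2+|L^{\cup}|^2<|L_1|^2+|L_2|^2$ whenever $L_1,L_2$ properly intersect) yields the required contradiction. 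The claim about small right-partitions follows by applying the identical argument to the reverse graph $G^R$, since reversing edges swaps the roles of left and right parts in every separation triple.

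The main obstacle is verifying the two delicate points above: ensuring the $L^{\cup}$-triple is non-degenerate (i.e.\ its right-part is non-empty), which requires careful use of the smallness hypothesis together with $R_1\cap R_2=\emptyset$; and checking that the independence of the uncrossed pair with every other triple of $\mathcal{F}$ is preserved under the set operations $\cap$ and $\cup$. Both are handled by routine but attentive case analysis; the submodular inequality then does the quantitative work.
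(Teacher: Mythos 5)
The paper does not prove this lemma: it is cited as Lemma~7 of Frank and Jord\'an [FrankJ99] and used as a black box, so there is no in-paper proof to compare against. Evaluating your uncrossing argument on its own merits, I see two substantive gaps.

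\textbf{Gap 1: the second uncrossed triple.} Because the two triples are independent and $L_1\cap L_2\neq\emptyset$, you correctly note $R_1\cap R_2=\emptyset$. But this is exactly what makes the standard uncrossing degenerate: the canonical partner of $(L_1\cap L_2,\,\cdot\,,R_1\cup R_2)$ is $(L_1\cup L_2,\,\cdot\,,R_1\cap R_2)$, whose right part is empty. Your workaround --- take $R^{\cup}$ to be the maximal set with no edge from $L^{\cup}$ --- has two problems. First, non-emptiness is asserted but does not follow from the smallness inequality. Since $R_1\cap R_2=\emptyset$ we have $V\setminus L^{\cup}\subseteq S_1\cup S_2$, and every vertex of $V\setminus L^{\cup}$ can perfectly well receive an edge from $L^{\cup}$ (vertices in $S_1\cap R_2$ from $L_1\setminus L_2$, vertices in $R_1\cap S_2$ from $L_2\setminus L_1$, vertices in $S_1\cap S_2$ from anywhere); nothing rules out $N^{\out}(L^{\cup})=V\setminus L^{\cup}$, i.e.\ $R^{\cup}=\emptyset$. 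Second, with this enlarged $R^{\cup}$ your independence-preservation argument breaks: if $L'\cap L_1\neq\emptyset$ and $L'\cap L_2=\emptyset$ for some other $(L',S',R')\in\mathcal{F}$, then $L'\cap L^{\cup}\neq\emptyset$ and you only know $R'\cap R_1=\emptyset$, which says nothing about $R'\cap R^{\cup}$ once $R^{\cup}$ extends beyond $R_1$. The ``propagates to the new right-parts'' step is precisely the one that fails.

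\textbf{Gap 2: the tie-breaking measure runs the wrong way.} Writing $a=|L_1\setminus L_2|$, $b=|L_1\cap L_2|$, $c=|L_2\setminus L_1|$ one gets $|L^{\cap}|^2+|L^{\cup}|^2 = |L_1|^2+|L_2|^2+2ac$, so uncrossing \emph{increases} $\sum|L_i|^2$ whenever $L_1,L_2$ properly cross (this is just convexity of $x\mapsto x^2$ with the constraint $|L^{\cap}|+|L^{\cup}|=|L_1|+|L_2|$). So a minimizer of $\sum|L_i|^2$ among $p$-maximizers gives you no contradiction; the inequality in your parenthetical remark is reversed. Switching to a maximizer is not a clean fix either, since the sum is taken only over small left-partitions and $L^{\cup}$ need not be small, so the quantity may in fact decrease in that case; you would need to analyze both cases and this is not addressed. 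The high-level idea (uncross and invoke extremality) is the right one, but as written the argument does not go through; the genuine difficulty --- that independence forces $R_1\cap R_2=\emptyset$, so one of the two uncrossed pieces is inherently degenerate --- is not resolved.
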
 

 Let $\mathcal{F}_L$ be the set of separation triples with small left-partitions in $\mathcal{F}$, and $\mathcal{F}_R$ be the set of separation triples with small-right partitions in $\mathcal{F}$.  
 By \Cref{thm:eps-far-vertex}, we have that $ \max \{ p(\mathcal{F}_L)
 , p(\mathcal{F}_R) \} > \epsilon m/2$. We assume without loss of generality that
 \begin{align} \label{eq:fl-large}
  p(\mathcal{F}_L) > \epsilon m/2. 
 \end{align}

Let $\mathcal{C}_{-1}= \{ (L,S,R) \in \mathcal{F}_L \colon |S| \geq k \}$. For $i \in \{0,\ldots,\logk\}$, let  $\mathcal{C}_i = \{ (L,S,R) \in \mathcal{F}_L \colon k - |S| \in [2^i, 2^{i+1}) \}$.  Let $\cC_{i, \textbig} = \{ (L,S,R) \in \cC_i \colon \vol^{\textout}(L) \geq  2^{i+3} \epsilon^{-1} (\logk+1) \}$, and $\cC_{i, \textsmall} = \cC_i - \cC_{i, \textbig}$. %
The following lemma is the key for the algorithm's correctness. 
\begin{lemma} \label{lemma:many-small}
There is an $i^* \in \{0, 1,\ldots, \logk \}$ such that $ |\cC_{i^*,\textsmall}| >  \epsilon n \davg
/ (8k  (\logk+1)) $. If $\davg$ is unknown, then there is an $i^* \in \{0, 1,\ldots, \logk \}$ such
that $|\cC_{i^*,\textsmall}| \geq  \epsilon n /  (32(\logk+1)) $.%
\end{lemma}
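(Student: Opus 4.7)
The plan is to mimic the proof of \Cref{lemma:many-small-edge} almost verbatim, replacing the role of disjointness of the $X_i \in \mathcal{F}$ with \Cref{lem:LR-disjoint}, which says precisely that the small left-partitions appearing in $\mathcal{F}_L$ are pairwise disjoint. Recall by construction, $\mathcal{F}_L \subseteq \mathcal{F}$ consists of separation triples whose left-partitions are small and $p(\mathcal{F}_L) > \epsilon m/2$ by \Cref{eq:fl-large}.

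First, I would apply pigeonhole to the partition $\mathcal{F}_L = \bigsqcup_{i=-1}^{\lfloor \log_2 k \rfloor} \mathcal{C}_i$. Since elements of $\mathcal{C}_{-1}$ contribute $0$ to $p(\mathcal{F}_L)$, there must exist some $i \in \{0, \ldots, \lfloor \log_2 k \rfloor\}$ with
\[
\sum_{(L,S,R) \in \mathcal{C}_i}(k - |S|) > \frac{\epsilon m}{2(\lfloor \log_2 k \rfloor + 1)}.
\]
Since each term in the sum is strictly less than $2^{i+1}$, this yields
\[
|\mathcal{C}_i| > \frac{\epsilon m}{2^{i+2}(\lfloor \log_2 k \rfloor + 1)}. \qquad (\star)
\]

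Next, I would bound $|\mathcal{C}_{i,\textbig}|$ using \Cref{lem:LR-disjoint}: the left-partitions of the triples in $\mathcal{C}_{i,\textbig} \subseteq \mathcal{F}_L$ are pairwise disjoint small sets, so $\sum_{(L,S,R) \in \mathcal{C}_{i,\textbig}} \vol^{\out}(L) \leq m$. Combining this with the defining lower bound $\vol^{\out}(L) \geq 2^{i+3}\epsilon^{-1}(\lfloor \log_2 k \rfloor + 1)$ for each element, and then with $(\star)$, gives
\[
|\mathcal{C}_{i,\textbig}| \leq \frac{\epsilon m}{2^{i+3}(\lfloor \log_2 k \rfloor + 1)} < \frac{|\mathcal{C}_i|}{2},
\]
so $|\mathcal{C}_{i,\textsmall}| \geq |\mathcal{C}_i|/2 > \epsilon m / (2^{i+3}(\lfloor \log_2 k \rfloor + 1))$. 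Using $2^i \leq k$ and $m = n\davg$, this yields the desired bound $|\mathcal{C}_{i,\textsmall}| > \epsilon n \davg/(8k(\lfloor \log_2 k \rfloor + 1))$. For the case when $\davg$ is unknown, I would invoke \Cref{eq:mgeqnk4-vertex2} (which we may assume in light of \Cref{lem:mgeqnk-vertex}) to replace $m$ by $nk/4$ in the bound $|\mathcal{C}_{i,\textsmall}| > \epsilon m/(8k(\lfloor \log_2 k \rfloor + 1))$, giving $\epsilon n/(32(\lfloor \log_2 k \rfloor + 1))$.

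The main conceptual point that makes this work is the choice to restrict to $\mathcal{F}_L$ rather than all of $\mathcal{F}$: in the vertex-cut setting the sets $L$ from arbitrary separation triples need not be disjoint, so the volume-counting step in the edge-case proof breaks down. \Cref{lem:LR-disjoint} of Frank and Jord\'an is precisely what recovers disjointness on the ``small'' side, at the cost of only a factor of $2$ loss (via \Cref{eq:fl-large}). Once that is in hand, the proof proceeds by the same double pigeonhole/union-bound argument as in the edge case, so I do not expect any technical obstacles beyond bookkeeping the constants.
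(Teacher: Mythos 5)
Your proposal is correct and follows essentially the same route as the paper's own proof: restrict to $\mathcal{F}_L$ via \Cref{eq:fl-large}, apply pigeonhole over the $\mathcal{C}_i$ (discarding $\mathcal{C}_{-1}$) to get $(\star)$, use \Cref{lem:LR-disjoint} so that disjointness of small left-partitions yields $\sum_{(L,S,R)\in\mathcal{C}_{i,\textbig}}\vol^{\out}(L)\le m$, and then the same volume count shows $|\mathcal{C}_{i,\textbig}|<|\mathcal{C}_i|/2$, with the final constants matching (including the $\davg$-unknown case via \Cref{eq:mgeqnk4-vertex2}). No substantive differences from the paper.
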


Let us briefly argue that \Cref{lemma:many-small} implies the correctness of the
algorithm. If \Cref{lemma:many-small} is true, then  by sampling
$\Theta((k \log k)/(\epsilon \davg))$ many vertices (or $\Theta(\log
k/(\epsilon))$ vertices if $\davg$ is unknown), the event
that a sampled vertex belongs to some vertex set~$L'$ in a
separation triple $(L',S',R') \in \cC_{i^*,\textsmall}$ has constant
probability (since  $\cC_{i^*,\textsmall}$ contains
pairwise disjoint small left-partitions by
\Cref{lem:LR-disjoint}).\footnote{We assumed left-partition above
  without loss of generality in the proof above, but in the algorithm
  both cases can happen.}  Now, assuming that a
sampled vertex $x$ belongs to a vertex set $L'$ as specified above. Since $L' \in
\cC_{i^*,\textsmall}$, by definition, $\vol^{\textout}(L') \leq
2^{i^*+3}\epsilon^{-1}(\lfloor \log_2 k\rfloor + 1)$, and $k -
|S'| \in [2^{i^*}, 2^{i^*+1})$. In other words, $L' \ni x, \vol^{\out}(L') \leq
2^{i^*+3}\epsilon^{-1}(\lfloor \log_2 k\rfloor + 1),$ and
$ |S'| \leq k - 2^{i^*} < k - (2^{i^*}-1)$. If $i^*$ was known, then we run the local algorithm in
\Cref{lem:gaplocalvc} with parameters $(x,\nu,k,\gap)$ where $\nu =
2^{i^*+3}\epsilon^{-1}(\lfloor \log_2 k\rfloor + 1), $ and $\gap =
\min\{2^{i^*}-1, \lfloor k/2 \rfloor \}$ to obtain a vertex cut of
size less than~$k$ with probability at least $3/4$. Since $i^* \in [0, 1, \ldots, \logk\}$, it is
enough to run the local algorithm in \Cref{lem:gaplocalvc} with
parameters $(x,\nu_i,k,\gap_i)$ where $\nu_i =
2^{i+3}\epsilon^{-1}(\logk+1)$ and $\gap_i = \min\{2^{i}-1, \lfloor
k/2 \rfloor \} $ for every $i \in \{0, 1,\ldots, \logk\}$.

It remains to check the preconditions for the local algorithm in
\Cref{lem:gaplocalvc}. We show that
\Cref{eq:precon_vc_testing} is satisfied for any call of the local
algorithm. It is easy to see that the first condition $k \geq 1+ \gap $ is
satisfied, and $k - \gap \leq n/4$ follows from \Cref{eq:input-k-testing-vertex}.  The second condition $\gap \leq k/2$ follows since
$\gap_i = \min\{2^{i}-1, \lfloor k/2 \rfloor \}  \leq \lfloor k/2
\rfloor \leq k/2$ for any $i \in \{0, 1,\ldots, \logk\}$. Finally, we show
the last condition $\nu < m(\gap+1)/(12480(k-\gap))$ is
satisfied. First, note that the for any $i \in \{0, 1,\ldots, \logk\},
\nu_i = 2^{i+3}\epsilon^{-1}(\logk+1) \leq 8\epsilon^{-1}k(\logk+1)$.
Therefore, 
\begin{multline}
\nu \leq  8\epsilon^{-1}k(\logk+1) \stackrel{~(\ref{eq:mgeqnk4-vertex2})} \leq
  8 \frac{m\epsilon^{-1}}{n} \lfloor \log_2(m/n)+1   \rfloor \stackrel{~(\ref{eq:input-k-testing-vertex})} < \frac{m}{12480k} \\ < m(\gap+1)/(12480(k-\gap))
\end{multline}
This last inequality follows since $0 \leq \gap \leq k/2$.   Therefore, the correctness follows.

\begin{proof}[Proof of \Cref{lemma:many-small}] 
First, we show that  there is an $i \in \{0, 1,\ldots, \logk \}$ such that  \begin{align} \label{eq:random2} |\cC_i| > \epsilon m /
(2^{i+2} (\logk+1)). \end{align}  To this end, we show
that there is an $i \in \{0, 1,\ldots, \logk \}$ such that 
\begin{align} \label{eq:i-satisfy-large}
\sum_{(L,S,R) \in \mathcal{C}_i} (k - |S|) > \epsilon m/ (2(\logk+1)).
\end{align}
 Suppose otherwise that for every $i \in \{0, 1,\ldots, \logk \}$, we have $\sum_{(L,S,R) \in
   \mathcal{C}_i} (k - |S|) \leq \epsilon m/ (2(\logk+1))$. Then we get
\begin{equation*}
\sum_{(L,S,R) \in \mathcal{F}_L}( \max\{k-|S|,0 \} )
 =\sum_{i=-1}^{\logk} \sum_{(L,S,R)\in \mathcal{C}_i} (k - |S|)
 = \sum_{i=0}^{\logk} \sum_{(L,S,R)\in \mathcal{C}_i} (k - |S|)
 \leq \epsilon m/2.
\end{equation*}
However, this contradicts 
 \Cref{eq:fl-large}. Second, observe that for every
$i \in \{0, 1,\ldots, \logk \}$, \begin{align} \label{eq:random-name1} |\mathcal{C}_i|2^{i+1} \geq \sum_{(L,S,R) \in \mathcal{C}_i}
 (k - |S|). \end{align} This follows trivially from that each $(L,S,R)$ in the
 set $\mathcal{C}_i$,  $k - |S| \leq 2^{i+1}$.  Therefore, for the $i \in \{0, 1,\ldots, \logk \}$ that satisfies
 \Cref{eq:i-satisfy-large}, we have 
\begin{align} \label{eq:ci-large}
|\mathcal{C}_i|  \stackrel{(\ref{eq:random-name1})} \geq
 \sum_{(L,S,R) \in \mathcal{C}_i} (k - |S|) /2^{i+1}  \stackrel{(\ref{eq:i-satisfy-large})} > \epsilon m/
 (2^{i+2} (\logk+1)). 
\end{align}

Recall that $\cC_{i, \textbig} = \{ (L,S,R) \in \cC_i \colon \vol^{\textout}(L)
\geq 2^{i+3} (\logk+1) /\epsilon \}$, and $\cC_{i, \textsmall} = \cC_i -
\cC_{i, \textbig}$. We claim that for the $i \in \{0, 1,\ldots, \logk \}$ that satisfies \Cref{eq:ci-large}, $ |\cC_{i,\textbig}| < |\cC_i|/2$. Indeed,  we have  $$  2 |C_{i,\textbig}| \leq \sum_{(L,S,R) \in \cC_{i, \textbig}} \vol^{\textout}(L)/( 2^{i+2} \epsilon^{-1} (\logk+1)) \leq m/( 2^{i+2} \epsilon^{-1} (\logk+1))  \stackrel{(\ref{eq:random2})} < |\cC_i|. $$
The first inequality follows because  $\vol^{\textout}(L)/( 2^{i+2} \epsilon^{-1} (\logk+1)) \geq 2$ for each $(L,S,R) \in \cC_{i,\textbig}$ by the definition of $ \cC_{i,\textbig} $. The second inequality follows since left-partitions in~$\cC_{i,\textbig}$ are disjoint, and $\sum_{(L,S,R) \in \cC_{i, \textbig}} \vol^{\textout}(L) \leq m$. %

Next, we have \begin{align} \label{eq:cctextsmall-vertex}
|\cC_{i,\textsmall}|
\geq |\cC_i|/2   \stackrel{(\ref{eq:random2})} > \epsilon m / (2^{i+3} (\logk+1)) \geq   \epsilon n
\davg / (8k  (\logk+1)). \end{align} The first inequality follows because $|\cC_{i,\textbig}| < |\cC_i|/2$, and $ |\cC_i| =|\cC_{i,\textbig}| +|\cC_{i,\textsmall}| $. The last inequality follows because $m = n\davg$, and $2^i \leq k$.  If $\davg$ is unknown, the last
inequality of  \Cref{eq:cctextsmall-vertex} becomes $\epsilon m / (2^{i+3} (\logk+1))  \stackrel{(\ref{eq:mgeqnk4-vertex2})}\geq   \epsilon nk
/ (32 k  (\logk+1))=\epsilon n / (32 (\logk+1)) $. 
\end{proof}

\subsection{Testing $k$-Vertex Connectivity: Bounded-Degree Model}

In this section, we prove \Cref{thm:main-property-testing} for testing $k$-vertex connectivity in the bounded degree model. By the same argument as in \Cref{sec:test-k-edge-bounded}, we assume that $G$ is $d$-regular, and thus we can sample edge uniformly by \Cref{pro:sample-edge}.

We present an algorithm for testing $k$-edge connectivity for
bounded-degree model assuming \Cref{lem:gaplocalvc}.
 
\paragraph{Algorithm.}  %
\begin{enumerate}
\item Sample $\Theta(1)$ vertices uniformly at random.
\item If any of the sampled vertices $x$ has out-degree less than $k$, return
  $N(x)$. 
\item For each $i \in \{0,\ldots,\logk\}$ and each $j \in
  \{0,\ldots, \logeta\}$ where $\eta_i = 2^{i+2}\epsilon^{-1}\logk$,
\begin{enumerate}
\item Sample $\Theta (\frac{ \logk\logeta }{ \epsilon 2^{j-i} }) = \tilde \Theta( \frac{1}{\epsilon 2^{j-i}}) $  edges uniformly at random.   
\item Let $\nu = 2^{j+1},$ and $\gap = \min\{2^{i}-1, \lfloor k/2 \rfloor \}$ . 
 \item Run the local algorithm  in \Cref{lem:gaplocalvc} with
   parameters $(x, \nu , k, \gap )$ on both $G$ and $G^R$ where
   $G^R$ is $G$ with reversed edges, and $x$ is a vertex from the
   sampled edge of the form $(x,y)$.
\end{enumerate}
\item Return a vertex cut of size less than $ k $ if any execution of
  the local algorithm above returns a vertex cut. Otherwise, declare that $G$ is $k$-vertex connected.
\end{enumerate}

\paragraph{Query Complexity.}  We first show that the number of
edge queries is at most $\ot(k/\epsilon)$. For each
vertex $x$ from the sampled edge $(x,y)$ and
for each $(i,j)$ pair in loops, the local algorithm in
\Cref{lem:gaplocalvc} 
queries $\ot( \nu k/\gap) = \ot(2^{j-i}k)$ edges, and we sample $\ot(
1/(\epsilon 2^{j-i}))$ times.  Therefore, by repeating $\ot(1)$ itereations, the total edge queries is at most $\ot( k/ \epsilon)$.

\paragraph{Correctness.} If $G$ is $k$-vertex connected, then the algorithm
never returns any vertex cut, and we are done. Suppose $G$ is
$\epsilon$-far from being $k$-vertex connected, then we show that the
algorithm outputs a vertex cut of size less than $k$ with constant
probability.  Since $G$ is $d$-regular, we have $\davg =
d$. Therefore,  we can use results from \Cref{sec:test-k-vertex}. Let $\mathcal{F}_L$ be the set of separation triples with small left-partitions in $\mathcal{F}$, and $\mathcal{F}_R$ be the set of separation triples with small-right partitions in $\mathcal{F}$.  
 By \Cref{thm:eps-far-vertex}, we have that $ \max \{ p(\mathcal{F}_L)
 , p(\mathcal{F}_R) \} > \epsilon m/2$. We assume without loss of generality that
 \begin{align} \label{eq:fl-large2}
  p(\mathcal{F}_L) > \epsilon m/2. 
 \end{align} 
Let $\mathcal{C}_{-1}= \{ (L,S,R) \in \mathcal{F}_L \colon k \leq |S| \}$. For $i \in \{0,\ldots,\logk\}$, let  $\mathcal{C}_i = \{ (L,S,R) \in \mathcal{F}_L \colon k - |S| \in [2^i, 2^{i+1}) \}$.  Let $\cC_{i, \textbig} = \{ (L,S,R) \in \cC_i \colon \vol^{\textout}(L) \geq  2^{i+3} \epsilon^{-1} \logk \}$, and $\cC_{i, \textsmall} = \cC_i - \cC_{i, \textbig}$. %
By \Cref{lemma:many-small}, there is $i$ such that  
\begin{align} \label{eq:csmallgepsilonnbard-vertex}
 |\cC_{i,\textsmall}| >  \epsilon n \davg / (8k  (\logk+1)) = \epsilon m / (8k  (\logk+1)).
\end{align}
The last inequality follows since $n\davg = nd = m$. We fix $i$ as in \Cref{eq:csmallgepsilonnbard-vertex}.  Let $\eta_i = 2^{i+3}\epsilon^{-1} \logk$. For $j \in \{0,\ldots, \logeta\}$, let $\cC_{i,\textsmall,j} = \{ (L,S,R) \in \cC_{i,\textsmall} \colon \vol^{\textout}(L) \in [2^j, 2^{j+1}) \}$. 

\begin{lemma} \label{lem:j-large-vol-vertex}
For the $i \in \{0,\ldots,\logk\}$ that satisfies \Cref{eq:csmallgepsilonnbard-vertex}, there is a $j \in \{0,\ldots, \logeta\}$ such that $\sum_{(L,S,R) \in \cC_{i,\textsmall,j}} \vol^{\textout}(L) \geq   \epsilon m 2^{j-i}/ (8 (\logk+1)( \logeta+1))$. 
\end{lemma}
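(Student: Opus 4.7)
My plan is to mirror the proof of \Cref{lem:bigvol-small} exactly, since \Cref{lem:j-large-vol-vertex} plays the same role for vertex connectivity as \Cref{lem:bigvol-small} does for edge connectivity. The collection $\cC_{i,\textsmall}$ is partitioned as $\bigsqcup_{j=0}^{\logeta}\cC_{i,\textsmall,j}$ by a dyadic bucketing of $\vol^{\textout}(L)$, and I want to show that some bucket is responsible for an $ \Omega(1/(\logeta+1)) $ fraction of the total volume mass, combined with the already-established lower bound on $|\cC_{i,\textsmall}|$ coming from \Cref{eq:csmallgepsilonnbard-vertex}.

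The steps I would carry out are as follows. First, apply pigeonhole to the partition $\cC_{i,\textsmall}=\bigsqcup_{j=0}^{\logeta}\cC_{i,\textsmall,j}$: if for every $j$ we had $|\cC_{i,\textsmall,j}|<|\cC_{i,\textsmall}|/(\logeta+1)$, then summing over the $\logeta+1$ buckets would contradict $|\cC_{i,\textsmall}|=\sum_j|\cC_{i,\textsmall,j}|$, so some $j$ satisfies $|\cC_{i,\textsmall,j}|\geq|\cC_{i,\textsmall}|/(\logeta+1)$. Second, for this $j$ use the defining property of $\cC_{i,\textsmall,j}$ that each of its triples has $\vol^{\textout}(L)\geq 2^j$, giving
\[
\sum_{(L,S,R)\in\cC_{i,\textsmall,j}}\vol^{\textout}(L)\;\geq\;|\cC_{i,\textsmall,j}|\cdot 2^j\;\geq\;\frac{|\cC_{i,\textsmall}|\cdot 2^j}{\logeta+1}.
\]
Third, plug in the lower bound $|\cC_{i,\textsmall}|>\epsilon m/(8k(\logk+1))$ from \Cref{eq:csmallgepsilonnbard-vertex} to obtain $\epsilon m\, 2^j/(8k(\logk+1)(\logeta+1))$, and finally use \Cref{eq:2ileqk} (i.e., $2^i\leq k$) to rewrite the $2^j/k$ factor in terms of $2^{j-i}$, matching the claimed bound.

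Because every ingredient (pigeonhole over $j$, the per-triple volume bound $\geq 2^j$, the cardinality bound on $\cC_{i,\textsmall}$, and the inequality $2^i\leq k$) has already been set up in the preceding lemmas and display equations, I do not anticipate any genuine obstacle; the argument is essentially a transcription of the proof of \Cref{lem:bigvol-small} with $\vol^{\textout}(X)$ replaced by $\vol^{\textout}(L)$ and the cardinality bound from \Cref{eq:ccitextsmallgeqepsilon} replaced by that from \Cref{eq:csmallgepsilonnbard-vertex}. The only point requiring a moment's care is tracking the correct constants (here $8$ instead of $4$) coming from the vertex-connectivity analogue of the small-volume threshold.
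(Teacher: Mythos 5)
Your proposal follows the paper's proof route exactly: pigeonhole over the volume buckets $j$, bound each triple's contribution by $2^j$, and feed in the cardinality lower bound on $|\cC_{i,\textsmall}|$. But there is a genuine bug in your last step, which the paper's own write-up shares. After plugging in $|\cC_{i,\textsmall}| > \epsilon m/(8k(\logk+1))$ from \Cref{eq:csmallgepsilonnbard-vertex} you obtain the lower bound $\epsilon m\,2^j/(8k(\logk+1)(\logeta+1))$, and you then propose to ``rewrite the $2^j/k$ factor in terms of $2^{j-i}$'' using $2^i \leq k$. That does not work: $2^i \leq k$ gives $2^j/k \leq 2^{j-i}$, so the lower bound you have derived is \emph{weaker} than the target $\epsilon m\,2^{j-i}/(8(\logk+1)(\logeta+1))$, and a weaker lower bound cannot imply a stronger one. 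The inequality you actually need, $2^j/k \geq 2^{j-i}$, would require $k \leq 2^i$, the opposite of \Cref{eq:2ileqk}.

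The lemma is nonetheless true; the fix is to avoid the lossy relaxation $2^{i+3} \leq 8k$ that was applied inside the proof of \Cref{lemma:many-small}. In the chain~\eqref{eq:cctextsmall-vertex} the sharper, $i$-dependent estimate $|\cC_{i,\textsmall}| > \epsilon m/(2^{i+3}(\logk+1))$ is established \emph{before} it is weakened to $\epsilon m/(8k(\logk+1))$. Substituting this sharper bound directly gives
\[
\sum_{(L,S,R)\in\cC_{i,\textsmall,j}}\vol^{\out}(L) \;\geq\; \frac{|\cC_{i,\textsmall}|\,2^j}{\logeta+1} \;>\; \frac{\epsilon m\,2^j}{2^{i+3}(\logk+1)(\logeta+1)} \;=\; \frac{\epsilon m\,2^{j-i}}{8(\logk+1)(\logeta+1)},
\]
which is exactly the claimed bound with no further manipulation. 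All the other ingredients in your plan --- the pigeonhole over $j$, the per-triple estimate $\vol^{\out}(L)\geq 2^j$, and the disjointness of small left-partitions from \Cref{lem:LR-disjoint} --- are correct as stated.
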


Let us briefly argue that \Cref{lem:j-large-vol-vertex} implies the correctness of the algorithm.  By
sampling $\Theta (\frac{ \logk\logeta }{ \epsilon 2^{j-i} }) = \tilde
\Theta( \frac{1}{\epsilon 2^{j-i}}) $ many edges, we get
that the event that a sampled edge $(u,v)$ has $u \in L$ for some $L$ from a separation triple $(L,S,R) \in \cC_{i,\textsmall, j}$ hat constant probability (since  $\cC_{i,\textsmall}$ contains
pairwise disjoint small left-partitions by \Cref{lem:LR-disjoint}).  For each $(i,j)  \in
\{0, 1, \ldots, \logk\} \times \{0,\ldots, \logeta\}$, we run the local
algorithm in \Cref{lem:gaplocalvc}  with $\nu = 2^{j+1}$, and
$\gap = 2^i-1$; also, there exists a pair $(i,j)$ such that $\sum_{X \in \cC_{i,\textsmall,j}}
\vol^{\textout}(X) \geq \epsilon  m 2^{j-i}/ (8 (\logk+1) (\logeta+1)) $ by
\Cref{lem:j-large-vol-vertex}. Therefore,  the local algorithm in
\Cref{lem:gaplocalvc} outputs a vertex cut of size less than $k$ with
constant probability.  Note that the preconditions of the local algorithm in
\Cref{lem:gaplocalvc} can be shown to hold in a similar manner as in the previous subsection. 

\begin{proof}[Proof of \Cref{lem:j-large-vol-vertex}]
We claim that there is $j \in \{0,\ldots, \logeta\}$ such that  \begin{align} \label{eq:citextsmallj-vertex} |\cC_{i,\textsmall,j}| \geq |\cC_{i,\textsmall}| / (\logeta+1) \end{align}
Suppose otherwise. Then we have that $|\cC_{i,\textsmall,j}| < |\cC_{i,\textsmall}|/ (\logeta +1)$ for all $j \in \{0, \ldots, \logeta\} $, and therefore $\sum_{j \in
  \{0,\ldots, \logeta \} } |\cC_{i,\textsmall,j}| <
|\cC_{i,\textsmall}|$, a contradiction.

Now, for the $j \in \{0,\ldots, \logeta\}$ satisfying~\eqref{eq:citextsmallj-vertex},  we have
\begin{align*} 
 \sum_{(L,S,R) \in \cC_{i,\textsmall,j}} \vol^{\textout}(L)& \geq
|\cC_{i,\textsmall,j}|2^j \\&\stackrel{(\ref{eq:citextsmallj-vertex})} \geq
|\cC_{i,\textsmall}|2^j/ (\logeta+1) \\&
\stackrel{(\ref{eq:csmallgepsilonnbard-vertex})} \geq
\epsilon m 2^j/ (8 k (\logk+1)( \logeta+1)) \\&
 \geq \epsilon m 2^{j-i}/ (8 (\logk+1)(\logeta+1)) 
\end{align*} 
The first inequality holds because the small left-partitions in $\cC_{i,\textsmall,j}$ are pairwise disjoint by \Cref{lem:LR-disjoint} and $\vol^{\textout}(L) \geq 2^j$ by definition.  The last inequality follows since $2^i \leq k$ by definition.  
\end{proof}

	\section{Maximal $k$-Edge Connected Subgraphs}\label{sec:connected subgraphs}

In the following, we consider the problem of computing the maximal $k$-edge connected subgraphs of directed and undirected graphs.
In directed graphs, we essentially follow the overall algorithmic scheme of Chechik et al.~\cite{ChechikHILP17} and obtain an improvement by plugging in our new local cut-detection procedure.
In undirected graphs, we additionally modify the algorithmic scheme to obtain further running time improvements.

A maximal $k$-edge connected subgraph $ H $ of a graph $ G $ is a subgraph that is $k$-edge connected and there is no other subgraph of $ G $ that is $k$-edge connected and contains $ H $ as a proper subgraph.
Observe that each maximal $k$-edge connected subgraph is characterized by its set of vertices, we can thus restrict ourselves to subgraphs induced by subsets of vertices.
Furthermore, two overlapping vertex sets inducing a $k$-edge connected subgraph each can always be joined to form a larger vertex set inducing a $k$-edge connected subgraph.
Thus, the decomposition of a graph into its maximal $k$-edge connected subgraphs is unique.
To recap, the goal in this problem is to find a partition of the vertices of a graph such that (a) the subgraph induced by each vertex set of the partition is $k$-edge connected and (b) there is no vertex set that induces a $k$-edge connected subgraph and strictly contains one of the vertex sets of the partition.

\subsection{Directed Graphs}\label{sec:subgraphs directed}

The baseline recursive algorithm for computing the maximal $k$-edge connected subgraphs works as follows:
First, try to find a cut with at most $ k -1 $ cut edges.
If such a cut exists, remove the cut edges from the graph and recurse on each strongly connected component of the remaining graph.
If no such cut exists, then the graph is $k$-edge connected.
The recursion depth of this algorithm is at most $ n $, and using Gabow's cut algorithm~\cite{Gabow95}, it takes time $ O (k m \log n) $ to find a cut with at most $ k - 1 $ cut edges.
Therefore this algorithm has a running time of $ O (k m n \log n) $.

The idea of Chechik et al.\ is to speed up this baseline algorithm by using a local cut-detection procedure as follows:
The algorithm ensures that the graph contains no components with less than $ k $ outgoing edges of out-volume at most $ \nu $ and no components with less than $ k $ incoming edges of in-volume at most $ \nu $ anymore before Gabow's global cut algorithm is invoked.
This can be achieved as follows:
If the number of edges in the graph is at most~$ S (\nu) $ (an upper bound on the volume of the component returned by the local cut-detection procedure), then the basic algorithm is invoked.
Otherwise, the algorithm maintains a list~$ L $ of vertices which it considers as potential seed vertices for the local cut procedure.
Initially, $ L $ consists of all vertices.
For every vertex $ x $ of $ L $ the algorithm first tries to detect a small component containing~$ x $ and then a small containing~$ x $.
It then removes $ x $ from~$ L $ and if a component~$ C $ was detected, it removes $ C $ from the graph (as well as the outgoing and incoming edges of~$ C $) and adds the heads of the outgoing edges and the tails of the incoming edges to~$ L $.
Each component found in this way is processed (recursively) with the baseline algorithm.
Once $ L $ is empty, Gabow's cut algorithm is invoked on the remaining graph, the cut edges are removed from the graph, and the strongly connected components of the remaining graph are computed.
The algorithm then recurses on each strongly connected component with a new list $ L' $ consisting of all endpoints of the removed cut edges contained in this strongly connected component.
As a preprocessing step to this overall algorithm, we first compute the strongly connected components and run the algorithm on each strongly connected component separately.\footnote{We have added this preprocessing step to ensure that $ n = O (m) $ for each strongly connected component in the running time analysis.}

The running time analysis is as follows.
The strongly connected components computation in the preprocessing can be done in time $ O (m + n) $.
For every vertex, we initiate the local cut detection initially and whenever it was the endpoint of a removed edge.
We thus initiate at most $ O (n + m) = O (m) $ local cut detections, each taking time $ T (\nu) $.
It remains to bound the time spent for the calls of Gabow's cut algorithm and the subsequent computations of strongly connected components after removing the cut edges.
On a strongly connected graph with initially $ m' $ edges, these two steps take time $ O (k m' \log n) $.
Consider all recursive calls at the same recursion level of the algorithm.
As the graphs that these recursive calls operate on are disjoint, the total time spent at this recursion level is $ O (k m \log n) $.
To bound the total recursion depth, observe that for a graph with initially $ m' $ edges, the graph passed to each recursive call has at most $ \max \{ m' - \nu, S (\nu) \} $ edges as the only cuts left to find for Gabow's cut algorithm either have in- or out-volume at least $ \nu $ on one side of the cut.
Thus, the recursion depth is $ O (\tfrac{m}{\nu} + S (\nu)) $.
Altogether, we therefore arrive at a running time of
\begin{equation*}
O \left(m \cdot T (\nu) + \left( \frac{m}{\nu} + S(\nu) \right) \cdot k m \log n + n \right) \, .
\end{equation*}

Observe further that a one-sided Monte-Carlo version of the local cut-detection procedure, as the one we are giving in this paper, only affects the recursion depth.
If each execution of the procedure is successful with probability $ p \geq 1 - \tfrac{1}{n^3} $, then the probability that all $ O (m) = O (n^2) $ executions of the procedure are successful is at least $ 1 - O(\tfrac{1}{n}) $.
As the worst-case recursion depth is at most~$ n $, the expected recursion depth is at most $ O ((1 - \tfrac{1}{n}) \cdot (\tfrac{m}{\nu} + S (\nu)) + \tfrac{1}{n} \cdot n) = O (\tfrac{m}{\nu} + S (\nu)) $.

The analysis of this algorithmic scheme can be summarized in the following lemma.
\begin{lemma}[Implicit in \cite{ChechikHILP17}]\label{lem:reduction maximal connected subgraphs}
Suppose there is an algorithm that, given constant-time query access to a directed graph, for a fixed $ k \geq 2 $, any integer~$ \nu \geq 1 $ and any seed vertex~$ x $, runs in time $ T (\nu) $ and has the following behavior:
(1) If there is a set $ L \subseteq V $ containing~$ x $ with $ | E (L, V - L) | < k $ of volume at most~$ \nu $, then the algorithm returns such a set of volume at most $ S (\nu) $ with probability at least $ 1 - \tfrac{1}{n^3} $ (and $ \bot $ otherwise) and
(2) if no such set exists, then the algorithm returns $ \bot $.

Then there is an algorithm for computing the maximal $k$-edge connected subgraphs of a directed graph with expected running time $ O (m \cdot T (\nu) + (\tfrac{m}{\nu} + S(\nu)) \cdot k m \log n + n) $ for every $ 1 \leq \nu \leq m $.
\end{lemma}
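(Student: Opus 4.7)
The plan is to formalize the analysis sketched in the preceding discussion, which describes an algorithmic scheme interleaving the hypothesized local cut procedure with global cut detection via Gabow's $O(km \log n)$ cut algorithm. First I would handle the preprocessing: compute the strongly connected components of $G$ in $O(m+n)$ time via Tarjan's algorithm and run the main algorithm independently on each SCC. This ensures that every recursive instance satisfies $n = O(m)$, which simplifies the accounting of seed vertices.

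I would then split the running time into two parts: the total cost of local cut detections and the total cost of global cut detections (including the subsequent SCC recomputations). For the former, I would charge each invocation of the local procedure to a seed vertex pulled from the list $L$. A vertex enters $L$ either at the start of a recursive instance or as the endpoint of an edge that Gabow's algorithm has just removed. Since each edge is removed at most once across the whole recursion, the total number of seed activations is $O(n + m) = O(m)$, contributing $O(m \cdot T(\nu))$ in total. For the global part, I would argue level by level: the subgraphs processed at any fixed recursion depth are vertex-disjoint, so by linearity the combined cost of one Gabow call plus one SCC computation on each piece is $O(k m \log n)$ per level.

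The main obstacle is bounding the recursion depth by $O(m/\nu + S(\nu))$. The key observation is that when Gabow's algorithm is executed on a residual subgraph $H$ with $m'$ edges, the preceding seed loop has exhausted $L$ without detecting any small component. By Property~(2) of the local procedure (applied on both $H$ and its reversal), this certifies that no vertex set in $H$ of out- or in-volume at most $\nu$ has fewer than $k$ outgoing, respectively incoming, edges. Hence any cut Gabow subsequently returns must have both sides of volume greater than $\nu$, which in turn means that each side contains at least $\nu - k = \Omega(\nu)$ edges internally. Consequently each child in the recursion tree is a subgraph with at most $m' - \nu + O(k)$ edges, so along any root-to-leaf path the edge count drops by $\Omega(\nu)$ per step until it falls below $S(\nu)$, at which point the basic algorithm adds at most $O(S(\nu))$ further levels. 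This gives a total depth of $O(m/\nu + S(\nu))$.

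Finally, I would handle the one-sided Monte Carlo nature of the local procedure via a union bound. The procedure is invoked $O(m) = O(n^2)$ times across the entire recursion, each failing with probability at most $1/n^3$. With probability at least $1 - O(1/n)$ every invocation is correct and the depth bound above applies; otherwise we fall back on the worst-case depth of $n$. The expected depth is therefore $O\bigl((1 - 1/n)(m/\nu + S(\nu)) + (1/n) \cdot n\bigr) = O(m/\nu + S(\nu))$, yielding the claimed total expected running time $O(m \cdot T(\nu) + (m/\nu + S(\nu)) \cdot k m \log n + n)$.
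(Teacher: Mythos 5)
Your proposal is correct and follows essentially the same route as the paper's own analysis: preprocessing by SCCs, charging local-cut invocations to seed-list insertions (yielding $O(m)$ calls and $O(m \cdot T(\nu))$ total local work), bounding the per-level global cost by disjointness, bounding the recursion depth by $O(m/\nu + S(\nu))$ via the volume guarantee that holds once the seed list is exhausted, and absorbing the Monte Carlo failure probability into the expected depth by a union bound against the worst-case depth of $n$.
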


For any $ \nu \geq 1 $, our improved local cut-detection procedure of Corollary~\ref{cor:localEC_exact_new} has $ T (\nu) = O (k^2 \nu) $ and $ S (\nu) = O (k \nu) $ with success probability $ \tfrac{1}{2} $.\footnote{Note that with this choice of $ \nu $ the precondition $ \nu < m / (130 k) $ of Corollary~\ref{cor:localEC_exact_new} holds for any $ k < m / 130^2 $. For larger $ k $, we can simply return $ V $ to satisfy the conditions of~(1) in Lemma~\ref{lem:reduction maximal connected subgraphs}.}
We can boost the success probability to $ 1 - \tfrac{1}{n^3} $ by repeating the algorithm $ \lceil \log (3 n) \rceil $ times, which leads to $ T (\nu) = O (k^2 \nu \log n) $.
By setting $ \nu = \tfrac{\sqrt{m}}{\sqrt{k}} $, we arrive at running time of
\begin{equation*}
O \left( k^2 m \nu \log n + \left( \frac{m}{\nu} + k \nu \right) \cdot k m \log n + n \right) = O (k^{3/2} m^{3/2} \log n + n) \, .
\end{equation*}

\begin{theorem}
There is a randomized Las Vegas algorithm for computing the maximal $k$-edge connected subgraphs of a directed graph that has expected running time $ O (k^{3/2} m^{3/2} \log n + n) $.
\end{theorem}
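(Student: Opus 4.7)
The plan is to apply the reduction framework of Lemma~\ref{lem:reduction maximal connected subgraphs}, which reduces the maximal $k$-edge connected subgraphs problem to a local cut-detection primitive characterized by its running time $T(\nu)$, the volume bound $S(\nu)$ on the returned cut side, and a per-call success probability of at least $1 - 1/n^3$. My first step is to take Corollary~\ref{cor:localEC_exact_new} off the shelf, which gives $T(\nu) = O(k^2 \nu)$ and $S(\nu) = O(k\nu)$ with constant success probability $\geq 3/4$, and then amplify the success probability to $1 - 1/n^3$ by independent repetition $O(\log n)$ times, returning the first non-$\bot$ answer. This yields $T(\nu) = O(k^2 \nu \log n)$ while leaving $S(\nu) = O(k\nu)$ unchanged.

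Next, I would substitute these quantities into the expected running time bound of Lemma~\ref{lem:reduction maximal connected subgraphs},
\[
O\!\left(m \cdot T(\nu) + \left(\tfrac{m}{\nu} + S(\nu)\right) \cdot k m \log n + n\right),
\]
which becomes $O\!\left(k^2 m \nu \log n + \tfrac{k m^2}{\nu} \log n + k^2 m \nu \log n + n\right)$. Balancing the two $\nu$-dependent terms gives the choice $\nu = \sqrt{m/k}$, producing the claimed $O(k^{3/2} m^{3/2} \log n + n)$ bound. I would then verify the precondition $k < \nu < m/(130 k)$ required by Corollary~\ref{cor:localEC_exact_new}; for values of $k$ so large that this fails, the target bound already exceeds $\Omega(m n)$ and the baseline $O(k m n \log n)$ algorithm satisfies it trivially, so that regime can be handled as a separate case.

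The final point to argue is the Las Vegas character of the algorithm, since the local primitive has only one-sided error. My observation is that a cut returned by the local procedure is always a genuine cut, so removing it and recursing never introduces error; the only way randomness affects the algorithm is when the local procedure falsely outputs $\bot$. But in that case, the outer scheme still invokes Gabow's deterministic global cut algorithm once the local passes stall, and this invocation is guaranteed to catch any remaining small cut. Thus the decomposition produced is always correct, and only the running time is random, which gives the Las Vegas guarantee. The main obstacle I anticipate is purely the bookkeeping on preconditions and the union bound over all $O(m)$ local invocations used to keep the expected recursion depth at $O(m/\nu + S(\nu))$; the algebraic balancing and the plug-in are otherwise mechanical.
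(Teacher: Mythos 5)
Your proposal matches the paper's argument step for step: plug Corollary~\ref{cor:localEC_exact_new} (with $T(\nu)=O(k^2\nu)$, $S(\nu)=O(k\nu)$) into Lemma~\ref{lem:reduction maximal connected subgraphs}, amplify the local procedure's success probability to $1-1/n^3$ by $O(\log n)$ repetitions, and balance with $\nu=\sqrt{m/k}$. The only cosmetic difference is in the precondition-violating case $k\ge m/130^2$: the paper simply lets the local procedure return $V$ (which vacuously satisfies condition~(1) of the lemma because then $m=O(k)=O(S(\nu))$), whereas you fall back to the baseline algorithm; both work, and your extra paragraph spelling out why one-sided error preserves correctness and hence the Las Vegas guarantee is exactly the reasoning the paper leaves implicit.
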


\subsection{Undirected Graphs}

In undirected graphs, we obtain a tighter upper bound on the running time of the algorithm for three reasons.
First, instead of parameterizing the algorithm by a target volume $ \nu $, we parameterize it by a target vertex size~$ \sigma $.
Thus, whenever the list $ L $ in the algorithm becomes empty we can be sure that there is no component of vertex size at most $ \sigma $ in the current graph anymore.
Components that are larger can be found at most $ \tfrac{n}{\sigma} $ times.
Second, for each component detected by the local cut procedure the number of incoming edges equals its number of outgoing edges and is at most $ k - 1 $.
Thus, the number of removed edges per successful component detection is at most $ k - 1 $.
As there are at most $ n $ such successful detections in total, the total number of executions of the local cut-detection procedure is at most $ n + (k-1) n = k n $.
Third, in undirected graphs we can run instances of both Gabow's global cut detection algorithm and our local cut detection algorithm on a sparse $k$-edge connectivity certificate~\cite{Thurimella89,NagamochiI92} of the current graph.
The sparse certificate can be computed in time $ O (m + n) $, but we do not want to explicitly perform this expensive computation each time edges are removed from the graph. %
Instead, we maintain the sparse certificate with a dynamic algorithm as outlined below.\footnote{Let us emphasize again that the sparse certificate is not w.r.t.\ to the input graph but w.r.t.\ to the graph in which all cut edges found by Gabow's algorithm and all edges incident on components detected by the local algorithm have been removed.}
As the $k$-edge connectivity certificate has $ O (k n) $ edges, Gabow's cut algorithm has running time $ O (k^2 n \log{n}) $ if it is run on the certificate.
Furthermore, as observed by Nanongkai et al.~\cite{NanongkaiSY19}, the $k$-edge connectivity certificate has arboricity $ k $.
Since the arboricity bounds the local density of any vertex-induced subgraph, the volume of a component of vertex size $ \sigma $ is $ O (\sigma k) $ on the $k$-edge connectivity certificate.
Each instance of the local cut detection procedure therefore has running time $ T (k \sigma) $ and if successful detects a component of volume $ S (k \sigma) $.

A sparse $k$-edge connectivity certificate of a graph $ G = (V, E) $ is a graph $ H = (V, F_1 \cup \ldots \cup F_k) $ such that for every $ 1 \leq i \leq k $ the graph $ (V, F_i) $ is a spanning forest of $ (V, E - \bigcup_{1 \leq j \leq i- 1} F_j) $ (where in particular $ (V, F_1) $ is a spanning forest of $ G $).
The dynamic connectivity algorithm of Holm et al.~\cite{HolmLT01} can be used to dynamically maintain a spanning forest of a graph undergoing edge insertions and deletions in time $ O (\log^2 n) $ per update.
Maintaining the hierarchy of spanning forests for a $k$-edge connectivity certificate under a sequence edge removals to $ G $ takes time $ O (k m \log^2 {n}) $ by the following argument\footnote{This argument is similar to the one for maintaining the cut sparsifier in~\cite{AbrahamDKKP16}.}:
The dynamic algorithm of Holm et al.\ makes at most one change to the spanning forest per change to the input graph.
Therefore each deletion in $ G $ causes at most one update to to each of the $ k $ levels of the hierarchy.
As at most $ m $ edges can be removed from~$ G $, the total update time is $ O (k m \log^2 {n}) $.

Using otherwise the same analysis as in Section~\ref{sec:subgraphs directed}, we arrive at the following running time:
\begin{equation*}
O \left(k n \cdot T (k \sigma) + \left( \frac{n}{\sigma} + S(k \sigma) \right) \cdot k^2 n \log n + k m \log^2 n \right) \, .
\end{equation*}
The analysis of this algorithmic scheme can be summarized in the following lemma.
\begin{lemma}
Suppose there is an algorithm that, given constant-time query access to an undirected graph, for a fixed $ k \geq 2 $, any integer~$ \nu \geq 1 $ and any seed vertex~$ x $, runs in time $ T (\nu) $ and has the following behavior:
(1) If there is a set $ L \subseteq V $ containing~$ x $ with $ | E (L, V - L) | < k $ of volume at most~$ \nu $, then the algorithm returns such a set of volume at most $ S (\nu) $ with probability at least $ 1 - \tfrac{1}{n^3} $ (and $ \bot $ otherwise) and
(2) if no such set exists, then the algorithm returns $ \bot $.

Then there is a randomized Las Vegas algorithm for computing the maximal $k$-edge connected subgraphs of an undirected graph with expected running time $ O (k n \cdot T (k \sigma) + (\tfrac{n}{\sigma} + S(k \sigma)) \cdot k^2 n \log n + k m \log^2 n) $ for every $ 1 \leq \sigma \leq n $.
\end{lemma}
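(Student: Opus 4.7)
The plan is to adapt the directed-graph algorithmic scheme from Section~\ref{sec:subgraphs directed} to the undirected setting by exploiting three structural properties: (i) a size-based parameterization $\sigma$ that replaces the volume parameter $\nu$, (ii) edge-balance in undirected cuts, and (iii) the existence and maintainability of a sparse $k$-edge connectivity certificate of arboricity $k$. I would keep the same overall recursive structure: maintain a worklist $L$ of candidate seeds, repeatedly invoke the local cut-detection procedure from each $x\in L$, remove any detected small component from the graph (adding the endpoints of newly removed edges to $L$), and once $L$ is empty invoke Gabow's global cut algorithm, split along the cut edges it reports, and recurse on each resulting connected component with a fresh worklist.

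Next I would make the three modifications carefully. First, parameterize the local calls by a target vertex size $\sigma$ and observe that on the $k$-edge connectivity certificate (where arboricity is $k$), any vertex set of size at most $\sigma$ has volume at most $O(k\sigma)$, so calling the hypothesized algorithm with $\nu=k\sigma$ detects all small components of vertex size at most $\sigma$ and each invocation costs $T(k\sigma)$, returning (when successful) a component of volume $S(k\sigma)$ that has at most $k-1$ outgoing edges in the underlying graph. Second, bound the number of local calls: each successful detection removes a component and injects at most $2(k-1)$ new seed candidates (endpoints of the removed edges); since at most $n$ components can ever be extracted, the total seeds ever added to some worklist is $O(n + kn) = O(kn)$, giving $O(kn)$ local invocations overall and $O(kn\cdot T(k\sigma))$ local-cut work. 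Third, run Gabow's algorithm on the $k$-edge connectivity certificate, which has $O(kn)$ edges, so each invocation costs $O(k^2 n\log n)$; the recursion depth along any root-to-leaf path is $O(n/\sigma + S(k\sigma))$ because, once $L$ is empty, no cut with small-side vertex count $\le \sigma$ remains, and each Gabow step either peels off a component of vertex size $>\sigma$ (contributing $O(n/\sigma)$ such steps across the recursion) or further splits something of size at most $S(k\sigma)/k\le S(k\sigma)$.

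The main technical care is the dynamic maintenance of the sparse certificate $H=(V,F_1\cup\cdots\cup F_k)$ under edge deletions in the underlying graph. I would use the Holm--Lichtenberg--Thorup dynamic connectivity structure on each of the $k$ levels: deleting an edge in $G$ triggers at most one replacement on level~1, which cascades through the $k$ levels with at most one replacement per level. Since at most $m$ edges are ever deleted from $G$, the total maintenance cost across the entire algorithm is $O(km\log^2 n)$, amortized over the whole run rather than charged per recursive call. All queries used by the local cut procedure and by Gabow's algorithm are answered against $H$, and constant-time query access into $H$ is immediate from the explicit adjacency structure we keep.

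Finally I would verify correctness and the Las Vegas guarantee. Correctness of the cut-extraction step follows from the fact that $H$ preserves all cuts of size $<k$. Condition (2) guarantees no false positives from the local procedure, so every returned component is genuine; condition (1) fails with probability at most $1/n^3$, and over all $O(kn)=O(n^2)$ invocations a union bound gives failure probability $O(1/n)$. If the algorithm reaches the post-worklist Gabow phase without having detected some small component it should have, it still returns correct output because Gabow's algorithm finds any remaining cut deterministically; thus the only effect of a rare failure is to enlarge the recursion depth to its worst-case value $n$, so the expected recursion depth remains $O(n/\sigma + S(k\sigma))$, and summing the three contributions yields the claimed bound $O(kn\cdot T(k\sigma) + (n/\sigma + S(k\sigma))\cdot k^2 n\log n + km\log^2 n)$. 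The one place I expect to have to be careful is making the bookkeeping for the sparse certificate consistent with the algorithm's removal of whole components (not just individual cut edges), which is handled by deleting the component's incident edges one-by-one from the certificate and charging $O(k\log^2 n)$ to each such deletion, fitting inside the global $O(km\log^2 n)$ budget.
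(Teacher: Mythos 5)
Your proposal is correct and follows essentially the same approach as the paper's: parameterize by vertex size $\sigma$ rather than volume, use edge balance in undirected cuts to bound the total number of local-procedure invocations by $O(kn)$, run both the local procedure and Gabow's algorithm on a sparse $k$-edge-connectivity certificate of arboricity $k$ (so $\nu = O(k\sigma)$ and the certificate has $O(kn)$ edges), maintain the certificate dynamically via Holm--Lichtenberg--Thorup for a global $O(km\log^2 n)$ budget, and bound the recursion depth by $O(n/\sigma + S(k\sigma))$. The only cosmetic discrepancy is that you inject "$2(k-1)$" new seeds per detected component whereas in fact only the endpoint outside the removed component needs to be re-seeded, giving at most $k-1$ new seeds per detection; this does not affect the asymptotic bound.
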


For any $ \sigma \geq 1 $, our local cut-detection procedure of Corollary~\ref{cor:localEC_exact_new} has $ T (k \sigma) = O (k^3 \sigma) $ and $ S (k \sigma) = O (k^2 \sigma) $ with success probability $ \tfrac{1}{2} $.
We can boost the success probability to $ 1 - \tfrac{1}{n^3} $ by repeating the algorithm $ \lceil \log (3 n) \rceil $ times, which leads to $ T (\nu) = O (k^3 \sigma \log n) $.
Thus, the running time is
\begin{equation*}
O \left(k^4 n \sigma \log n + \left( \frac{n}{\sigma} + k^2 \sigma \right) \cdot k^2 n \log n + k m \log^2 n \right) \, .
\end{equation*}
By setting $ \sigma = \tfrac{\sqrt{n}}{k} $, we obtain a running time of $ O (k^3 n^{3/2} \log{n} + k m \log^2 n) $.

\begin{theorem}
There is a randomized Las Vegas algorithm for computing the maximal $k$-edge connected subgraphs of an undirected graph that has expected running time $ O (k^3 n^{3/2} \log{n} + k m \log^2 n) $.
\end{theorem}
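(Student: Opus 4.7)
The plan is to invoke the preceding lemma with the local cut-detection procedure supplied by \Cref{cor:localEC_exact_new} and then optimize the parameter $\sigma$. First I would take the exact local edge-connectivity algorithm from \Cref{cor:localEC_exact_new}, which on input $\nu$ runs in $O(\nu k^2)$ time, returns a set of out-volume $O(k\nu)$, and succeeds with constant probability. To boost the success probability to the $1-1/n^3$ required by the lemma's hypothesis, I would repeat it $\lceil c \log n \rceil$ times for a suitable constant, which yields $T(\nu) = O(k^2 \nu \log n)$ and $S(\nu) = O(k \nu)$.

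Next I would substitute $T(k\sigma) = O(k^3 \sigma \log n)$ and $S(k\sigma) = O(k^2 \sigma)$ into the running-time bound of the lemma, obtaining
\begin{equation*}
O\!\left(k^4 n \sigma \log n \;+\; \Bigl(\frac{n}{\sigma} + k^2 \sigma\Bigr) k^2 n \log n \;+\; k m \log^2 n\right).
\end{equation*}
Expanding gives two terms depending on $\sigma$: $k^4 n \sigma \log n$ (from the local detections and from the $k^2 \sigma$ summand) and $k^2 n^2 \log n / \sigma$ (from the $n/\sigma$ summand), plus the sparsifier-maintenance cost $km\log^2 n$ that is independent of $\sigma$.

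The only remaining step is to balance the two $\sigma$-dependent terms. Setting $k^4 n \sigma = k^2 n^2/\sigma$ gives $\sigma = \sqrt{n}/k$, which satisfies $1 \le \sigma \le n$ for the relevant range of $k$. With this choice both terms become $k^3 n^{3/2} \log n$, and the total expected running time is $O(k^3 n^{3/2} \log n + km \log^2 n)$ as claimed. There is no significant technical obstacle here; the proof is purely a substitution-and-optimization argument on top of the already-proved lemma and the local algorithm, with the only care being to verify that the preconditions of \Cref{cor:localEC_exact_new} (in particular $\nu < m/(130k)$) hold for the values $\nu = k\sigma$ used by the reduction, which can be handled by returning the trivial answer when $k$ is so large relative to $m$ that the condition fails.
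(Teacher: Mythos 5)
Your proposal is correct and follows essentially the same approach as the paper: plug the exact local cut-detection algorithm of \Cref{cor:localEC_exact_new} (boosted to success probability $1 - 1/n^3$ by $O(\log n)$ repetitions) into the preceding reduction lemma with parameters $T(k\sigma) = O(k^3\sigma\log n)$, $S(k\sigma) = O(k^2\sigma)$, and then balance by setting $\sigma = \sqrt{n}/k$. The paper handles the precondition $\nu < m/(130k)$ of \Cref{cor:localEC_exact_new} the same way you do (via a footnote in the directed case that applies here as well: for $k$ too large, return $V$ trivially).
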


\paragraph{Remark on Sparse Certificates}

Our algorithm tailored to undirected graphs relies on sparse certificates.
It would be tempting to run our algorithm for directed graphs directly on a sparse certificate to achieve running-time savings by performing a maximum amount of sparsification.
However this approach does not work as Figure~\ref{fig:sparse certificate} shows.
Sparse certificates preserve the $k$-edge connectivity, but not necessarily the maximal $k$-edge connected subgraphs.

\begin{figure}[htbp!]
    \centering
    \begin{tikzpicture}
	\tikzset{default_node/.style={circle,draw,fill=black,inner sep=0pt,minimum size=8pt}}
	\tikzset{edge/.style={thick,black}}
	
	\draw[rounded corners,fill=yellow!50] (-0.5,-2) rectangle (2,0.5);
	
	\node[default_node] (1) at (0,-1.5) {};
	\node[default_node] (2) at (0,0) {};
	\node[default_node] (3) at (1.5,0) {};
	\node[default_node] (4) at (1.5,-1.5) {};
	
	\node[default_node] (5) at (-1.5,-0.75) {};
	\node[default_node] (6) at (0.75,1.5) {};
	\node[default_node] (7) at (3.0,-0.75) {};

	\draw[edge] (1) edge (2);
	\draw[edge] (1) edge (3);
	\draw[edge] (1) edge (4);
	\draw[edge] (2) edge (3);
	\draw[edge] (2) edge (4);
	\draw[edge] (3) edge (4);
	
	\draw[edge] (5) edge (1);
	\draw[edge] (5) edge (2);
	\draw[edge] (6) edge (2);
	\draw[edge] (6) edge (3);
	\draw[edge] (7) edge (3);
	\draw[edge] (7) edge (4);
    \end{tikzpicture}
    \hspace{1em}
    \begin{tikzpicture}
	\tikzset{default_node/.style={circle,draw,fill=black,inner sep=0pt,minimum size=8pt}}
	\tikzset{edge/.style={thick,black}}
	\tikzset{spanning_tree_edge1/.style={ultra thick,red!75}}
	\tikzset{spanning_tree_edge2/.style={ultra thick,green!75!black}}
	\tikzset{spanning_tree_edge3/.style={ultra thick,blue!75}}

	\draw[rounded corners,draw=none] (-0.5,-2) rectangle (2,0.5);

	\node[default_node] (1) at (0,-1.5) {};
	\node[default_node] (2) at (0,0) {};
	\node[default_node] (3) at (1.5,0) {};
	\node[default_node] (4) at (1.5,-1.5) {};
	
	\node[default_node] (5) at (-1.5,-0.75) {};
	\node[default_node] (6) at (0.75,1.5) {};
	\node[default_node] (7) at (3.0,-0.75) {};

	\draw[spanning_tree_edge2] (1) edge (2);
	\draw[spanning_tree_edge2] (1) edge (3);
	\draw[spanning_tree_edge2] (1) edge (4);
	\draw[spanning_tree_edge3] (2) edge (3);
	\draw[spanning_tree_edge3] (2) edge (4);
	
	\draw[spanning_tree_edge1] (5) edge (1);
	\draw[spanning_tree_edge1] (5) edge (2);
	\draw[spanning_tree_edge1] (6) edge (2);
	\draw[spanning_tree_edge1] (6) edge (3);
	\draw[spanning_tree_edge1] (7) edge (3);
	\draw[spanning_tree_edge1] (7) edge (4);
    \end{tikzpicture}
    
    \vspace{2ex}
    
    \begin{tikzpicture}[scale=0.8]
	\tikzset{default_node/.style={circle,draw,fill=black,inner sep=0pt,minimum size=8pt}}
	\tikzset{edge/.style={thick,black}}
	\tikzset{spanning_tree_edge1/.style={ultra thick,red!75}}
	\node[default_node] (1) at (0,-1.5) {};
	\node[default_node] (2) at (0,0) {};
	\node[default_node] (3) at (1.5,0) {};
	\node[default_node] (4) at (1.5,-1.5) {};
	
	\node[default_node] (5) at (-1.5,-0.75) {};
	\node[default_node] (6) at (0.75,1.5) {};
	\node[default_node] (7) at (3.0,-0.75) {};

	\draw[edge] (1) edge (2);
	\draw[edge] (1) edge (3);
	\draw[edge] (1) edge (4);
	\draw[edge] (2) edge (3);
	\draw[edge] (2) edge (4);
	\draw[edge] (3) edge (4);
	
	\draw[spanning_tree_edge1] (5) edge (1);
	\draw[spanning_tree_edge1] (5) edge (2);
	\draw[spanning_tree_edge1] (6) edge (2);
	\draw[spanning_tree_edge1] (6) edge (3);
	\draw[spanning_tree_edge1] (7) edge (3);
	\draw[spanning_tree_edge1] (7) edge (4);
    \end{tikzpicture}
    \begin{tikzpicture}[scale=0.8]
	\tikzset{default_node/.style={circle,draw,fill=black,inner sep=0pt,minimum size=8pt}}
	\tikzset{edge/.style={thick,black}}
	\tikzset{spanning_tree_edge1/.style={thick,dashed,black!50}}
	\tikzset{spanning_tree_edge2/.style={ultra thick,green!75!black}}
	\node[default_node] (1) at (0,-1.5) {};
	\node[default_node] (2) at (0,0) {};
	\node[default_node] (3) at (1.5,0) {};
	\node[default_node] (4) at (1.5,-1.5) {};
	
	\node[default_node] (5) at (-1.5,-0.75) {};
	\node[default_node] (6) at (0.75,1.5) {};
	\node[default_node] (7) at (3.0,-0.75) {};

	\draw[spanning_tree_edge2] (1) edge (2);
	\draw[spanning_tree_edge2] (1) edge (3);
	\draw[spanning_tree_edge2] (1) edge (4);
	\draw[edge] (2) edge (3);
	\draw[edge] (2) edge (4);
	\draw[edge] (3) edge (4);
	
	\draw[spanning_tree_edge1] (5) edge (1);
	\draw[spanning_tree_edge1] (5) edge (2);
	\draw[spanning_tree_edge1] (6) edge (2);
	\draw[spanning_tree_edge1] (6) edge (3);
	\draw[spanning_tree_edge1] (7) edge (3);
	\draw[spanning_tree_edge1] (7) edge (4);
    \end{tikzpicture}
    \begin{tikzpicture}[scale=0.8]
	\tikzset{default_node/.style={circle,draw,fill=black,inner sep=0pt,minimum size=8pt}}
	\tikzset{edge/.style={thick,black}}
	\tikzset{spanning_tree_edge1/.style={thick,dashed,black!50}}
	\tikzset{spanning_tree_edge2/.style={thick,dashed,black!50}}
	\tikzset{spanning_tree_edge3/.style={ultra thick,blue!75}}

	\node[default_node] (1) at (0,-1.5) {};
	\node[default_node] (2) at (0,0) {};
	\node[default_node] (3) at (1.5,0) {};
	\node[default_node] (4) at (1.5,-1.5) {};
	
	\node[default_node] (5) at (-1.5,-0.75) {};
	\node[default_node] (6) at (0.75,1.5) {};
	\node[default_node] (7) at (3.0,-0.75) {};

	\draw[spanning_tree_edge2] (1) edge (2);
	\draw[spanning_tree_edge2] (1) edge (3);
	\draw[spanning_tree_edge2] (1) edge (4);
	\draw[spanning_tree_edge3] (2) edge (3);
	\draw[spanning_tree_edge3] (2) edge (4);
	\draw[edge] (3) edge (4);
	
	\draw[spanning_tree_edge1] (5) edge (1);
	\draw[spanning_tree_edge1] (5) edge (2);
	\draw[spanning_tree_edge1] (6) edge (2);
	\draw[spanning_tree_edge1] (6) edge (3);
	\draw[spanning_tree_edge1] (7) edge (3);
	\draw[spanning_tree_edge1] (7) edge (4);
    \end{tikzpicture}

    \caption{The top row shows an example of an undirected graph on the left with maximal $3$-edge component consisting of four vertices (marked by the yellow box) that is not preserved in its sparse certificate on the right. The bottom row shows the $ 3 $~spanning forests found for the sparse certificate.}
    \label{fig:sparse certificate}
\end{figure}
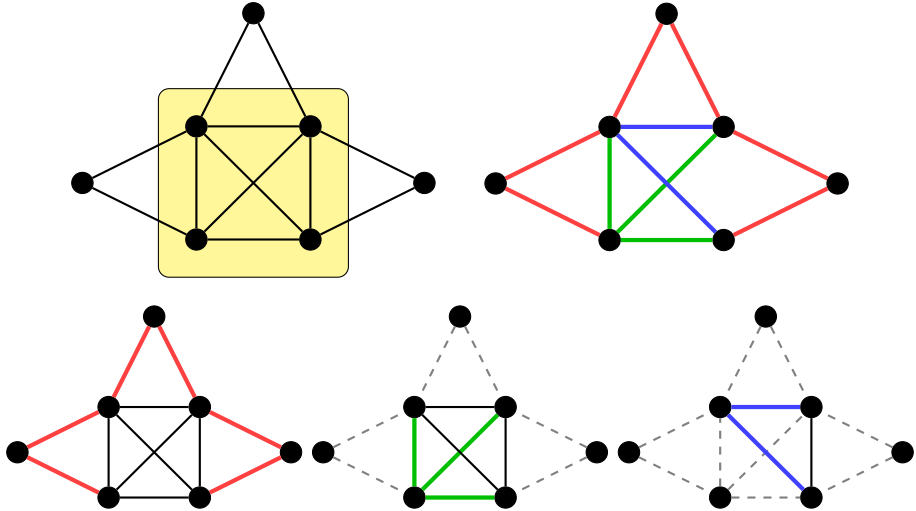

	\section*{Acknowledgement}  
        
		S.~Forster and L.~Yang would like to thank Asaf Ferber for fruitful discussions.
		S.~Forster would additionally like to thank Gramoz Goranci for giving comments on a draft related to this paper.
        This project has received funding from the European Research
        Council (ERC) under the European Union's Horizon 2020 research
        and innovation programme under grant agreement No
        715672 and 759557. Nanongkai was also partially supported by the Swedish
        Research Council (Reg. No. 2015-04659.).	
	
	 \appendix  
        
\section{An Alternative Algorithm for Local Edge Connectivity}

\label{sec:localEC_alternate}

In this section, we give another local algorithm for detecting an edge cut
of size $k$ and volume~$\nu$ containing some seed vertex in time $O(\nu k^{2})$.
Both algorithm and analysis are very simple. 
\begin{theorem}
\label{thm:localEC_approx}There is a randomized (Monte Carlo) algorithm
that takes as input a vertex $x\in V$ of an $n$-vertex $m$-edge
graph $G=(V,E)$ represented as adjacency lists, a volume parameter
$\nu$, a cut-size parameter $k\ge1$, and an accuracy parameter $\epsilon\in(0,1]$
where $\nu<\epsilon m/8$ and runs in $O(\nu k/\epsilon)$ time and
 outputs either
\begin{itemize}
\item the symbol ``$\bot$'' indicating that, with probability $1/2$,
there is no $S\ni x$ where $|E(S,V-S)|<k$ and $\vol^{\out}(S)\le\nu$,
or
\item a set $S\ni x$ where $S\neq V$, $|E(S,V-S)|<\left\lfloor (1+\epsilon)k\right\rfloor $
and $\vol^{\out}(S)\le10\nu/\epsilon$.\footnote{We note that the factor 10 in \Cref{thm:localEC_approx} can be improved.
We only use this factor for simplifying the analysis.}
\end{itemize}
\end{theorem}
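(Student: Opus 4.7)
The plan is to modify \Cref{alg:local_ec_new} by deferring the randomization until after each DFS finishes or saturates its edge budget. Assume without loss of generality that $k \leq \nu$ (if $k > \nu$, the cut condition on $L$ becomes vacuous and can be handled by a single truncated DFS). Set $\tau := \lfloor 8\nu/\epsilon\rfloor$, so $\tau < m$ by the precondition $\nu < \epsilon m/8$, and $k' := \lfloor(1+\epsilon)k\rfloor$. Repeat for $k'$ rounds: run a DFS from $x$ in the current graph (which reflects all prior path reversals) until either the DFS is stuck or it has explored $\tau$ edges; if it finished, return its reached vertex set $V(T)$; otherwise pick one of the $\tau$ explored edges uniformly at random, let $y$ be its tail, and reverse the unique tree path from $x$ to $y$. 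If no round returns, output $\bot$. With the doubly-linked-list bookkeeping from the proof of \Cref{lem:running-time-main}, each round runs in $O(\tau)$ time, for a total of $O(k' \tau) = O(k\nu/\epsilon)$.

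\textbf{Correctness of a returned set.} If $V(T)$ is returned in round $i$, let $r := i - 1 \leq k' - 1$ count the reversals performed beforehand. Because DFS terminated, $|E(V(T), V - V(T))|_{\mathrm{cur}} = 0$ and $\vol^{\out}(V(T))_{\mathrm{cur}} \leq \tau$ in the current graph, and because $\tau < m$, a terminating DFS must end with $V(T) \subsetneq V$. Applying \Cref{lem:reverse_cutsize_new} once per reversal gives $|E(V(T), V - V(T))|_{\mathrm{orig}} \leq r \leq k' - 1 < \lfloor(1+\epsilon)k\rfloor$ and $\vol^{\out}(V(T))_{\mathrm{orig}} \leq \tau + r \leq 8\nu/\epsilon + 2k \leq 10\nu/\epsilon$, where the last step uses $k \leq \nu \leq \nu/\epsilon$.

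\textbf{Success probability when a small cut exists.} Suppose there is $L \ni x$ with $|E(L, V - L)| < k$ and $\vol^{\out}(L) \leq \nu$; note $L \ne V$ since $\vol^{\out}(L) \leq \nu < m$. By \Cref{lem:reverse_cutsize_new}, $\vol^{\out}(L)$ is non-increasing under reversals, so in every round at most $\nu$ of the $\tau$ explored edges have their tail in $L$; hence, conditioned on reaching the sampling step, the sampled tail $y$ lies in $L$ with probability at most $\nu/\tau \leq \epsilon/8$. Call such a round bad and the rest good; each good round strictly decreases $|E(L, V - L)|_{\mathrm{cur}}$. If at most $\lfloor \epsilon k \rfloor$ bad rounds occur in the first $k' - 1$ sampling steps, then at least $(k' - 1) - \lfloor \epsilon k \rfloor = k - 1$ good ones do, driving $|E(L, V - L)|_{\mathrm{cur}}$ to $0$ by the start of round $k'$; the DFS in round $k'$ is then confined to $L$, explores at most $\vol^{\out}(L) \leq \nu < \tau$ edges, and returns $V(T) \subseteq L \ne V$. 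Letting $Y$ count bad rounds in the first $k' - 1$ sampling steps, $\Exp[Y] \leq (k' - 1) \cdot \epsilon / 8 \leq k \epsilon / 4$, and Markov's inequality combined with $\lfloor \epsilon k \rfloor + 1 > \epsilon k$ yields $\Pr[Y > \lfloor \epsilon k \rfloor] \leq \Exp[Y] / (\lfloor \epsilon k \rfloor + 1) < (k \epsilon / 4)/(\epsilon k) = 1/4 < 1/2$.

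\textbf{Main obstacle.} The delicate bookkeeping step will be the joint calibration of $\tau$, $k'$, and the sampling bias $\nu/\tau$: the $r$ extra units of cut- and volume-slack introduced by reversals must fit inside the target bounds $\lfloor(1+\epsilon)k\rfloor$ and $10\nu/\epsilon$ simultaneously, the tolerated $\lfloor \epsilon k \rfloor$ bad rounds must leave enough good rounds among the $k' - 1$ sampling opportunities to exhaust $|E(L, V - L)|_{\mathrm{cur}}$, and the Markov bound on bad rounds must sit at most $1/2$. The precondition $\nu < \epsilon m / 8$ plays its only role here: guaranteeing $\tau < m$, which rules out the degenerate output $V(T) = V$ for a terminating DFS.
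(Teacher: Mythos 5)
Your proposal is correct and, modulo bookkeeping, reproduces the paper's own Algorithm~\ref{alg:local} and its analysis: the same $\lfloor(1+\epsilon)k\rfloor$ rounds of DFS truncated at $\Theta(\nu/\epsilon)$ edges, a uniformly sampled explored edge, tree-path reversal controlled by \Cref{lem:reverse_cutsize_new}, and a Markov bound on the number of samples landing inside $L$. One worthwhile point in your write-up: you take $y$ to be the \emph{tail} of the sampled edge (matching Algorithm~\ref{alg:local_ec_new}), which is exactly what makes $\Pr[y\in L]\le\vol^{\out}(L)/\tau$ immediate; the appendix pseudocode writes $(y',y)$ and reverses the path to the \emph{head}~$y$, a convention under which the stated bound $\Exp[X_i]\le\vol^{\out}(S)/|E_{\DFS}|$ is not obviously justified (explored edges with head in $S$ include edges of $E(V-S,S)$, which need not be bounded by $\vol^{\out}(S)$), so the tail reading should be regarded as the intended one. (A very minor slip: with $\tau=\lfloor 8\nu/\epsilon\rfloor$ the ratio $\nu/\tau$ can slightly exceed $\epsilon/8$; taking $\lceil 8\nu/\epsilon\rceil$ instead keeps the constants exact and still gives $\tau\le m$, which is enough since a DFS that saturates $\tau=m$ never triggers the return.)
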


By setting $\epsilon<\frac{1}{k}$, we have that $\left\lfloor (1+\epsilon)k\right\rfloor =k$.
In particular, we obtain an algorithm for the \emph{exact} problem:
\begin{corollary}
\label{thm:localEC_exact}There is a randomized (Monte Carlo) algorithm
that takes as input a vertex $x\in V$ of an $n$-vertex $m$-edge
graph $G=(V,E)$ represented as adjacency lists, a volume parameter
$\nu$, and a cut-size parameter $k\ge1$ where $\nu<m/8k$ and runs
in $O(\nu k^{2})$ time and outputs either
\begin{itemize}
\item the symbol ``$\bot$'' indicating that, with probability $1/2$,
there is no $S\ni x$ where $|E(S,V-S)|<k$ and $\vol^{\out}(S)\le\nu$,
or
\item a set $S\ni x$ where $S\neq V$, $|E(S,V-S)|<k$ and $\vol^{\out}(S)\le10\nu k$.
\end{itemize}
\end{corollary}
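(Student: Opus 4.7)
The plan is to derive this corollary as a direct parameter specialization of Theorem~\ref{thm:localEC_approx}, choosing $\epsilon$ just small enough that the approximation factor $\lfloor(1+\epsilon)k\rfloor$ collapses to exactly $k$ while keeping all remaining bounds at the target order. The only content beyond parameter substitution is the elementary observation that for any positive $\epsilon<1/k$ one has $k<(1+\epsilon)k<k+1$, and hence $\lfloor(1+\epsilon)k\rfloor=k$; this is in fact exactly the remark the authors make in the paragraph preceding the corollary.

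Concretely, I would instantiate Theorem~\ref{thm:localEC_approx} with $\epsilon = 1/k$ (or, if strict inequality is required for the floor identity, with $\epsilon$ infinitesimally below $1/k$, e.g.\ $\epsilon = 1/(k+1)$ and then note that constants absorb into the $O(\cdot)$). First, the precondition $\nu < \epsilon m/8$ of the theorem matches the corollary's hypothesis $\nu < m/(8k)$ under this choice. Second, the cut-size guarantee $|E(S,V-S)| < \lfloor(1+\epsilon)k\rfloor$ becomes $|E(S,V-S)| < k$ by the floor identity above. Third, the running time $O(\nu k/\epsilon)$ becomes $O(\nu k^2)$, and the volume guarantee $\vol^{\out}(S) \leq 10\nu/\epsilon$ becomes $\vol^{\out}(S) \leq 10\nu k$, both matching the corollary.

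The success probability of $1/2$, the properties $x\in S$ and $S\neq V$, and the meaning of the output $\bot$ are all inherited verbatim from the theorem, so no additional argument is needed. There is no real obstacle: the entire corollary is a one-line specialization; the only thing to state carefully is the floor identity, which is an integer-arithmetic triviality once $\epsilon$ is taken strictly less than $1/k$. I would present the proof as a two-sentence argument that applies Theorem~\ref{thm:localEC_approx} with the chosen $\epsilon$ and records the resulting bounds.
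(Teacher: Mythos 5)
Your approach is exactly the paper's: \Cref{thm:localEC_exact} is stated immediately after \Cref{thm:localEC_approx} with the one-line remark ``By setting $\epsilon < 1/k$, we have $\lfloor(1+\epsilon)k\rfloor = k$,'' and you reproduce that derivation. One small caveat worth tightening: your ``primary'' choice $\epsilon = 1/k$ does \emph{not} yield the stated cut bound, since $\lfloor(1+1/k)k\rfloor = k+1$ gives only $|E(S,V-S)| \le k$ rather than $< k$; you must take $\epsilon$ strictly below $1/k$ (as you note in the parenthetical), which then shifts the precondition and the volume constant $10\nu/\epsilon$ slightly away from the stated $m/(8k)$ and $10\nu k$. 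The paper's own remark is equally loose about these constants (and its footnote explicitly disclaims optimizing the factor $10$), so this is not a substantive gap, but a clean write-up should commit to $\epsilon < 1/k$ from the outset and absorb the resulting constant-factor slack into the $O(\cdot)$ and the unoptimized $10$.
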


\begin{algorithm}
\SetKwFor{RepTimes}{repeat}{times}{end}

\RepTimes{$\left\lfloor (1+\epsilon)k\right\rfloor $}
{
	Grow a DFS tree $T$ starting from $x$ and stop once exactly $8\nu/\epsilon$ edges have been visited. 
	
	Let $E_{\DFS}$ be the set of edges visited.
	
	\lIf{$|E_{\DFS}|<8\nu/\epsilon$}{\Return $V(T)$.}
	\Else{
		Sample an edge $(y',y)\in E_{\DFS}$ uniformly.
		 
		Reverse the direction of edges in the path $P_{xy}$ in $T$ from $x$ to $y$.
	}
}
\Return{$\bot$.}

\caption{$\localEC(x,\nu,k,\epsilon)$\label{alg:local}}
\end{algorithm}

The algorithm for \Cref{thm:localEC_approx} in described in \Cref{alg:local}.
We start with the following important observation.
\begin{lemma}
\label{lem:reverse_cutsize}Let $S\subset V$ be any set where $x\in S$.
Let $P_{xy}$ be a path from $x$ to $y$. Suppose we reverse the
direction of edges in $P_{xy}$. Then, we have $|E(S,V-S)|$ and $\vol^{\out}(S)$
are both decreased exactly by one if $y\notin S$. Otherwise, $|E(S,V-S)|$
and $\vol^{\out}(S)$ stay the same.
\end{lemma}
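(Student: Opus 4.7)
The plan is to prove Lemma \ref{lem:reverse_cutsize} by tracking exactly how each edge of $P_{xy}$ contributes to $|E(S,V-S)|$ and $\vol^{\out}(S)$ before and after reversal. This is essentially the same statement as Lemma \ref{lem:reverse_cutsize_new} earlier in the paper, so I will follow the same strategy. I begin by classifying each edge $(u,v)$ along $P_{xy}$ into four types according to whether its endpoints lie in $S$ or in $V-S$: both-in, both-out, outgoing crossing ($u\in S$, $v\notin S$), and incoming crossing ($u\notin S$, $v\in S$). Only the latter two types affect the two quantities of interest, and the contribution is symmetric in the following sense: reversing an outgoing crossing decreases $|E(S,V-S)|$ by $1$ and decreases $\vol^{\out}(S)$ by $1$; reversing an incoming crossing increases both quantities by $1$.

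Next I will exploit the fact that $P_{xy}$ is a path starting at $x \in S$. Walking along $P_{xy}$, let $p$ denote the number of outgoing crossings and $q$ the number of incoming crossings. A standard parity argument gives:
\begin{itemize}
\item If $y \notin S$, then $P_{xy}$ ends outside $S$, so the boundary is crossed an odd number of times and by alternation $p = q + 1$.
\item If $y \in S$, then $P_{xy}$ ends inside $S$, so the boundary is crossed an even number of times and $p = q$.
\end{itemize}

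Combining the per-edge contribution with this parity count, the net effect of reversing all edges of $P_{xy}$ on each of $|E(S,V-S)|$ and $\vol^{\out}(S)$ is $-p + q$, which equals $-1$ in the first case and $0$ in the second. This yields the lemma.

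The argument is essentially just bookkeeping, so there is no real obstacle; the only subtlety is to verify carefully that edges of $P_{xy}$ lying entirely inside $S$ (or entirely outside $S$) indeed contribute nothing to either quantity after reversal, which follows because in the ``both-in'' case $\deg^{\out}$ of one vertex of $S$ decreases by $1$ while $\deg^{\out}$ of another vertex of $S$ increases by $1$, leaving $\vol^{\out}(S)$ unchanged, and since the edge never crosses the boundary, $|E(S,V-S)|$ is also unaffected; the ``both-out'' case is analogous.
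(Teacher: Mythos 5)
Your proof is correct and takes essentially the same approach as the paper's proof of the analogous Lemma~\ref{lem:reverse_cutsize_new}: both arguments track the number of outgoing versus incoming crossings of the path across the boundary $E(S,V-S)$ and use the alternation/parity fact that a path starting inside $S$ has exactly one more outgoing crossing than incoming crossing if and only if it ends outside $S$. Your write-up is slightly more explicit about the per-edge bookkeeping (the four-way edge classification and the non-contribution of edges internal to $S$ or to $V-S$), but the underlying argument is identical.
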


It is clear that running time of \Cref{alg:local} is $\left\lfloor (1+\epsilon)k\right\rfloor \times O(\nu/\epsilon)=O(\nu k/\epsilon)$ 
because the DFS tree only requires $O(\nu/\epsilon)$ for visiting $O(\nu/\epsilon)$ edges. 
The two lemmas below imply the correctness of \Cref{thm:localEC_approx}
\begin{lemma}
If a set $S$ is returned, then $S\ni x$, $S\neq V$, $|E(S,V-S)|<\left\lfloor (1+\epsilon)k\right\rfloor $
and $\vol^{\out}(S)\le10\nu/\epsilon$.
\end{lemma}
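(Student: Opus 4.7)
The plan is as follows. Suppose the algorithm returns a set $S$ in some iteration $J \le \lfloor(1+\epsilon)k\rfloor$. By inspection of \Cref{alg:local}, this only happens via the line $\Return V(T)$, so $S = V(T_J)$ where $T_J$ is the DFS tree in that iteration, which completed exploration before visiting $8\nu/\epsilon$ edges. Let $G_j$ denote the (modified) graph at the beginning of iteration $j$, so $G_1 = G$ is the input and $G_{J+1}$ is the graph after all path reversals performed so far. I will verify each of the four conclusions in turn.

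First, $S\ni x$ is immediate because the DFS is rooted at $x$. Next, $S \neq V$: since the DFS completed, every out-edge of $V(T_J)$ in $G_J$ is internal to $V(T_J)$, so $\vol^{\out}_{G_J}(S) = |E_{\DFS}| < 8\nu/\epsilon$. If we had $S = V$, then $\vol^{\out}_{G_J}(V) = m$ would force $m < 8\nu/\epsilon$, contradicting the precondition $\nu < \epsilon m/8$.

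The cut-size and volume bounds both follow from tracking the two quantities of interest through the reversals, using \Cref{lem:reverse_cutsize} applied to the \emph{fixed} set $S = V(T_J)$. For the cut size: each of the $J-1 \le \lfloor(1+\epsilon)k\rfloor - 1$ reversals decreases $|E(S,V-S)|$ by either $0$ or $1$, and $|E_{G_J}(S,V-S)| = 0$ because the DFS at step $J$ found no outgoing edges from $S$. Telescoping gives $|E_{G_1}(S,V-S)| \le J-1 < \lfloor(1+\epsilon)k\rfloor$ as required. For the volume: by the same lemma each reversal decreases $\vol^{\out}(S)$ by at most $1$, and in $G_J$ we have $\vol^{\out}_{G_J}(S) = |E_{\DFS}| < 8\nu/\epsilon$. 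Therefore
\begin{equation*}
\vol^{\out}_G(S) \;\le\; \vol^{\out}_{G_J}(S) + (J-1) \;<\; \tfrac{8\nu}{\epsilon} + \lfloor(1+\epsilon)k\rfloor.
\end{equation*}

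The only non-routine step is turning the last display into the clean bound $10\nu/\epsilon$. Under the mild regime assumption $k \le \nu$ (which is implicit in the problem setup, since a set with fewer than $k$ outgoing edges whose volume is at most $\nu$ cannot even exist when $\nu < k-1$, and the $k=1$ case is trivial), we have $\lfloor(1+\epsilon)k\rfloor \le 2k \le 2\nu \le 2\nu/\epsilon$, yielding $\vol^{\out}_G(S) < 10\nu/\epsilon$. I anticipate this small bookkeeping on the constants to be the only delicate point; everything else is a direct application of \Cref{lem:reverse_cutsize} together with the termination condition of the DFS.
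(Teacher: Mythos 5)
Your proof follows the paper's approach exactly: apply \Cref{lem:reverse_cutsize} to the fixed returned set $S=V(T)$ across the at most $\lfloor(1+\epsilon)k\rfloor - 1$ reversals, starting from the final state where the DFS is stuck (so $|E(S,V-S)|=0$ and $\vol^{\out}(S)<8\nu/\epsilon$) and unwinding backwards in time. The only place you go beyond the paper is in justifying the final step $8\nu/\epsilon + \lfloor(1+\epsilon)k\rfloor \le 10\nu/\epsilon$, which the paper asserts without comment; you are right that it tacitly requires something like $\epsilon k \le \nu$. However, the justification you offer for that assumption is incorrect: a set with $\vol^{\out}(S)\le\nu$ and $|E(S,V-S)|<k$ certainly \emph{can} exist when $\nu < k-1$ — in fact, since $|E(S,V-S)|\le\vol^{\out}(S)\le\nu$, the cut-size constraint becomes vacuous in that regime rather than unsatisfiable — so your ``cannot even exist'' argument does not establish $k\le\nu$. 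The assumption is nonetheless safe to impose here, because $k<\nu$ appears explicitly as a precondition of the parallel statements \Cref{thm:intro:LocalEC_approx} and \Cref{thm:localEC_approx_new}, and the paper's own proof of the present lemma needs it in exactly the same spot; its omission from the statement of \Cref{thm:localEC_approx} is best read as an oversight rather than a feature of the appendix variant.
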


\begin{proof}
If $S$ is returned, then the DFS tree $T$ get stuck at $S=V(T)$.
That is, $|E(S,V-S)|=0$ and $\vol^{\out}(S)\le8\nu/\epsilon$ at
the end of the algorithm.  Note that $x\in S$ and $S\neq V$ because
$8\nu/\epsilon<m$. Observe that the algorithm has reversed strictly
less than $\left\lfloor (1+\epsilon)k\right\rfloor $ many paths $P_{xy}$,
because the algorithm did not reverse a path in the iteration that
$S$ is returned. So \Cref{lem:reverse_cutsize} implies that, initially,
$|E(S,V-S)|<\left\lfloor (1+\epsilon)k\right\rfloor $ and, $\vol^{\out}(S)< 8\nu/\epsilon+\left\lfloor (1+\epsilon)k\right\rfloor \le10\nu/\epsilon$.
\end{proof}
\begin{lemma}
If $\bot$ is returned, then, with probability at least $1/2$, there
is no $S\ni x$ where $|E(S,V-S)|<k$ and $\vol^{\out}(S)\le\nu$.
\end{lemma}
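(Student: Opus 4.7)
The plan is to prove the contrapositive: assuming a set $S \ni x$ with $|E(S,V-S)| < k$ and $\vol^{\out}(S) \le \nu$ exists, I would bound the probability that the algorithm returns $\bot$ by $1/2$. Call an iteration \emph{successful} if the path-reversal step is applied to a vertex outside $S$, so that by \Cref{lem:reverse_cutsize} both $|E(S, V-S)|$ and $\vol^{\out}(S)$ strictly decrease by one, and \emph{unsuccessful} otherwise. Starting from $|E(S,V-S)| \le k-1$, after $k-1$ successful iterations the cut $|E(S,V-S)|$ would drop to zero, and then in the following iteration the DFS launched from $x\in S$ can only explore within $S$ and examines at most $\vol^{\out}(S) \le \nu < 8\nu/\epsilon$ edges (using $\epsilon \le 1$). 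Hence it would terminate early and the algorithm would return $V(T)$ rather than $\bot$. Consequently, for $\bot$ to be returned, among the $N := \lfloor (1+\epsilon)k \rfloor = k + \lfloor \epsilon k \rfloor$ iterations at most $k-1$ can be successful, so the number of unsuccessful iterations is at least $\lfloor \epsilon k \rfloor + 1 > \epsilon k$.

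Next, I would establish that the conditional probability of an iteration being unsuccessful is small. In any iteration that reaches the sampling step, exactly $|E_{\DFS}| = 8\nu/\epsilon$ edges have been visited; among these, the number whose relevant endpoint (the tail, which is the endpoint that must lie in $S$ for the $\vol^{\out}$-counting argument to go through) lies in $S$ is at most $\vol^{\out}(S) \le \nu$, because $\vol^{\out}(S)$ only ever decreases during the algorithm. A uniform sample from $E_{\DFS}$ is therefore unsuccessful with conditional probability at most $\nu/(8\nu/\epsilon) = \epsilon/8$, regardless of the history of previous reversals.

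To conclude, I would couple the algorithm with a fictitious extension that always performs all $N$ iterations (filling in arbitrary sampling choices after the real algorithm would halt). Let $U$ denote the number of unsuccessful iterations in this coupled process. The per-iteration bound gives $\Exp[U] \le N \cdot \epsilon/8 \le (1+\epsilon)k \cdot \epsilon/8 \le k\epsilon/4$, again using $\epsilon \le 1$. Since the event that the real algorithm returns $\bot$ implies $U > \epsilon k$ in the coupled process, Markov's inequality yields
\[
  \Pr[\bot] \le \Pr[U > \epsilon k] \le \frac{\Exp[U]}{\epsilon k} \le \frac{k\epsilon/4}{\epsilon k} = \frac{1}{4} \le \frac{1}{2}.
\]
The main subtleties will be (i) matching the sampling convention of the algorithm to the hypothesis of \Cref{lem:reverse_cutsize}, so that the bound $\vol^{\out}(S) \le \nu$ really does cap the count of ``bad'' edges in $E_{\DFS}$, and (ii) handling the stopping-time nature of the algorithm via the fictitious coupling so that Markov's inequality can be applied to a random variable defined over a fixed number of iterations.
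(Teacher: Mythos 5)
Your proof is correct and follows essentially the same route as the paper's: both rely on \Cref{lem:reverse_cutsize}, bound the per-iteration probability of a ``bad'' sample by $\epsilon/8$ via the monotonicity of $\vol^{\out}(S)$, and close with Markov's inequality applied to the count of bad samples. The only difference is bookkeeping: you count unsuccessful reversals over all $N=\lfloor(1+\epsilon)k\rfloor$ iterations (with an explicit coupling to dispose of the stopping-time issue), which lets you compare $\Exp[U]\le k\epsilon/4$ directly against the threshold $\epsilon k$ and avoids the paper's separate floor-arithmetic claim that $k'-\lfloor\epsilon k'/4\rfloor\ge k-1$; the paper instead counts successes over the first $N-1$ iterations and conditions informally on ``no set returned before the last iteration.'' Your flagged subtlety (i) is genuine and you resolve it correctly: as printed, \Cref{alg:local} reverses the path to $y$, the \emph{head} of the sampled edge $(y',y)$, yet the bound $\vol^{\out}(S)/|E_{\DFS}|$ only controls the fraction of visited edges whose \emph{tail} lies in $S$; the analysis (yours and the paper's alike) requires the reversal target to be the tail, matching what the main \Cref{alg:local_ec_new} does (setting $y\gets a$ for explored edge $e=(a,b)$), so this should be read as a typo in the appendix pseudocode rather than a gap in your argument.
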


\begin{proof}
Suppose that such $S$ exists. We will show that $\bot$ is returned
with probability less than $1/2$. Suppose that no set $S'$ is returned
before the last iteration. It suffices to show that at the beginning
of the last iteration, $|E(S,V-S)|=0$ with probability at least $1/2$.
If this is true, then the DFS tree $T$ in the last iteration will
not be able to visit more than $\nu$ edges and so will return the
set $V(T)$.

Let $k'=\left\lfloor (1+\epsilon)k\right\rfloor -1$ denote the number
of iterations excluding the last one. Let $X_{i}$ be the random variable
where $X_{i}=1$ if the sampled edge $(y',y)$ in the $i$-th iteration
of the algorithm satisfies $y\in S$. Otherwise, $X_{i}=0$. As
$\vol^{\out}(S)$ never increases, observe that $\Exp [X_{i}]\le\frac{\vol^{\out}(S)}{|E_{\DFS}|}\le\frac{\nu}{8\nu/\epsilon}=\epsilon/8$
for each $i\le k'$. Let $X=\sum_{i=1}^{k'}X_{i}$. We have $\Exp [X]\le\epsilon k'/8$
by linearity of expectation and $\Pr[X\le\epsilon k'/4]\ge1/2$ by
Markov's inequality. So $\Pr[X\le\left\lfloor \epsilon k'/4\right\rfloor ]\ge1/2$
as $X$ is integral.

Let $Y=k'-X$. Notice that $Y$ is the number of times before the
last iteration where the algorithm samples $y\notin S$. We claim
that $k'-\left\lfloor \epsilon k'/4\right\rfloor \ge k-1$ (see the
proof at the end). Hence, with probability at least $1/2$, $Y\ge k'-\left\lfloor \epsilon k'/4\right\rfloor \ge k-1\ge|E(S,V-S)|$.
By \Cref{lem:reverse_cutsize}, if $Y\ge|E(S,V-S)|$, then $|E(S,V-S)|=0$
at the beginning of the last iteration. This concludes the proof.
\begin{claim}
$k'-\left\lfloor \epsilon k'/4\right\rfloor \ge k-1$ for $\epsilon\in[0,1]$
\end{claim}

\begin{proof}
If $\epsilon<4/k'$, then $\left\lfloor \epsilon k'/4\right\rfloor =0$,
so $k'-\left\lfloor \epsilon k'/4\right\rfloor =\left\lfloor (1+\epsilon)k\right\rfloor -1\ge k-1$.
If $\epsilon\ge4/k'$, then\footnote{The main reason we choose the factor $8$ in the number $8\nu/\epsilon$
of visited edges by the DFS is for simplifying the following inequalities.}
\begin{align*}
k'-\left\lfloor \epsilon k'/4\right\rfloor  & \ge(1-\epsilon/4)k'\\
 & \ge(1-\epsilon/4)((1+\epsilon)k-2)\\
 & \ge(1-\epsilon/4)(1+\epsilon)k-2\\
 & \ge(1+\epsilon/2)k-2\\
 & \ge k-1.
\end{align*}
where the last inequality is because $\epsilon k/2\ge\frac{4}{k'}\cdot\frac{k}{2}\ge1$
as $k'\le\left\lfloor (1+\epsilon)k\right\rfloor \le2k$.
\end{proof}
\end{proof}

	\ifdefined\AdvCite

	\printbibliography[heading=bibintoc] 
	
	\else 
 	
	\bibliographystyle{alpha}
	\bibliography{references} 

@article{NSY-arXiv19,
  author    = {Danupon Nanongkai and
               Thatchaphol Saranurak and
               Sorrachai Yingchareonthawornchai},
  title     = {Computing and Testing Small Vertex Connectivity in Near-Linear Time
               and Queries},
  journal   = {CoRR},
  volume    = {abs/1905.05329},
  year      = {2019},
  url       = {http://arxiv.org/abs/1905.05329},
  archivePrefix = {arXiv},
  eprint    = {1905.05329}
}

@article{ForsterY-arXiv19,
  author    = {Sebastian Forster and
               Liu Yang},
  title     = {A Faster Local Algorithm for Detecting Bounded-Size Cuts with Applications
               to Higher-Connectivity Problems},
  journal   = {CoRR},
  volume    = {abs/1904.08382},
  year      = {2019},
  archivePrefix = {arXiv},
  eprint    = {1904.08382}
}

@article{KawarabayashiT19,
  author    = {Ken{-}ichi Kawarabayashi and
               Mikkel Thorup},
  title     = {Deterministic Edge Connectivity in Near-Linear Time},
  journal   = {Journal of the {ACM}},
  volume    = {66},
  number    = {1},
  pages     = {4:1--4:50},
  year      = {2019},
  doi       = {10.1145/3274663},
  archivePrefix = {arXiv},
  eprint    = {1411.5123},
  note      = {Announced at STOC '15}
}

@inproceedings{HenzingerKL15,
  author    = {Monika Henzinger and
               Sebastian Krinninger and
               Veronika Loitzenbauer},
  title     = {Finding $2$-Edge and $2$-Vertex Strongly Connected Components in Quadratic
               Time},
  booktitle = {Proc.\ of the International Colloquium on Automata, Languages, and Programming},
  pages     = {713--724},
  year      = {2015},
  doi       = {10.1007/978-3-662-47672-7\_58},
  archivePrefix = {arXiv},
  eprint    = {1412.6466}
}

@article{GabowT85,
  author    = {Harold N. Gabow and
               Robert Endre Tarjan},
  title     = {A Linear-Time Algorithm for a Special Case of Disjoint Set Union},
  journal   = {Journal of Computer and System Sciences},
  volume    = {30},
  number    = {2},
  pages     = {209--221},
  year      = {1985},
  doi       = {10.1016/0022-0000(85)90014-5},
  note      = {Announced at STOC '83}
}

@article{FrankJ99,
  author    = {Andr{\'{a}}s Frank and
               Tibor Jord{\'{a}}n},
  title     = {Directed vertex-connectivity augmentation},
  journal   = {Mathematical Programming},
  volume    = {84},
  number    = {3},
  pages     = {537--553},
  year      = {1999},
  doi       = {10.1007/s101070050038}
}

@article{ParnasR02,
  author    = {Michal Parnas and
               Dana Ron},
  title     = {Testing the diameter of graphs},
  journal   = {Random Structures \& Algorithms},
  volume    = {20},
  number    = {2},
  pages     = {165--183},
  year      = {2002},
  doi       = {10.1002/rsa.10013},
  note      = {Announced at APPROX/RANDOM '99}
}

@article{GoldreichGR98,
  author    = {Oded Goldreich and
               Shafi Goldwasser and
               Dana Ron},
  title     = {Property Testing and its Connection to Learning and Approximation},
  journal   = {Journal of the {ACM}},
  volume    = {45},
  number    = {4},
  pages     = {653--750},
  year      = {1998},
  doi       = {10.1145/285055.285060},
  note      = {Announced at FOCS '96}
}

@inproceedings{ChechikHILP17,
  author    = {Shiri Chechik and
               Thomas Dueholm Hansen and
               Giuseppe F. Italiano and
               Veronika Loitzenbauer and
               Nikos Parotsidis},
  title     = {Faster Algorithms for Computing Maximal 2-Connected Subgraphs in Sparse
               Directed Graphs},
  booktitle = {Proc.\ of the Symposium on Discrete Algorithms (SODA)},
  pages     = {1900--1918},
  year      = {2017},
  doi       = {10.1137/1.9781611974782.124},
  archivePrefix = {arXiv},
  eprint    = {1705.10709}
}

@article{OrensteinR11,
  author    = {Yaron Orenstein and
               Dana Ron},
  title     = {Testing {Eulerianity} and connectivity in directed sparse graphs},
  journal   = {Theoretical Computer Science},
  volume    = {412},
  number    = {45},
  pages     = {6390--6408},
  year      = {2011},
  doi       = {10.1016/j.tcs.2011.06.038}
}

@article{GoldreichR02,
  author    = {Oded Goldreich and
               Dana Ron},
  title     = {Property Testing in Bounded Degree Graphs},
  journal   = {Algorithmica},
  volume    = {32},
  number    = {2},
  pages     = {302--343},
  year      = {2002},
  doi       = {10.1007/s00453-001-0078-7},
  note      = {Announced at STOC '97}
}

@article{YoshidaI10,
  author    = {Yuichi Yoshida and
               Hiro Ito},
  title     = {Testing $k$-edge-connectivity of digraphs},
  journal   = {Journal of Systems Science and Complexity},
  volume    = {23},
  number    = {1},
  pages     = {91--101},
  year      = {2010},
  doi       = {10.1007/s11424-010-9280-5}
}

@article{YoshidaI12,
  author    = {Yuichi Yoshida and
               Hiro Ito},
  title     = {Property Testing on $k$-Vertex-Connectivity of Graphs},
  journal   = {Algorithmica},
  volume    = {62},
  number    = {3-4},
  pages     = {701--712},
  year      = {2012},
  doi       = {10.1007/s00453-010-9477-y},
  note      = {Announced at ICALP '08}
}

@inproceedings{NanongkaiSY19,
  author    = {Danupon Nanongkai and
               Thatchaphol Saranurak
               and Sorrachai Yingchareonthawornchai},
  title     = {Breaking Quadratic Time for Small Vertex Connectivity and an
               Approximation Scheme},
  booktitle = {Proc.\ of the Symposium on Theory of Computing (STOC)},
  pages     = {241--252},
  year      = {2019},
  doi       = {10.1145/3313276.3316394},
  archivePrefix = {arXiv},
  eprint    = {1904.04453}
}

@inproceedings{ChuzhoyK19,
  author    = {Julia Chuzhoy and
               Sanjeev Khanna},
  title     = {A new algorithm for decremental single-source shortest paths with
               applications to vertex-capacitated flow and cut problems},
  booktitle = {Proc.\ of the Symposium on Theory of Computing (STOC)},
  pages     = {389--400},
  year      = {2019},
  doi       = {10.1145/3313276.3316320},
  archivePrefix = {arXiv},
  eprint    = {1905.11512}
}

@article{BeckerDDHKKMNRW82,
  author    = {Michael Becker and
               W. Degenhardt and
               J{\"{u}}rgen Doenhardt and
               Stefan Hertel and
               Gerd Kaninke and
               W. Kerber and
               Kurt Mehlhorn and
               Stefan N{\"{a}}her and
               Hans Rohnert and
               Thomas Winter},
  title     = {A Probabilistic Algorithm for Vertex Connectivity of Graphs},
  journal   = {Information Processing Letters},
  volume    = {15},
  number    = {3},
  pages     = {135--136},
  year      = {1982},
  doi       = {10.1016/0020-0190(82)90046-1}
}

@inproceedings{CheriyanT91,
  author    = {Joseph Cheriyan and
               Ramakrishna Thurimella},
  title     = {Algorithms for Parallel $k$-Vertex Connectivity and Sparse Certificates},
  booktitle = {Proc.\ of the Symposium on Theory of Computing (STOC)},
  pages     = {391--401},
  year      = {1991},
  doi       = {10.1145/103418.103460}
}

@inproceedings{Matula87,
  author    = {David W. Matula},
  title     = {Determining Edge Connectivity in $ O(nm) $},
  booktitle = {Proc.\ of the Symposium on Foundations of Computer Science (FOCS)},
  pages     = {249--251},
  year      = {1987},
  doi       = {10.1109/SFCS.1987.19}
}

@article{Podderyugin1973algorithm,
  author    = {V.~D. Podderyugin},
  title     = {An algorithm for finding the edge connectivity of graphs},
  journal   = {Vopr. Kibern.},
  volume    = {2},
  pages     = {136},
  year      = {1973}
}

@article{HopcroftT73,
  author    = {John E. Hopcroft and
               Robert Endre Tarjan},
  title     = {Dividing a Graph into Triconnected Components},
  journal   = {{SIAM} Journal on Computing},
  volume    = {2},
  number    = {3},
  pages     = {135--158},
  year      = {1973},
  doi       = {10.1137/0202012}
}

@inproceedings{SaranurakW19,
  author    = {Thatchaphol Saranurak and
               Di Wang},
  title     = {Expander Decomposition and Pruning: Faster, Stronger, and Simpler},
  booktitle = {Proc.\ of the Symposium on Discrete Algorithms (SODA)},
  pages     = {2616--2635},
  year      = {2019},
  doi       = {10.1137/1.9781611975482.162},
  archivePrefix = {arXiv},
  eprint    = {1812.08958},
}

@inproceedings{HenzingerRW17,
  author    = {Monika Henzinger and
               Satish Rao and
               Di Wang},
  title     = {Local Flow Partitioning for Faster Edge Connectivity},
  booktitle = {Proc.\ of the Symposium on Discrete Algorithms (SODA)},
  pages     = {1919--1938},
  year      = {2017},
  doi       = {10.1137/1.9781611974782.125},
  archivePrefix = {arXiv},
  eprint    = {1704.01254}
}

@book{AhoHU74,
	author    = {Alfred V. Aho and
	John E. Hopcroft and
	Jeffrey D. Ullman},
	title     = {The Design and Analysis of Computer Algorithms},
	publisher = {Addison-Wesley},
	year      = {1974}
}

@inproceedings{Censor-HillelGK14,
  author    = {Keren Censor{-}Hillel and
               Mohsen Ghaffari and
               Fabian Kuhn},
  title     = {Distributed connectivity decomposition},
  booktitle = {Proc.\ of the Symposium on Principles of Distributed Computing (PODC)},
  pages     = {156--165},
  year      = {2014},
  doi       = {10.1145/2611462.2611491},
  archivePrefix = {arXiv},
  eprint    = {1311.5317}
}

@article{Kleitman1969methods,
  author    = {Daniel J. Kleitman},
  title     = {Methods for investigating connectivity of large graphs},
  journal   = {IEEE Transactions on Circuit Theory},
  volume    = {16},
  number    = {2},
  pages     = {232--233},
  year      = {1969},
  doi       = {10.1109/TCT.1969.1082941}
}

@article{Tarjan72,
  author    = {Robert Endre Tarjan},
  title     = {Depth-First Search and Linear Graph Algorithms},
  journal   = {{SIAM} Journal on Computing},
  volume    = {1},
  number    = {2},
  pages     = {146--160},
  year      = {1972},
  doi       = {10.1137/0201010},
  note		= {Announced at FOCS '71}
}

@article{Gabow06,
  author    = {Harold N. Gabow},
  title     = {Using expander graphs to find vertex connectivity},
  journal   = {Journal of the {ACM}},
  volume    = {53},
  number    = {5},
  pages     = {800--844},
  year      = {2006},
  doi       = {10.1145/1183907.1183912},
  note      = {Announced at FOCS '00}
}

@article{CheriyanR94,
  author    = {Joseph Cheriyan and
               John H. Reif},
  title     = {Directed $s$-$t$ Numberings, Rubber Bands, and Testing Digraph
               $k$-Vertex Connectivity},
  journal   = {Combinatorica},
  volume    = {14},
  number    = {4},
  pages     = {435--451},
  year      = {1994},
  doi       = {10.1007/BF01302965},
  note      = {Announced at SODA '92}
}

@article{LinialLW88,
  author    = {Nathan Linial and
               L{\'{a}}szl{\'{o}} Lov{\'{a}}sz and
               Avi Wigderson},
  title     = {Rubber bands, convex embeddings and graph connectivity},
  journal   = {Combinatorica},
  volume    = {8},
  number    = {1},
  pages     = {91--102},
  year      = {1988},
  doi       = {10.1007/BF02122557},
  note      = {Announced at FOCS '86}
}

@article{NagamochiI92,
  author    = {Hiroshi Nagamochi and
               Toshihide Ibaraki},
  title     = {A Linear-Time Algorithm for Finding a Sparse $k$-Connected Spanning
               Subgraph of a $k$-Connected Graph},
  journal   = {Algorithmica},
  volume    = {7},
  number    = {5{\&}6},
  pages     = {583--596},
  year      = {1992},
  doi       = {10.1007/BF01758778}
}

@article{HenzingerRG00,
  author    = {Monika Rauch Henzinger and
               Satish Rao and
               Harold N. Gabow},
  title     = {Computing Vertex Connectivity: New Bounds from Old Techniques},
  journal   = {Journal of Algorithms},
  volume    = {34},
  number    = {2},
  pages     = {222--250},
  year      = {2000},
  doi       = {10.1006/jagm.1999.1055},
  note      = {Announced at FOCS '96}
}

@article{Henzinger97,
  author    = {Monika Rauch Henzinger},
  title     = {A Static 2-Approximation Algorithm for Vertex Connectivity and Incremental
               Approximation Algorithms for Edge and Vertex Connectivity},
  journal   = {Journal of Algorithms},
  volume    = {24},
  number    = {1},
  pages     = {194--220},
  year      = {1997},
  doi       = {10.1006/jagm.1997.0855}
}

@article{EvenT75,
  author    = {Shimon Even and
               Robert Endre Tarjan},
  title     = {Network Flow and Testing Graph Connectivity},
  journal   = {{SIAM} Journal on Computing},
  volume    = {4},
  number    = {4},
  pages     = {507--518},
  year      = {1975},
  doi       = {10.1137/0204043}
}

@article{Even75,
  author    = {Shimon Even},
  title     = {An Algorithm for Determining Whether the Connectivity of a Graph is
               at Least $k$},
  journal   = {{SIAM} Journal on Computing},
  volume    = {4},
  number    = {3},
  pages     = {393--396},
  year      = {1975},
  doi       = {10.1137/0204034}
}

@article{EsfahanianH84,
  author    = {Abdol{-}Hossein Esfahanian and
               S. Louis Hakimi},
  title     = {On computing the connectivities of graphs and digraphs},
  journal   = {Networks},
  volume    = {14},
  number    = {2},
  pages     = {355--366},
  year      = {1984},
  doi       = {10.1002/net.3230140211}
}

@article{Galil80,
  author    = {Zvi Galil},
  title     = {Finding the Vertex Connectivity of Graphs},
  journal   = {{SIAM} Journal on Computing},
  volume    = {9},
  number    = {1},
  pages     = {197--199},
  year      = {1980},
  doi       = {10.1137/0209016}
}

@article{Gabow95,
  author    = {Harold N. Gabow},
  title     = {A Matroid Approach to Finding Edge Connectivity and Packing Arborescences},
  journal   = {Journal of Computer and System Sciences},
  volume    = {50},
  number    = {2},
  pages     = {259--273},
  year      = {1995},
  doi       = {10.1006/jcss.1995.1022},
  note      = {Announced at STOC '91}
}

@article{HolmLT01,
  author    = {Jacob Holm and
               Kristian de Lichtenberg and
               Mikkel Thorup},
  title     = {Poly-logarithmic deterministic fully-dynamic algorithms for connectivity,
               minimum spanning tree, $2$-edge, and biconnectivity},
  journal   = {Journal of the {ACM}},
  volume    = {48},
  number    = {4},
  pages     = {723--760},
  year      = {2001},
  doi       = {10.1145/502090.502095},
  note      = {Announced at STOC '98}
}

@inproceedings{AbrahamDKKP16,
  author    = {Ittai Abraham and
               David Durfee and
               Ioannis Koutis and
               Sebastian Krinninger and
               Richard Peng},
  title     = {On Fully Dynamic Graph Sparsifiers},
  booktitle = {Proc.\ of the Symposium on Foundations of Computer Science (FOCS)},
  pages     = {335--344},
  year      = {2016},
  doi       = {10.1109/FOCS.2016.44},
  archivePrefix = {arXiv},
  eprint    = {1604.02094}
}

@phdthesis{Thurimella89,
  author    = {Ramakrishna Thurimella}, 
  title     = {Techniques for the design of parallel graph algorithms},
  school    = {University of Texas at Austin},
  year      = {1989}
}

@inproceedings{EdenR18,
  author    = {Talya Eden and
               Will Rosenbaum},
  title     = {On Sampling Edges Almost Uniformly},
  booktitle = {Proc.\ of the Symposium on Simplicity in Algorithms (SOSA)},
  pages     = {7:1--7:9},
  year      = {2018},
  doi       = {10.4230/OASIcs.SOSA.2018.7},
  archivePrefix = {arXiv},
  eprint    = {1706.09748}
}

@article{Karger00,
  author    = {David R. Karger},
  title     = {Minimum cuts in near-linear time},
  journal   = {Journal of the {ACM}},
  volume    = {47},
  number    = {1},
  pages     = {46--76},
  year      = {2000},
  doi       = {10.1145/331605.331608},
  note      = {Announced at STOC '96}
}

@inproceedings{RubinfeldTVX11,
  author    = {Ronitt Rubinfeld and
               Gil Tamir and
               Shai Vardi and
               Ning Xie},
  title     = {Fast Local Computation Algorithms},
  booktitle = {Proc.\ of Innovations in Computer Science (ITCS)},
  pages     = {223--238},
  year      = {2011},
  archivePrefix = {arXiv},
  eprint    = {1104.1377}
}

@article{LeviM17,
  author    = {Reut Levi and
               Moti Medina},
  title     = {A (Centralized) Local Guide},
  journal   = {Bulletin of the {EATCS}},
  volume    = {122},
  year      = {2017}
}

@book{Feller68,
	author    = {William Feller},
	title     = {An Introduction to Probability Theory and Its Applications},
	publisher = {Wiley},
	year      = {1968}
}

@article{BenczurK15,
  author    = {Andr{\'{a}}s A. Bencz{\'{u}}r and
               David R. Karger},
  title     = {Randomized Approximation Schemes for Cuts and Flows in Capacitated
               Graphs},
  journal   = {{SIAM} Journal on Computing},
  volume    = {44},
  number    = {2},
  pages     = {290--319},
  year      = {2015},
  doi       = {10.1137/070705970},
  note      = {Announced at STOC '96}
}
	
	\fi

\end{document}